\newif\ifC
\Cfalse  % toggle here

\ifC
    \documentclass[a4paper,UKenglish,cleveref, autoref, thm-restate]{lipics-v2021}
\else
    \documentclass[11pt]{article}
\fi

\ifC
\else
\fi

\ifC
\author{Avi Kadria}
{Department of Computer Science, Bar-Ilan University, Ramat Gan 5290002, Israel}
{avi.kadria3@gmail.com}
{https://orcid.org/0000-0001-8449-3284}
{}

\author{Liam Roditty}
{Department of Computer Science, Bar-Ilan University, Ramat Gan 5290002, Israel}
{liam.roditty@biu.ac.il}
{https://orcid.org/0000-0002-5289-198X}
{Supported in part by BSF grants 2016365 and 2020356.}

\author{Virginia Vassilevska Williams}
{Department of Electrical Engineering and Computer Science and CSAIL, MIT, Cambridge, MA, USA}
{virgi@mit.edu}
{https://orcid.org/0000-0003-4844-2863}
{Supported in part by NSF CAREER Award 1651838, NSF Grants CCF-1909429 and CCF-2129139, BSF grants 2016365 and 2020356, a Google Research Fellowship, and a Sloan Research Fellowship.}

\authorrunning{A. Kadria, L. Roditty, and V. Vassilevska Williams}
\Copyright{Avi Kadria, Liam Roditty, and Virginia {Vassilevska Williams}}

\ccsdesc[500]{Theory of computation~Approximation algorithms analysis}
\ccsdesc[500]{Theory of computation~Shortest paths}
\ccsdesc[500]{Theory of computation~Complexity classes}
\ccsdesc[300]{Mathematics of computing~Graph algorithms}

\keywords{Fine-grained complexity, Graph algorithms, shortest cycle, girth approximations} %TODO mandatory; please add comma-separated list of keywords

\relatedversion{}\relatedversiondetails{Full Version}{https://arxiv.org/abs/2607.00938}

\EventEditors{Philip Bille, Seth Pettie, and Sabine Storandt}
\EventNoEds{3}
\EventLongTitle{34th Annual European Symposium on Algorithms (ESA 2026)}
\EventShortTitle{ESA 2026}
\EventAcronym{ESA}
\EventYear{2026}
\EventDate{August 31--September 4, 2026}
\EventLocation{L'Aquila, Italy}
\EventLogo{}
\SeriesVolume{388}
\ArticleNo{80}
\else
\date{}
\author{Avi Kadria\thanks{Department of Computer Science, Bar Ilan University, Ramat Gan 5290002, Israel. E-mail {\tt avi.kadria3@gmail.com}.} \and Liam Roditty\thanks{Department of Computer Science, Bar Ilan University, Ramat Gan 5290002, Israel. E-mail {\tt liam.roditty@biu.ac.il}. Supported in part by BSF grants 2016365 and 2020356.} \and Virginia Vassilevska Williams\thanks{Department of Electrical Engineering and Computer Science and CSAIL, MIT, Cambridge, MA, USA. E-mail {\tt virgi@mit.edu}. Supported in part by NSF CAREER Award 1651838, NSF Grants  CCF-1909429 and CCF- 2129139, BSF grants 2016365 and 2020356, a Google Research Fellowship and a Sloan Research Fellowship.}}
\fi

\ifC
\else
\usepackage[margin=1in]{geometry}
\fi
\usepackage{setspace}
\usepackage{multirow}
\usepackage{makecell}
\usepackage[bottom]{footmisc}

\usepackage{epigraph}
\usepackage{graphicx}
\usepackage{tikz}
\usetikzlibrary{arrows,automata,patterns}
\ifC\else\usepackage{enumitem}\fi
\usepackage{hyperref}
\usepackage{lipsum}
\usepackage{verbatim}

\usepackage[linesnumbered,ruled,vlined,boxed,algo2e]{algorithm2e}
\usepackage{nicefrac}
\usepackage{amsmath}
\usepackage{float}
\usepackage{mathtools}
\usepackage{amsthm}
\ifC\else
\usepackage{fullpage}
\fi
\usepackage{amsfonts}
\usepackage{clipboard}
\ifC
\else
\usepackage[nameinlink]{cleveref}
\newtheorem{theorem}{Theorem}

\newtheorem{lemma}{Lemma}

\newtheorem{claim}{Claim}[lemma]

\newtheorem{corollary}[theorem]{Corollary}

\fi
\newtheorem{cclaim}{Claim}[lemma]

\newtheorem{subclaim}{Claim}[cclaim]

\newtheorem{problem}{Problem}

\newtheorem{definition}[lemma]{Definition}

\newtheorem{hypothesis}[lemma]{Hypothesis}

\newcommand{\Null}{\mathtt{null}}

\Crefname{enumi}{(item)}{(items)}

\newcommand{\Reminder}[1]{

\vspace{0.5em}
\noindent\textbf{Reminder of~\autoref{#1}.} \textit{\Paste{#1}}
\vspace{0.5em}

}

\usepackage{xcolor}
\definecolor{DarkGreen}{RGB}{1,50,32}
\usepackage{pgfplots}
\pgfplotsset{compat=1.18}

\usepackage{silence}
\usepackage{chngcntr}
\counterwithin*{equation}{section}
\counterwithin*{equation}{subsection}

\begin{document}
\ActivateWarningFilters[pdftoc]
% General
\newcommand{\defeq}{:=}
%\newcommand{\eps}{\varepsilon}

% Comments
\newcommand{\liam}[1]{{\color{red} \textbf{Liam}: #1}} 
\newcommand{\avi}[1]{{\color{purple} \textbf{Avi}: #1}} 
\newcommand{\virgi}[1]{{\color{blue} \textbf{Virgi}: #1}} 
\newcommand{\new}[1]{{\color{blue} #1}}

\newcommand{\blue}[1]{{\color{blue}#1}}
\newcommand{\eps}{\varepsilon}
% Pseudocode
\newcommand{\ReturnCode}{\textbf{return}}
\newcommand{\codestyle}[1]{\texttt{#1}}
\newcommand{\algorithm}[1]{\expandafter\newcommand\csname #1\endcsname{\mbox{\normalfont{\codestyle{#1}}}\xspace}}
\algorithm{AllEdgeTriangleOrCycleListing}
\algorithm{BallOrCycleListing}
\algorithm{DenseTriangleOrListing}
\algorithm{DegenerateOrCycle}
\algorithm{BallOrCycle}
\newcommand{\CC}{\mathcal{C}}
\newcommand{\VV}{\mathcal{V}}

\newcommand{\Initialize}{\mbox{\codestyle{Initialize}}}
\newcommand{\Cycle}{\mbox{\codestyle{Cycle}}}
\newcommand{\ADO}{\mbox{\codestyle{ADO}}}
\newcommand{\hADO}{\mbox{\codestyle{hADO}}}
\newcommand{\Query}{\mbox{\codestyle{.Query}}}
\newcommand{\ADOQuery}{\mbox{\codestyle{ADO.Query}}}
\newcommand{\ConstructADO}{\codestyle{ConstructADO}}
\newcommand{\FastAPSP}{\codestyle{FastAPSP}}
\newcommand{\Construct}{\mbox{\codestyle{Construct}}}
\newcommand{\Spanner}{\codestyle{Spanner}}
\newcommand{\RT}{\mbox{\codestyle{RT}}}
\renewcommand{\L}{\mbox{\codestyle{L}}}
\newcommand{\CycleOdd}{\codestyle{CycleOdd}}
\newcommand{\ClusterOrCycleBounded}{\codestyle{ClusterOrCycleBounded}}
\newcommand{\ClusterOrCycle}{\codestyle{ClusterOrCycle}}
\newcommand{\SimpleCycle}{\codestyle{SimpleCycle}}
\newcommand{\Next}{\codestyle{Next}}
\newcommand{\Sample}{\codestyle{Sample}}
\newcommand{\Dijkstra}{\codestyle{Dijkstra}}
\newcommand{\Preprocess}{\codestyle{Preprocess}}
\newcommand{\HashTable}{\codestyle{HashTable}}
\newcommand{\Heap}{\codestyle{Heap}}
\newcommand{\RelaxNext}{\codestyle{RelaxNext}}

\newcommand{\PreprocessGraph}{\codestyle{Initialize}}
\newcommand{\Route}{\codestyle{Route}}
\newcommand{\TreeRoute}{\codestyle{TreeRoute}}
\newcommand{\N}{\mathbb{N}}
\newcommand{\MinCycle}{\codestyle{MinCycle}}
\newcommand{\BoundedBFS}{\codestyle{BoundedBFS}}

\newcommand{\Ball}{\codestyle{Ball}}
\newcommand{\DistanceOracle}{\codestyle{TZ-DistanceOracle}}
\newcommand{\SparseOrCycle}{\codestyle{SparseOrCycle}}
\newcommand{\Intersection}{\codestyle{Intersection}}
\newcommand{\CycleAdditive}{\codestyle{CycleAdditive}}
\newcommand{\GenerateSi}{\codestyle{ComputeS}}

\newcommand{\ExtractMin}{\mathtt{ExtractMin}}
\newcommand{\Insert}{\mathtt{Insert}}
\newcommand{\codeNull}{\codestyle{null}}
\newcommand{\codeYes}{\codestyle{Yes}}
\newcommand{\codeNo}{\codestyle{No}}
\newcommand{\codeAnd}{~ \mathrm{and} ~}
\newcommand{\codeOr}{~ \mathrm{or} ~}
\newcommand{\wt}{\ell}
\newcommand{\Cl}{CL}
\newcommand{\CL}{CL}
\newcommand{\cl}{c\ell}

\newcommand{\LE}{\;\le\;}
\newcommand{\EQ}{\;=\;}
\newcommand{\GE}{\;\ge\;}
\newcommand{\Ot}{\tilde{O}}
\newcommand{\stactri}{\stackrel\triangle}

\newcommand{\AllEdgeCycleDetection}{\textsc{AllEdgeCycleDetection}\xspace}
\newcommand{\AllNodeCycleListing}{\textsc{AllNodesCycleListing}\xspace}
\newcommand{\AllEdgeCycleListing}{\textsc{AllEdgeCycleListing}\xspace}
\newcommand{\AllEdgeCycle}{\textsc{AllEdgeCycle}\xspace}
\newcommand{\AllNodeCycle}{\textsc{AllNodeCycle}\xspace}
\newcommand{\AllEdgeTriangleListing}{\textsc{AllEdgeTriangleListing}\xspace}
\newcommand{\AllEdgeTriangleDetection}{\textsc{AllEdgeTriangleDetection}\xspace}
\newcommand{\AllEdgeTriangle}{\textsc{AllEdgeTriangle}\xspace}
\newcommand{\AllEdgeTriangleOrListing}{\textsc{AllEdgeTriangleOrListing}\xspace}
\newcommand{\ThreeSum}{$3$\textsc{Sum}\xspace}
\newcommand{\StrongerThreeSum}{\textsc{Stronger}\ThreeSum}
\newcommand{\ZeroTriangle}{\textsc{ZeroTriangle}\xspace}
\newcommand{\APSP}{\textsc{APSP}\xspace}
\newcommand{\MinWeightCycle}{\textsc{MinimumWeightCycle}\xspace}
\newcommand{\MinWeightTriangle}{\textsc{MinimumWeightTriangle}\xspace}
\newcommand{\AllEdgeUniqueCycle}{\textsc{AllEdgeUniqueCycle}\xspace}
\newcommand{\Yes}{\codestyle{Yes}}
\newcommand{\No}{\codestyle{No}}

\newcommand{\EE}{\mathbb{E}}
\newcommand{\RR}{\mathbb{R}}

\ifC
\renewcommand{\paragraph}[1]{\textbf{#1}}
\else
\fi

\DeclarePairedDelimiter{\ceil}{\lceil}{\rceil}
\DeclarePairedDelimiter{\floor}{\lfloor}{\rfloor}
\DeclarePairedDelimiter{\pair}{\langle}{\rangle}

% \emergencystretch=3em % absorb minor overfull \hbox warnings in tight paragraphs
\title{Tighter bounds for weighted and unweighted shortest cycle approximation}

\maketitle

\thispagestyle{empty}

\begin{abstract}
We study the problem of approximating the length of a shortest cycle in a given graph, known as the girth of the graph. The state-of-the-art approximation algorithms for unweighted graphs by Kadria et al. [SODA'22] and Roditty and Trabelsi [arXiv'25] achieve the following trade-off: for every integer $k\geq 2$, there is an $\tilde{O}(n^{1+2/k})$ time algorithm that achieves a $(2k/3)$-approximation for the girth in unweighted $n$-node graphs. 
The first result of this paper is to achieve the same trade-off for $m$-edge, $n$-node graphs with non-negative real edge weights: a $2k/3$-approximation algorithm running in $\tilde{O}(m+n^{1+2/k})$ time. The dependence on $m$ is unavoidable in weighted graphs.
Our result improves on the work of Kadria et al.~[SODA'23] and Ducoffe [ICALP'19 and SIDMA'21], who were only able to achieve such a trade-off for some values of $k$.
We also prove new fine-grained lower bounds for girth approximation and related problems in unweighted graphs.
\end{abstract}
\clearpage
\pagenumbering{arabic} 
\newpage

\section{Introduction}
% The length of a shortest cycle, known as the \emph{girth} of the graph,
The length of a shortest cycle in a graph, known as the \emph{girth} and denoted by $g$, is a key parameter often used to shed light on the structure of graph problems (e.g., \cite{lazebnik1997structure,osthus2001almost,hoppen2016properties}). 
The problem of computing a shortest cycle and girth in an undirected graph is a fundamental problem studied extensively for decades, both in unweighted graphs (e.g., \cite{ItaiR78, AYZ97, YusterZ97, LingasL09, DBLP:conf/soda/RodittyW12, kadria2022algorithmic}), and weighted graphs (e.g., \cite{LingasL09,DBLP:conf/focs/RodittyW11,RodittyT13,DBLP:journals/siamdm/Ducoffe21,kadria2022algorithmic,kadria2023improved}).

Computing the exact value of the girth is computationally expensive: in weighted graphs, it is known to be equivalent to the All-Pairs Shortest Paths (APSP) problem, and in unweighted graphs, it is known that any fast algorithm requires Boolean Matrix Multiplication (BMM) \cite{WilliamsW18}. Because exact computation is deemed prohibitive, fast approximation algorithms have been extensively studied.
A (multiplicative) $c$-approximation algorithm outputs a cycle of length $\hat{g}\leq c\cdot g$ whenever $g$ is the girth of the underlying graph.
The factor $c$, which is the largest ratio $\frac{\hat{g}}{g}$ that the algorithm achieves, is called the \textit{stretch} of the algorithm.
%The stretch of an algorithm is defined as $\frac{\hat{g}}{g}$.

In unweighted graphs, Itai and Rodeh \cite{ItaiR78} were the first to consider girth approximation, and presented an $O(n^2)$ time algorithm that returns $g \le \hat{g} \le 2\ceil{g/2}$. Lingas and Lundell \cite{LingasL09} presented an $\Ot(n^{1.5})$ time algorithm\footnote{$\Ot$ omits poly-logarithmic factors.} that returns $g \le \hat{g} \le 4\ceil{g/2}$. Later, Kadria, Roditty, Sidford, Vassilevska Williams, and Zwick~\cite{kadria2022algorithmic} generalized these results and obtained the following trade-off. For any \textbf{even} integer $k \geq 2$, there is an $O(n^{1+2/k})$ time algorithm that returns $g \le \hat{g} \le k \cdot \lceil g/2 \rceil $. 
Recently, Roditty and Trabelsi \cite{roditty2025new} obtained the same trade-off for odd values of $k$.
The corresponding stretch is at most $\frac{k}{2}$ when $g$ is even, and at most $\frac{k}{2}\cdot(1+\frac{1}{g})$ when $g$ is odd. 
If $g=3$, that is, there is a triangle in the graph, then $\frac{k}{2}\cdot(1+\frac{1}{g})=2k/3$, which is the stretch in the worst case.
We can summarize the state of the art for \textit{unweighted} graphs as:

% %\virgi{does this footnote need to change? e.g. doesn't their result have additive error? does it need to be a footnote? does it supersede the result for $g=3,4$ mentioned below?}   \footnote{
% %    Recently, Roditty and Trabelsi \cite{roditty2025new} obtained the same result also for odd values of $k$.  
% %}

% In particular, for the special case that $g=3$ or $g=4$, Kadria~\etal \cite{kadria2022algorithmic} developed an algorithm that, for every integer (odd or even) $k\geq 2$,
% in $O(n^{1 + \frac{2}{k}})$ time returns a cycle of length at most $2k$. 
% The corresponding stretch in the special case that $g=3$ is at most $2k/3$, for \textbf{any} integer $k\geq 2$ (and the stretch for $g= 4$ is only better). 

% Recently, Roditty and Trabelsi \cite{roditty2025new} obtained a set of new approximation results for unweighted graphs. In particular, they showed that for every integer $\ell\geq 2$ and any $\eps>0$, there is an $\tilde{O}(\ell n^{1+1/(\ell-\eps)})$ time algorithm that returns a cycle of length at most $2\ell\ceil{g/2}-2\lfloor\eps\ceil{g/2}\rfloor$. For any odd integer $k\geq 3$, we can set $\ell=(k+1)/2$ and $\eps=1/2$ to obtain running time $\tilde{O}(n^{1+2/k})$ and stretch $k/2+(k+2)/(2g)$ for odd $g$ (and better for even $g$). This stretch is $\leq 2k/3$ for every $g\geq 5$ and $k\geq 3$. Combined with the aforementioned results of \cite{kadria2022algorithmic} 

\begin{center}
For every integer $k\geq 2$, there is an $\tilde{O}(n^{1 + \frac{2}{k}})$ time $2k/3$-approximation algorithm for the girth of unweighted graphs.
\end{center}

% \avi{Rev:Is the bound of 2k/3 tight? Is there a good reason to suspect that the bound is tight, even if we cannot prove it?
% Ans: I have an intuition (but I'm not sure if it is worth writing - something about that if triangle detection in girth conjecture graphs requires $n\cdot (m/n)^2$ then this is tight).
% }

A central and extensively studied question in graph algorithms is whether the running time - approximation trade-off established for unweighted graphs can be matched in the case of weighted undirected graphs with non-negative real edge weights. Therefore, the following problem is natural.

\begin{problem}\label{P-1}
Is it possible, for every integer $k \ge 2$, to obtain an $\tilde{O}(m + n^{1 + \frac{2}{k}})$ time algorithm with $\frac{2k}{3}$-stretch in weighted undirected graphs with non-negative real edge weights?
\end{problem}

(We allow an additive $m$ in the running time above since in weighted graphs one needs to read the input even for an approximation\footnote{Take a complete graph with infinite edge weights and reduce the weight of the edges in some triangle to $1$; if an algorithm does not find any of the $1$ weight edges, it can return only girth of $3\cdot \infty$, and finding $3$ needles in a stack of $O(n^2)$ size requires $\Omega(n^2)$ time.}.)

Roditty and Tov~\cite{RodittyT13} almost solved \Cref{P-1} for $k=2$ and presented an $\Ot(\frac{1}{\eps}n^{2})$ time algorithm with $(4/3+\eps)$-stretch. Later, Ducoffe~\cite{DBLP:journals/siamdm/Ducoffe21} almost solved \Cref{P-1} for $k=3$ and presented an $\Ot(\frac{1}{\eps}(m+n^{5/3}))$ time algorithm with $(2+\eps)$-stretch.
\cite{kadria2023improved} solved \Cref{P-1} for every \textbf{even} integer $k\geq 2$, and presented an $\tilde{O}(m + n^{1 + \frac{2}{k}})$ time algorithm with $\frac{2k}{3}$-stretch.

Therefore, to solve \Cref{P-1} for every $k\geq 2$ and establish that the time/stretch trade-off in weighted graphs matches that of unweighted graphs, it remains to address \Cref{P-1} for \textbf{odd} values of $k \ge 3$.
In this paper, we completely solve \Cref{P-1} and prove the following.

\begin{theorem}\label{T-main-odd}\Copy{T-main-odd}{
     Let $G=(V, E, \ell)$ be a weighted graph, where $\ell: E\rightarrow \RR_{\ge 0}$, and let $k\ge 3$ be an \textbf{odd} integer. There exists an algorithm that runs in $\Ot(m+n^{1+\frac{2}{k}})$ time and returns an estimation $\hat{g}$ such that $g\le \hat{g} \le \frac{2k}{3}g$.
     }
\end{theorem}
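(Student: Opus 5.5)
We write $k=2\ell+1$ with $\ell\ge 1$. The plan follows the standard ``ball-or-cycle'' framework of \cite{kadria2023improved} and \cite{DBLP:journals/siamdm/Ducoffe21}, with one extra case analysis to absorb the odd term.

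\textbf{Guessing and reduction.} First we guess the girth $g$ up to a factor $(1+\eps)$ by geometric search over $O(\log n)$ candidate values $\alpha$. Next we reduce to graphs with $O(n^{1+2/k})$ edges. The reduction uses a \emph{weighted} sparse-or-cycle procedure (a Thorup--Zwick style clustering): each vertex $u$ either has a ball $B(u)$ of at most $n^{2/k}$ nearest vertices, or it certifies a short cycle. Concretely, if the ball of radius $r$ around $u$ is too large, then a cycle of weight at most $2r$ exists inside it and is found. This step costs $\Ot(m)$ plus the ball sizes.

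\textbf{Iterated growth.} In the sparse graph we run $\ell$ levels of ball growing. At level $i$ we sample $A_i$ of size about $n^{1-2i/k}$ and compute bunches and clusters. An edge $(x,y)$ closing a cycle either lies inside the bunch of some vertex, where we detect the cycle exactly, or leaves the ball. In the latter case the pivot $p_i(x)$ is within distance $d(x,A_i)\le d(x,p_{i-1})+\dots$, which yields a cycle through the pivot whose weight is bounded by roughly $2(i)\alpha$ plus edge weights. The even case of \cite{kadria2023improved} gives stretch $2k'/3$ for $k'=2\ell$ at time $n^{1+1/\ell}$. For odd $k$ we instead target time $n^{1+2/(2\ell+1)}$ by making the last level half-sized: the final sample $A_\ell$ has size $n^{1/k}$ rather than $1$, and we run a full Dijkstra from each vertex of $A_\ell$ in $\Ot(n^{1/k}\cdot n^{1+2/k})$. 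That bound is too much, so instead we run Dijkstra from $A_\ell$ only in the sparsified graph of size $n^{1+1/k}$. Following Roditty--Trabelsi \cite{roditty2025new}, we handle the ``half level'' by computing, for each vertex, balls truncated at half the next-level radius.

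\textbf{Stretch analysis and main obstacle.} Fix a shortest cycle $C$ of weight $g$. We split on whether all vertices of $C$ have their level-$i$ balls containing the vertex antipodal along $C$. When they do, $C$ is found at a low level. Otherwise, we take the first level $i$ at which some vertex $v\in C$ has $d(v,A_i)\le g/2$. The crucial weighted issue is a heavy edge: if $C$ contains an edge of weight greater than $g/3$, the unweighted argument of ``walking along the cycle'' breaks. We handle this by a separate case, treating heavy edges $(x,y)$ with $\ell(x,y)\ge g/3$ by checking $d(x,y)$ in $G-(x,y)$ via the balls, exactly as \cite{RodittyT13} did for $4/3$. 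In the light-edge case, vertices of $C$ are spaced at most $g/3$, so the pivot chain incurs at most $(2\ell+1)\cdot g/3\cdot 2/2$ extra weight, giving $\hat g\le \frac{2k}{3}g$. I expect this step to be the main obstacle: making the odd half-level bound $d(v,A_\ell)$ while keeping the cycle simple rather than a closed walk (via the lowest-common-ancestor argument on shortest-path trees), and doing so with real weights, where a triangle with one huge edge is exactly the tight case $g=3$ analogue.
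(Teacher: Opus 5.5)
There is a genuine gap. Two of your preliminary steps fail as stated, and the sketch never supplies the idea that pays for the odd term.

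\textbf{The preliminary steps.} Your sparsification rests on the claim that a ball of radius $r$ which is ``too large'' must contain a cycle of weight at most $2r$. This is false: a weighted star, or any tree, has arbitrarily large balls and no cycle. In the unweighted setting such statements need a minimum-degree preprocessing, and that does not transfer to weighted graphs, because degrees say nothing about distances. The paper never sparsifies. It pays $O(\alpha m)$ in Phase~3 to test every edge $(v,w)$ against every pivot $p_i(v)$. The rest of the work is bounded by Thorup--Zwick cluster sizes, using \texttt{ClusterOrCycle} (Spira's lazy Dijkstra restricted to cluster edges). That procedure returns either the cluster as a shortest-path tree, or a cycle of length at most $2r$ for the smallest $r$ at which the cluster intersected with $G_r$ contains a cycle. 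Separately, guessing $g$ within a factor $1+\varepsilon$ and using the guess as a radius gives at best stretch $(1+\varepsilon)\frac{2k}{3}$. That is exactly the real-weight weakness of Ducoffe's $k=3$ algorithm that the theorem removes. In the paper all radii are implicit (distances to the sample sets), so no guessing is needed.

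\textbf{The missing idea.} Nothing in your sketch produces the saving that separates $\frac{2k}{3}$ from the even bound. Write $k=2\alpha+1$. With only full levels (sampling probability $n^{-2/k}$), the cluster recursion starts from $h_0=0$: either $\hat g\le 2h_i+\frac43 g$ or $h_{i+1}\le h_i+\frac23 g$. This yields only $\hat g\le \frac{4}{3}\alpha g+\frac43 g=\frac{2(k+1)}{3}g$.

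Your half level sits at the top, which does not help. Dijkstra from the $n^{1/k}$ top-level vertices costs $n^{1/k}m$ in general. The graph of size $n^{1+1/k}$ you retreat to is never constructed, and the Roditty--Trabelsi technique you cite is for unweighted graphs.

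The paper puts the half level at the \emph{bottom}. It samples $A_1$ with probability $n^{-1/k}$, so $|B_0(u)|=O(n^{1/k})$ in expectation. An \emph{exact} \texttt{MinCycle} on the whole induced graph $G[B_0(u)]$ (at most $n^{2/k}$ edges) is then affordable for every $u$. Combined with \texttt{ClusterOrCycle} from $p_1(u)$, with clusters taken relative to $A_2$, this gives Lemma~\ref{L-bound-h2}: for $u\in C$ and the edge $(x,y)$ of $C$ farthest from $u$, either $\hat g\le 2g$ or $\min(\delta(A_2,x),\delta(A_2,y))\le g$. The head start $h_2\le g$, instead of the $\frac43 g$ that two full levels would give, turns the recursion into
$2\bigl(g+\tfrac23(\alpha-2)g\bigr)+\tfrac43 g=\tfrac23(2\alpha+1)g$.

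\textbf{The heavy-edge case.} Your light/heavy split at $g/3$ matches the paper. However, the heavy case there is not a computation of $d(x,y)$ in $G-(x,y)$. For the heaviest edge $(u,u')$ and each level $i$, one of three things happens:
\begin{itemize}
\item $h'_{i+1}\le h'_i+\frac23 g$; or
\item $C$ lies in the cluster of $p_i(u)$ within radius $h'_i+g-M(C)$, so \texttt{ClusterOrCycle} returns a cycle of length at most $2h'_i+\frac43 g$; or
\item the Phase~3 test $d(p_i(u),u)+\ell(u,u')+d(p_i(u),u')\le 2h'_i+g$ fires.
\end{itemize}
Keeping the cycle simple, which you flag as the main obstacle, is routine there: the $\pi$ pointers and an LCA computation handle it.
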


We note that our result improves the known trade-off for real weighted graphs for all odd values of $k\ge 3$. In the special case of $k=3$, Ducoffe's algorithm~\cite{DBLP:journals/siamdm/Ducoffe21} achieves a $2=2\cdot (3/3)$-approximation only for graphs with polynomially bounded integer weights. For real weights, the stretch of Ducoffe's algorithm is $(2+\eps)$ for an arbitrarily small constant $\eps>0$, whereas ours is truly $2$.

Among the tools we use to prove the theorem are: approximate distance oracles \cite{DBLP:journals/jacm/ThorupZ05}, Spira's single-source shortest paths algorithm \cite{Spira73}, ideas from Kadria et al.~\cite{kadria2023improved}, and a generalization of the ideas of \cite{DBLP:journals/siamdm/Ducoffe21}.

After matching the trade-off of weighted and unweighted graphs, we consider {\bf fine-grained lower bounds} for girth approximation and related problems in unweighted graphs. 

\paragraph{Fine-grained Lower Bound for Girth Approximation.}
Triangle detection
%, which can be thought of $4/3-\eps$ stretch for girth approximation, 
is one of the most important special cases of the shortest cycle problem. It plays a fundamental role in algorithm design and complexity theory and also has practical applications (e.g. \cite[Chapter~3]{kleinbergbook}). Any $(4/3-\eps)$-approximation algorithm for the girth (for $\eps>0$) can distinguish between girth $\geq 4$ and the existence of a triangle, and thus triangle detection can be viewed as achieving stretch $(4/3-\eps)$ for girth approximation.

The first conditional lower bounds for triangle detection come from the aforementioned equivalence results of
Vassilevska W. and Williams~\cite{WilliamsW18}. 
These imply that any $O(n^{3-\eps})$ time algorithm (for $\eps>0$) for triangle detection would also imply an $O(n^{3-\eps'})$ time algorithm (for $\eps'>0$) for BMM, and hence under the popular BMM Hypothesis, ``combinatorial'' techniques cannot achieve $O(n^{3-\eps})$ time for triangle detection and hence for $(4/3-\eps)$-girth approximation. Thus, practical fast algorithms for $(4/3-\eps)$-girth approximation are considered out of reach. 

For sparse graphs, \cite{DBLP:conf/stoc/AbboudBKZ22} showed that triangle detection in an $m$ edge graph with no $4$ cycles requires $m^{1.1194-o(1)}$ time, assuming that triangle detection in $\sqrt n$-degree graphs requires $n^{2-o(1)}$ time. This also implies a conditional $m^{1.1194-o(1)}$ time lower bound for any $5/3-\eps$ girth approximation algorithm for $\eps>0$. While the hypothesis used by \cite{DBLP:conf/stoc/AbboudBKZ22} is somewhat non-standard, a weaker version of it was shown by \cite{jin2023removing} to be implied by the slightly more popular Strong \ThreeSum Hypothesis, giving evidence that the non-standard hypothesis could be true.
(\cite{jin2023removing} also based the hardness of the classical problem of triangle detection on the Strong \ThreeSum hypothesis.\footnote{Later, Chan and Xu \cite{ChanX24} improved the triangle detection lower bound to $m^{9/7-o(1)}$, under the so-called Strong Exact Triangle Hypothesis.})

%To obtain a lower bound for the detection variant, we take inspiration from \cite{jin2023removing}, who based the hardness of triangle detection 

The Strong \ThreeSum hypothesis studied by \cite{jin2023removing}  states that \ThreeSum on $n$ numbers from the universe $[\pm O(n^2)]$ requires $n^{2-o(1)}$ time (on a word-RAM).\footnote{Where $[\pm n]= \{i \in \mathbb{Z} \mid |i| \le n\}$}
%\footnote{Later, Chan and Xu \cite{ChanX24} improved the triangle detection lower bound to $m^{9/7-o(1)}$, under the Strong Exact Triangle Hypothesis.}
We consider the following \StrongerThreeSum hypothesis: On a word-RAM with $O(\log n)$ bit words, \ThreeSum on $n$ numbers from a (Sidon) set\footnote{A set $A$ of integers is a \emph{Sidon set} if there are no distinct $a,b,c,d\in A$ such that $a+b=c+d$.} $A\subseteq [\pm n^{2+o(1)}]$ with no non-trivial solution to $a+b=c+d$ requires $n^{2-o(1)}$ time.
% \virgi{Is this really $a+b+c+d=0$? I thought we were looking at Sidon sets so it should be $a+b=c+d$. }\virgi{Rewriting as Sidon set.}

This hypothesis seems much stronger than the Strong \ThreeSum hypothesis. However, it is still plausible. In fact, \cite{jin2023removing} explicitly ask (Open problem 5) whether their techniques for fine-grained reductions for \ThreeSum in Sidon Sets and Sidon Set verification can be improved so that the range of the integers does not increase by much and the problems are still hard for $n$ integers in the range $[\pm n^{2+\delta}]$ for small $\delta$. So far, all known techniques for removing additive structure in fine-grained reductions really blow up the size of the integers so that it is unclear whether one can prove the \StrongerThreeSum hypothesis with current techniques. The hypothesis could also be false due to the highly structured nature of the input, but proving that it is false would also seem to require interesting new techniques.
We show:
\begin{theorem}\label{T-Stronger-3sum-LB}\Copy{T-Stronger-3sum-LB}{
    Under the \StrongerThreeSum Hypothesis, there is no $O(n^{1.5-\eps})$ time algorithm for $C_{\le 4}$ detection and hence for $(5/3-\eps)$-girth approximation in graphs with maximum degree $n^{1/4}$.
    }
\end{theorem}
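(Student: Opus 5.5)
Following \cite{jin2023removing} and \cite{DBLP:conf/stoc/AbboudBKZ22}, I would reduce \ThreeSum on a Sidon set $A\subseteq[\pm n^{2+o(1)}]$ to detecting a triangle in an unweighted graph with $N$ nodes, maximum degree $N^{1/4}$, and no $C_4$. Then $C_{\le 4}$ detection on that graph is just triangle detection. For the approximation claim: if the girth is $3$, a $(5/3-\eps)$-approximation must return $\hat g<5$, so $\hat g\in\{3,4\}$; if the girth is at least $5$, then $\hat g\ge 5$. So such an algorithm decides $C_{\le 4}$ detection.

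\textbf{Construction.} I would use the standard hashing-into-buckets approach, made additive-structure-aware:
\begin{enumerate}
\item Apply an almost-linear hash $h$ (e.g.\ $h(x)=\lfloor (\sigma x \bmod p)/(p/R)\rfloor$) into $R=n^{1/2}$ buckets. This gives buckets $A_i$ of size about $n/R=\sqrt n$.
\item Build a tripartite graph $L_1,L_2,L_3$. Nodes are pairs $(i,x)$, with $i$ a bucket index and $x$ a shift in a range of size $\sqrt n$.
\item Connect nodes according to membership of the corresponding differences in $A$, i.e.\ a Cayley-type graph in which an edge $(u,v)$ exists iff $v-u\in A$ in a suitable coordinate.
\end{enumerate}
In this structure a triangle corresponds to $a+b+c=0$ with $a,b,c\in A$. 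A $4$-cycle corresponds to $a-b+c-d=0$, i.e.\ $a+c=b+d$. The Sidon property rules out this relation nontrivially; trivial solutions correspond to backtracking, not genuine $4$-cycles. Concretely, I would take the vertex set to be $\mathbb{Z}_M$ for $M\approx n^{2+o(1)}$ (three copies), with $u\in L_1$ adjacent to $u+a\in L_2$, $u+a$ adjacent to $u+a+b\in L_3$, and $L_3\to L_1$ via $+c$. This is too large, with $n^2$ vertices of degree $n$, i.e.\ $n^3$ edges. I would therefore subsample: restrict $u$ to a random subset or an arithmetic progression so that only $N\approx n^{4/3}$… The parameters must give degree $N^{1/4}$ and triangle time $N^{1.5}$ equal to $n^{2}$. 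So I want $N=n^{4/3}$ and degree $n^{1/3}$, total edges $n^{5/3}$. I would achieve this by splitting $A$ into $n^{2/3}$ groups by hash and enumerating $O(n^{2/3}\cdot\text{polylog})$ subproblems. Alternatively, I would use one graph of $n^{4/3}$ nodes per subproblem with degree $n^{1/3}$, invoking the detector on each, and balance the parameters so the total time is $n^{2-\eps'}$.

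\textbf{Correctness.} Every genuine triangle must map back to a \ThreeSum solution; the almost-linearity of the hash creates $O(1)$ offset cases, which I would enumerate. Every $4$-cycle would force a nontrivial $a+c=b+d$ (up to hash offsets), contradicting the Sidon property. Hence the graph is $C_4$-free whenever the \ThreeSum instance has no solution.

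\textbf{Main obstacle.} Preserving $C_4$-freeness after hashing is the hard step. Hashing destroys exact additivity, so $a+c=b+d$ holds only up to an error term $\pm O(1)$ in the hashed domain. I would handle this by working modulo a prime $M\ge 4\max|A|$, so the mod-$M$ identity lifts to an exact integer identity. I would restrict the vertex sets, rather than hash the numbers, so that all edges remain exact differences in $\mathbb{Z}_M$. The parameter balancing (node counts versus degree $N^{1/4}$) is then routine.
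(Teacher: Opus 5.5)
Your proposal has a genuine gap at the step you treat as automatic: $C_4$-freeness. You claim that trivial solutions of $a+c=b+d$ come only from backtracking. That holds for $4$-cycles alternating between two parts, but it fails for $4$-cycles that meet all three parts. In your tripartite Cayley-type graph, take any $u\in L_1$ and distinct $\alpha,\beta\in A$. The vertices $u\in L_1$, $u+\alpha\in L_2$, $u+\alpha-\beta\in L_1$, $u-\beta\in L_3$ form a genuine $4$-cycle with edge labels $\alpha,\beta,\alpha,\beta$. It encodes only the trivial identity $\alpha+\beta=\beta+\alpha$, so the Sidon property says nothing about it, and such cycles exist whether or not $A$ has a $3$\textsc{Sum} solution. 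Exact arithmetic modulo $M$ does not remove them, and neither does thinning the parts. Random thinning to density $n^{-2/3}$ leaves about $n^{4/3}$ of them in expectation. In the Jin--Xu restriction ($U,V,W$ the multiples of $p,q,r$ in $\mathbb{Z}_{pqr}$), every pair $\alpha\neq\beta$ with $\alpha\equiv\beta\pmod p$ still yields one.

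The paper handles these cycles with an extra idea. Such a cycle $u,v,u',w$ forces $u+u'=v+w$, and since $u,u'\equiv 0\pmod p$ this gives $v\equiv -w\pmod p$. The paper therefore builds $O(\log n)$ subgraphs, keeping only those $v$ whose $i$th bit of $v\bmod p$ is $b$ and those $w$ whose $i$th bit of $-w\bmod p$ is $b'\neq b$. This forbids $v\equiv -w\pmod p$. For the solution triangle ($v\equiv x$, $w\equiv -z\pmod p$) to survive in one of these subgraphs, one needs $x\not\equiv z\pmod p$. The paper secures this by first disposing of solutions with a repeated element ($2x+y=0$) via $2$\textsc{Sum}, and then using that a random prime $p$ whp does not divide $x-z\neq 0$. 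The same is done with $q$ and $r$ for cycles with two vertices in $V$ or in $W$.

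Second, the construction is never fixed, and some of your options cannot work. With $N\approx n^{4/3}$ and an $N^{1.5-\eps}$ detector you can afford only $n^{o(1)}$ instances. So $n^{2/3}$ subproblems of size $n^{4/3}$ would cost $n^{8/3-4\eps/3}$. A prime modulus $M$ has no subgroup structure to restrict the parts to. What makes the parameters fit is the specific Jin--Xu construction. It uses three random primes $p,q,r\approx n^{2/3}$ (up to polylogarithmic factors), the composite modulus $pqr$, and parts $U,V,W$ equal to the multiples of $p,q,r$:
\begin{itemize}
\item By the Chinese Remainder Theorem, every solution $x+y+z=0$ gives a triangle via $u\equiv 0\pmod p$, $u\equiv -x\pmod q$, $u\equiv z\pmod r$.
\item Edges can be listed in $\tilde{O}(n\cdot r)=\tilde{O}(n^{5/3})$ time.
\item Random primes give the solution vertices expected degree $O(n^{1/3})$.
\item After pruning vertices of degree above $Cn^{1/3}$, each instance has $N=\tilde{\Theta}(n^{4/3})$ vertices and maximum degree $O(N^{1/4})$.
\end{itemize}
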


A brute-force algorithm can detect if a given $n$-node graph with maximum degree $n^{1/4}$ has a triangle or a $C_4$ in time $O(n^{1.5})$: for every vertex $u$, try all pairs of its neighbors $v,v'$ and check whether $(v,v')$ is an edge, thus checking if a triangle exists; if no triangle is found, the previous step actually lists all two-edge paths in the graph $v-u-v'$ and one can detect a $4$-cycle just by sorting\footnote{Sorting here can be done in linear time because the $n$ nodes can be uniquely labeled by the integers in $[n]$.} these by their end-points and checking for a collision. Thus our theorem shows that under the \StrongerThreeSum Hypothesis, the brute-force algorithm is essentially optimal.

Previously, the best known lower bound for $(5/3-\eps)$-girth approximation was by \cite{DBLP:conf/stoc/AbboudBKZ22}. It is instructive to compare the time lower bounds in terms of the number of edges. That of \cite{DBLP:conf/stoc/AbboudBKZ22} is $m^{1.1194-o(1)}$ (as mentioned earlier) and ours is $m^{6/5-o(1)}=m^{1.2-o(1)}$. Of course, they are under different hypotheses that may or may not be true.

\paragraph{Listing Cycles or All Edge Triangles.}
A relatively recent line of work concerns algorithms and conditional lower bounds for listing and enumerating cycles in graphs. Listing triangles has been studied extensively both from an algorithmic point of view (e.g. \cite{trianglelisting}) and from a fine-grained perspective \cite{williams2020monochromatic,DBLP:conf/soda/KopelowitzPP16,Patrascu10}. Listing and enumerating cycles of larger constant length is an even more recent topic of study \cite{VW25,JinWZ24,DBLP:conf/stoc/AbboudBKZ22,jin2023removing,DBLP:conf/stoc/AbboudBF23}.

Most known conditional lower bounds for cycle listing and enumeration go through the All-Edge Sparse Triangle problem: given an $m$-edge graph, determine for every edge $e$ whether $e$ appears in a triangle. In 2010, Patrascu \cite{Patrascu10} showed that under the \ThreeSum hypothesis, All-Edge Sparse Triangle requires $m^{4/3-o(1)}$ time. Since then various refinements of this problem are studied from a lower bounds perspective, most notably for solving All-Edge Sparse Triangle in graphs with a small number of cycles of length at most $k$ \cite{DBLP:conf/stoc/AbboudBKZ22,jin2023removing,DBLP:conf/stoc/AbboudBF23}.

We study the recent techniques for conditional lower bounds for listing cycles and discover that they can be used to prove hardness for an easier problem as well.

Consider the following \AllEdgeTriangleOrListing problem: for an integer $k\geq 4$, given a graph $G$ and an integer $t$,
%with $t$ cycles of length at most $k$, 
either solve the All-Edge Sparse Triangle in $G$, or list $t$ cycles of length at most $k$.

This problem is easier than both All-Edge Sparse Triangle and the problem of listing at most $t$ cycles of length at most $k$, as an algorithm for it can choose which problem to solve depending on the input graph. 

Listing $t$ cycles of length at most $k$ is an easier problem than listing $t$ cycles of length exactly $k$, the problem of recent interest as described earlier. For instance, there is a very simple $O(n^2)$ time algorithm for finding a single cycle of length at most $2k$ for any integer $k$ \footnote{Run BFS from every node up to $k$ levels and stop when the first cycle is closed. If no cycle is closed, each BFS only sees a tree and runs in $O(n)$ time.}, whereas an $O(n^2)$ time algorithm for finding a cycle of length exactly $2k$ is much more involved \cite{YusterZ97}. However, the problem of listing all cycles of length at most $k$ is at least as hard as that of listing all cycles of length $k$. So the two listing problems are tightly linked.

%When run with $t$ set to the number of cycles of length $\leq k$, \AllEdgeTriangleOrListing also solves a version of AESC approximation: for every edge in the graph that is contained in a cycle of length $\leq k$, 
%\AllEdgeTriangleOrListing provides a (at worst) $k/3$-approximation. So in a certain sense, \AllEdgeTriangleOrListing lies between $k/3$-approximating AESC and both All-Edge Sparse Triangle and Cycle Listing. 

%\avi{I'm not sure why \AllEdgeTriangleOrListing solves AESC, what if the number of $C_{\le k}$ is very large, in this case we are not guaranteed to report a cycle for every edge (but just to list $t$ of them, maybe we can say that this solves AESC in graphs with small number of $C_{\le k}$, but it actually solves the harder AE-Triangle in such graphs.}

By simple modifications of the known techniques, we show:
\begin{theorem}
Under the $3$SUM Hypothesis, there is no $O(m^{1+1/(k-1)-\eps}+t)$ time algorithm for \AllEdgeTriangleOrListing.
\end{theorem}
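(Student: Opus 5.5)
We reduce from All-Edge Sparse Triangle on graphs with few short cycles, which is hard under 3SUM, and show that an algorithm for \AllEdgeTriangleOrListing can only answer cheaply by solving that problem.

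The starting point is the hardness result of Abboud, Bringmann, Khoury and Zamir \cite{DBLP:conf/stoc/AbboudBKZ22}, refined by \cite{jin2023removing}. It says that under the \ThreeSum Hypothesis, for every $k\ge 4$, All-Edge Sparse Triangle on $m$-edge graphs requires $m^{1+1/(k-1)-o(1)}$ time. This holds even when the graph has at most $\tilde O(m^{1+1/(k-1)})$ cycles of length at most $k$, or at least for the hard instances produced by their reduction. Concretely, their construction reduces \ThreeSum to many instances of All-Edge Sparse Triangle. Each instance is built from a 3SUM instance hashed with an (almost) linear hash into buckets, and is a tripartite graph with $m$ edges. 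The parameters (number of buckets, universe size) are chosen so that every $C_{\le k}$ in the instance corresponds to a ``near-solution'' of a linear equation. The number of such near-solutions is bounded in expectation by $m^{1+1/(k-1)}$.

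I would reuse this construction verbatim and add a single observation. Suppose algorithm $\mathcal{A}$ solves \AllEdgeTriangleOrListing in time $O(m^{1+1/(k-1)-\eps}+t)$. Run $\mathcal{A}$ on each instance with $t := C\cdot T$, where $T$ is the (expected) bound on the number of $C_{\le k}$ in that instance, boosted by a constant factor via Markov's inequality. If $\mathcal{A}$ outputs the All-Edge Triangle answer, we use it directly as in the original reduction. If instead $\mathcal{A}$ lists $t$ distinct cycles of length at most $k$, then the instance contains more than $T$ short cycles. This is an atypical event of probability at most $1/C$. We then discard the instance and rerun with fresh hash randomness; the expected number of reruns is $O(1)$.

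The key step is therefore to verify that the total budget $\sum t$ over all instances matches the claimed time bound. Two further checks are needed. First, the chosen parameters must make the number of instances times $m^{1+1/(k-1)-\eps}$ truly subquadratic in the 3SUM size $n$. Second, the sum of the $T$'s must be $n^{2-\delta}$. This is exactly where the exponent $1/(k-1)$ comes from. I would set the number of buckets so that each instance has $m$ edges and an expected number of $C_{\le k}$ at most $m^{1+1/(k-1)}$, and then compute the total cost as (number of instances)$\cdot m^{1+1/(k-1)-\eps}$.

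The main obstacle is the analysis of the $C_{\le k}$ counts. The original papers bound cycles of length exactly $k$ (or listing $k$-cycles), whereas here all lengths $3,\dots,k$ must be handled simultaneously. Triangles must also be counted, since they are themselves cycles of length at most $k$. I would handle the shorter lengths $\ell<k$ by the same near-solution counting applied to each $\ell$. The expected count for each shorter length should then be dominated by the bound for length $k$, giving an extra factor of at most $k$. For triangles, their number is just the expected number of 3SUM-like near-collisions, which is already within the budget. Finally, I would check that rejecting instances with too many cycles does not bias correctness, since the randomness is resampled independently.
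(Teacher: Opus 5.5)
Your skeleton matches the paper's: take Jin--Xu's sparse All-Edge Triangle instances with few short cycles, call the algorithm with $t$ above the number of $C_{\le k}$ so it is forced to answer the triangle question, and sum the costs. The gap is in the step you yourself flag as key, making $\sum t$ truly subquadratic, which fails with the parameters you propose.

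If every instance has $m$ edges and $T\approx m^{1+1/(k-1)}$ cycles of length at most $k$, each call costs $O(m^{1+1/(k-1)-\varepsilon}+C\,m^{1+1/(k-1)})$. The second term is exactly the bound you are trying to beat. In terms of the 3SUM size $n$, the number of instances that makes (number of instances)$\cdot m^{1+1/(k-1)}\approx n^2$ is precisely the one giving $\sum t\approx Cn^2$, so no contradiction arises. Your final cost ``(number of instances)$\cdot m^{1+1/(k-1)-\varepsilon}$'' simply omits this term. For the same reason, a black box of the form ``All-Edge Sparse Triangle with $\tilde O(m^{1+1/(k-1)})$ short cycles needs $m^{1+1/(k-1)-o(1)}$ time'' cannot be used directly. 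You need the cycle count to be polynomially below the time bound.

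The missing idea is to use a family with a free parameter and step slightly off the balance point. In the paper, the $n$-vertex, max-degree-$\sqrt n$ instance (with at most $n^{k'/2}$ $k'$-cycles for every $k'\ge 3$) is color-coded and randomly split into $n^{3\sigma}$ subinstances. Each has $O(n^{1.5-2\sigma})$ edges, and there are $O(n^{k(1/2-\sigma)+3\sigma})$ short cycles in total. The total time is
\[
\tilde O\bigl(n^{3\sigma+(1.5-2\sigma)(1+\frac{1}{k-1}-\varepsilon)}+n^{k(1/2-\sigma)+3\sigma}\bigr).
\]
At $\sigma_0=\frac12-\frac{1}{2k-6}$, which is exactly your choice (per-instance cycle count $m^{1+1/(k-1)}$), the cycle term equals $n^2$. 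As $\sigma$ grows, the cycle exponent falls at rate $k-3$, while the first exponent rises only at rate $1-\frac{2}{k-1}+2\varepsilon$. So $\sigma=\sigma_0+\varepsilon'$ with $0<\varepsilon'\ll\varepsilon$ gives $n^{2-\varepsilon(\frac12+\frac1{k-3}-2\varepsilon')+\varepsilon'(1-\frac2{k-1})}+n^{2-(k-3)\varepsilon'}$, which is truly subquadratic.

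Two smaller points. First, ``discard the instance and rerun with fresh randomness'' does not work as stated. All subinstances come from one shared random partition or hash, so you cannot resample one of them while preserving its share of the triangles. A per-instance Markov bound of $1/C$ also says nothing useful about all $n^{3\sigma}$ instances being good at once. The paper instead does a doubling search on $t$ for each subinstance and applies Markov only to the total number of short cycles. Second, the multi-length counting you call the main obstacle is immediate in this setup. A $k'$-cycle survives in a fixed subinstance with probability $n^{-k'\sigma}$, so the count is dominated by $k'=k$.
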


As this is an intermediary problem that has not been studied before, it may not be clear how close this lower bound is to the true complexity of the problem. We show that, up to constants in front of $k$, the lower bound is in fact tight.

\begin{theorem}
There is an $O(m^{1+1/(k+1)}+t)$ time algorithm that given an $m$-edge graph $G$ either returns $t$ cycles of length at most $2k$ in $G$ or solves the All-Edge Sparse Triangle problem in $G$.
\end{theorem}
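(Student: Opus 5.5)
We split the vertices by degree with a threshold $\Delta = m^{1/(k+1)}$ and handle the two parts separately. Every triangle is either entirely within low-degree vertices or touches a high-degree vertex. The first part sets up one graph-search primitive for each of the two cases.

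\textbf{Low-degree vertices.} For every low-degree vertex $u$ (degree at most $\Delta$), enumerate all pairs of its neighbors $v,v'$. Check whether $(v,v')$ is an edge using a hash table. This costs $\sum_u \deg(u)^2 \le \Delta \sum_u \deg(u) = O(m\Delta)=O(m^{1+1/(k+1)})$. It marks every edge $(u,v)$ and $(v,v')$ of every triangle that contains at least one low-degree vertex, including the opposite edge $(v,v')$.

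\textbf{High-degree vertices.} The remaining triangles lie entirely among high-degree vertices. There are at most $2m/\Delta$ of them, so the induced subgraph $H$ has $n_H\le 2m/\Delta$ vertices.

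\textbf{BFS-or-list primitive.} On $H$ we run the classical procedure from the footnote. Run a BFS from each vertex $s$ of $H$ up to depth $k$. If the explored region is a tree, the BFS costs $O(\text{number of vertices reached})\le O(n_H)$. In this case no cycle of length at most $2k$ passes through the ball, in the sense that no non-tree edge closes inside depth $k$, and in particular no triangle in $H$ contains $s$. Whenever a non-tree edge $(x,y)$ is encountered, the two tree paths to $s$ plus $(x,y)$ give a closed walk of length at most $2k+1$. Within depth $k$, such an edge discovered while expanding depth $<k$ gives length at most $2k$. This walk contains a simple cycle of length at most $2k$, which we output.

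\textbf{Making the listing pay for itself.} To obtain $t$ \emph{distinct} cycles cheaply, we charge the work. When a cycle is found we output it and delete one of its edges that is not a triangle edge we still need to certify, then continue. The intended amortization is that each BFS does $O(n_H)$ tree work plus $O(1)$ extra work per reported cycle.

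\textbf{Main obstacle.} Deleting edges may destroy triangles whose edges we must still report. The fix I would try first is to treat triangles found as the short cycles themselves. When the closing non-tree edge yields a triangle through $s$ (depth $\le 1$ closing), we mark its three edges instead of deleting them. We do this by running, for each $s$, a depth-1 neighbor-pair test, charging the cost to the listed cycles. Otherwise the closed cycle is longer and deleting the closing edge is safe for triangle detection only if that edge lies in no triangle. We verify this with a bounded check, charged again against output.

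If more than $t$ cycles are found we stop and output them, paying $O(t)$. Otherwise the total time is $n_H\cdot O(n_H)+O(t)=O(m^2/\Delta^2)+O(t)$. Balancing $m^2/\Delta^2$ against $m\Delta$ gives $\Delta=m^{1/3}$, which is the case $k=2$. For general $k$ I would recurse on the degree threshold through the $k$ BFS levels: each level multiplies the ball bound by $\Delta$, so a tree-ball up to depth $k$ certifies about $\Delta^k$ vertices. Choosing $\Delta^{k+1}=m$ then yields $O(m^{1+1/(k+1)}+t)$.

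Getting this accounting exactly right, namely bounding the tree-BFS work by $m\Delta$ rather than $n_H^2$, is the step I expect to be hardest. I would handle it with the standard ``ball of size $\Delta^{k}$ or cycle'' lemma applied from each high-degree vertex.
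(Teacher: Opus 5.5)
Your low-degree step is sound; it is essentially the paper's first phase. The high-degree part, however, has a genuine gap for every $k\ge 3$.

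Running a depth-$k$ BFS from \emph{every} vertex of $H$ costs $\Theta(n_H)$ per root whenever the explored region is a tree. That gives $\Theta(n_H^2)=\Theta(m^{2k/(k+1)})$ overall, which exceeds $m^{(k+2)/(k+1)}$ as soon as $k\ge 3$. Your own balancing only recovers $k=2$.

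This cannot be repaired by charging to the output. Let $H$ be a star on $n_H\approx m/\Delta$ vertices, and pad every vertex with $\Delta$ pendant leaves in $G$. Then $G$ is a tree, no cycle is ever found, and every BFS from a vertex of $H$ scans all of $H$.

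The remark about ``recursing on the degree threshold'' is not an argument. The ``ball of size $\Delta^k$ or cycle'' lemma does not apply either. A one-shot degree split gives no lower bound on degrees \emph{inside} $H$: the star's leaves have $H$-degree $1$. Even with such a bound, invoking the lemma from all $n_H$ roots still means $n_H$ balls of size up to $n_H$.

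The paper supplies two ideas you are missing.

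\textbf{Iterative peeling.} Instead of a single split, it removes vertices of \emph{current} degree at most $D=1+\min(n^{1/k},2m^{1/(k+1)})$ one at a time. It lists their triangles via neighbor pairs, at amortized cost $O(D)$ per deleted edge. What remains has minimum degree $>D\ge N^{1/k}$ and $N=O(m/D)$ vertices.

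\textbf{Ball growing with deletion.} It never searches from every vertex. It picks any vertex $a$ and grows the BFS balls $V_{\le i}(a)$ for $i=2,\dots,k$, listing a cycle for each non-tree edge met. It stops at the first $i$ with $|V_{\le i}(a)|\le N^{i/k}$ and then \emph{deletes} $V_{\le i-2}(a)$.
\begin{itemize}
\item Minimality of $i$ together with the degree bound gives $|V_{\le i-2}(a)|\ge N^{(i-2)/k}$. So the cost is $O(N^{2/k})$ per deleted vertex, and $O(N^{1+2/k})=O(m^{1+1/(k+1)})$ in total.
\item Every triangle through a deleted vertex has all its edges incident to $V_{\le i-1}(a)$, and the BFS to radius $i$ scanned all of those edges. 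So vertices are removed only after their triangles have been seen, and no edge-deletion bookkeeping is needed.
\end{itemize}

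Your edge-deletion scheme would also fail on its own terms. The depth-$1$ neighbor-pair test costs $\deg_H(s)^2$, yet non-adjacent pairs yield no cycle to charge it to. Likewise, checking that a closing edge lies in no triangle costs $\Theta(\deg)$, not $O(1)$ per reported cycle.
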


This should be compared with the fastest known algorithm for listing $2k$ cycles in terms of $m$. It is believed that the ``right'' running time for listing $t$ $2k$-cycles should be $\tilde{O}(m^{2k/(k+1)}+t)$ \cite{VW25} as the best known running time for $2k$-cycle detection is $\tilde{O}(m^{2k/(k+1)})$ \cite{DahlgaardKS17}. This running time has been achieved for $k=2$ \cite{jin2023removing,DBLP:conf/stoc/AbboudBF23} and is a big open problem for $k\geq 3$ (though it has been almost achieved for $k=3$ \cite{VW25}). 
Our theorem above can be viewed as a step in the direction of understanding the complexity of listing cycles of length {\em at most} $k$ (as opposed to exactly $k$).

\section{Preliminaries}
Let $G = (V, E,\ell)$  be a weighted undirected graph, where $\ell: E \to (0,\infty)$ is a real \emph{length} function defined on its edges. Let $n=|V|$ and $m=|E|$. 
The graph is represented using an adjacency-list representation. We assume that the edges incident on a vertex~$u$ are sorted in a non-decreasing order of length. 
\footnote{
A priori, the algorithms of \cite{kadria2022algorithmic} work in $o(m)$ if we assume the edges are sorted. However, \cite{kadria2023improved} proved that even in this model, $\Omega(m)$ time is required to achieve approximation that is better than $4k$ stretch in $O(n^{1+2/k})$ time. Therefore to get the improved $8k/3$-stretch, even in this model our algorithm needs to have $\Omega(m)$ time.}
% (If not, this can be easily computed in $O(m\log n)$ time.)

For all $u,v\in V$, let $P(u,v)=\{u=x_1,\ldots,x_t=v\}$ be a shortest path between $u$ and $v$ in $G$, let $\pi(u,v)$ be the last edge on a shortest path from~$u$ to~$v$, and let $P_2(u,v)$ be a second-shortest path from $u$ to $v$ (i.e. $P_2(u,v)$ is a shortest path from $u$ to $v$ that differs from $P(u,v)$ by at least one edge).
Let $\ell(P(u,v))=\sum_{i=1}^{t-1}\ell(x_i,x_{i+1})$. Let the \emph{distance} $\delta(u,v)$ between $u$ and $v$ be $\ell(P(u,v))$.
% \footnote{When the graph~$G$ is clear from the context, which will almost always be the case, we write $\delta(u,v)$ and $P(u,v)$ instead of $\delta(u,v)$ and $P(u,v)$.} 
A sequence of vertices $C=(x_1,x_2, \ldots, x_t)$ is a cycle if $x_1=x_t$, and $(x_i, x_{i+1})\in E$, for every $1\le i \le t-1$. If $x_1$ is the only vertex that appears twice in $C$, then $C$ is a \emph{simple} cycle. Let $C$ be a simple cycle. We denote $\ell(C)=\sum_{i=1}^{t-1}\ell(x_i,x_{i+1})$.
The \emph{girth} $g$ of a graph $G=(V, E,\ell)$ is the length of a shortest simple cycle in~$G$. 
That is, $C$ is a shortest cycle if $\ell(C)=g$. Let $M(C)$ be an edge with the largest length in~$C$.
We remark that throughout the paper, we treat shortest paths $P(\cdot,\cdot)$ and cycles $C$ interchangeably as sets of edges and sets of vertices. For the rest of the paper, we also assume that the graph is connected and contains a simple cycle, as otherwise the problem is trivial.

$\Sample(S,p)$ is a procedure that returns a subset $S^*$ of $S$ where each element of $S$ is included in  $S^*$ independently with probability $p$. 
Given a set $S\subseteq V$, the induced subgraph $G[S]$ is defined as $(S,\{(u,v)\in E \mid u,v\in S\})$.
If $u\in V$ and $A\subseteq V$, we let $\delta(u,A)=\min_{v\in A} \delta(u,v)$ denote the \emph{distance from~$u$ to the set~$A$}. (If $A=\emptyset$, then $\delta(u,A)=+\infty$.)
Given a set $S$ we define $B_S(u)=\{v\in V \mid \delta(u,v) < \delta(u,S) \}$. 

\begin{lemma}[Nearest set computation \cite{DBLP:conf/soda/ChechikLRSTW14}]\label{L-Create_B_S}
    Let $S=\Sample(V,n^{-x})$. Computing $G[B_S(u)]$ for every $u\in V$ takes $O((m+n^{1+2x})\log{n})$ time.
    % Given a random set $S$ of size $n^{1-x}$,
\end{lemma}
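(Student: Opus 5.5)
The plan is the standard ball-growing argument of Thorup–Zwick and Chechik et al. We first sample $S=\Sample(V,n^{-x})$ and add a super-source $s$ joined to every vertex of $S$ by an edge of length $0$. One run of Dijkstra from $s$ gives, for every $u\in V$, the value $\delta(u,S)$ and a nearest sampled vertex. This costs $O(m+n\log n)$.

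Next, for each $u$ we compute the ball $B_S(u)=\{v:\delta(u,v)<\delta(u,S)\}$ using the truncated Dijkstra of Thorup and Zwick. The search from $u$ relaxes an edge $(w,v)$ out of a settled vertex $w$ only when $\delta(u,w)+\ell(w,v)<\delta(v,S)$. This rule is sound: if $v\in B_S(u)$ and $w$ precedes $v$ on a shortest $u$–$v$ path, then $\delta(u,v)<\delta(u,S)\le \delta(u,v)+\delta(v,S)$. The standard argument then shows that the vertices explored are exactly the clusters $C(v)=\{u:\delta(u,v)<\delta(u,S)\}$, inverted. Equivalently, we grow the cluster of each $v$ and obtain each ball by swapping the roles of the two endpoints.

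The key counting fact is that $\EE|B_S(u)|\le n^{x}$. Order $V$ by distance from $u$. The ball consists of the vertices preceding the first sampled one, so its size is geometric with parameter $n^{-x}$. Hence $\sum_u|B_S(u)|=O(n^{1+x})$ in expectation, and $O(n^{1+x}\log n)$ with high probability.

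The main obstacle is bounding the edges scanned, since a vertex of a ball may have huge degree. Here we use the assumption that adjacency lists are sorted by length. When a ball vertex $w$ is settled, we scan its edges in nondecreasing order and stop at the first edge with $\delta(u,w)+\ell(w,v)\ge\delta(u,S)$. That edge and every later one cannot lead into the ball, so we pay $O(1)$ per useful edge plus one extra edge per ball vertex. A useful edge $(w,v)$ with $v$ not yet in the ball can be charged to the pair $(w,v)$ with both endpoints in $B_S(u)$, or to the single terminating edge. Since $|B_S(u)|^2$ bounds the useful edges, the scan from $u$ costs $O(|B_S(u)|^2\log n)$ with a heap.

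Summing gives $\sum_u |B_S(u)|^2$. For a geometric variable $\EE|B|^2=O(n^{2x})$, so the total is $O(n^{1+2x}\log n)$ in expectation. Alternatively, with high probability every ball has size $O(n^x\log n)$, giving the bound up to polylog factors.

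Finally, the induced subgraph $G[B_S(u)]$ is extracted by marking the ball's vertices and, for each ball vertex $w$, listing the neighbours of $w$ that lie in the ball. To keep this within budget we do not scan all of $w$'s adjacency list. We either note that every edge of $G[B_S(u)]$ between two ball vertices is shorter than $2\delta(u,S)$, so scanning the sorted list up to that threshold suffices, or, more directly, we test each pair of ball vertices with a hash table of edges. Both options cost $O(|B_S(u)|^2)$, so the total remains $O((m+n^{1+2x})\log n)$, where the $m$ term comes from preprocessing and the initial Dijkstra.
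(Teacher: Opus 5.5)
The paper does not prove this lemma; it only cites Chechik et al., so your argument has to stand on its own. Its core is sound. A Dijkstra search from each $u$, truncated at radius $\delta(u,S)$ and run on sorted adjacency lists, scans at most one non-useful edge per settled vertex, so it costs $O(|B_S(u)|^2\log n)$. The ball is dominated by a geometric variable, so $\EE|B_S(u)|^2=O(n^{2x})$. Testing all pairs of ball vertices against a hash table of $E$, built once in $O(m)$ expected time, then recovers $G[B_S(u)]$ within the same budget. Together these give the stated bound in expectation over $S$. However, two steps are wrong as written and should be deleted rather than relied on.

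\textbf{The relaxation rule in your second paragraph.} Relaxing $(w,v)$ only if $\delta(u,w)+\ell(w,v)<\delta(v,S)$ grows the Thorup--Zwick cluster of $u$, not its ball. Your ``soundness'' inequality does not imply this rule holds along paths into the ball. For example, take a path $u$--$v$--$s$ with $\ell(u,v)=2$, $\ell(v,s)=1$ and $S=\{s\}$. Then $v\in B_S(u)$, but $\delta(u,u)+\ell(u,v)=2>1=\delta(v,S)$, so the search never reaches $v$.

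Growing all clusters and inverting does give the vertex sets of the balls. Done the Thorup--Zwick way, though, it costs $\sum_w \deg(w)\,|B_S(w)|$, which can be of order $mn^{x}$ and breaks the bound. What you actually need is the cutoff $\delta(u,S)$ from your fourth paragraph. Its correctness is one line: if $v\in B_S(u)$, every vertex on a shortest $u$--$v$ path is in $B_S(u)$, and every edge of that path is tight, hence below the cutoff and scanned.

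\textbf{Your first option for extracting $G[B_S(u)]$.} This option is false. Two ball vertices $w,v$ satisfy $\delta(w,v)<2\delta(u,S)$, but the edge $(w,v)$ itself can be arbitrarily long. So scanning sorted lists only up to $2\delta(u,S)$ can miss induced edges. For the same reason, the claim in your fourth paragraph that the stopping edge ``and every later one cannot lead into the ball'' is inaccurate. Those later edges may well join two ball vertices; they are only useless for computing distances. Only the hash-table option is valid, and it suffices for the stated bound.
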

% \avi{Do we want to insert the proof?}

We denote by $\MinCycle(u, G)$ an $O(m + n \log n)$ time procedure that runs a modification of Dijkstra's algorithm from $u$ in $G$ and computes both $P(u,v)$ and $P_2(u,v)$, for every $v\in V$.\footnote{This can be implemented by maintaining two priority queues during Dijkstra's algorithm-one for the shortest distances and another for the second-shortest distances.} $\MinCycle$ then computes $v=\arg\min_{v\in V}(\ell(P(u,v))+\ell(P_2(u,v)))$, removes the common prefix and suffix of $P(u,v)$ and $P_2(u,v)$, and returns the resulting simple cycle. Notice that if $u$ is on a shortest cycle, then $\ell(\MinCycle(u, G))=g$.

Following~\cite{kadria2023improved}, we define the distance from
a vertex~$u\in V$ to an edge $(v,w)\in E$ as follows: $\delta(u,(v,w))=\min\{\delta(u,v), \delta(u,w)\}+\ell(v,w)$. \footnote{Note that $\delta(u,(v,w))=\delta(u,\{v,w\})+\ell(v,w)$. Here $\{v,w\}$ is a set of two vertices.}
Let $u\in V$ and $r>0$. We define the \emph{ball graph} $G_r(u) = (V_r(u),E_r(u))$ of \emph{radius} $r$ around~$u$ as follows:
\begin{align*}
    V_r(u) \EQ \{ v\in V \mid \delta(u,v)\leq r\} 
    \text{ and }
    E_r(u) \EQ \{ e\in E \mid \delta(u,e)\leq r\} \;.
\end{align*}
% Note that $G_r(u)$ is not necessarily the same as $G[V_r(u)]$.
% For example, $G[V_r(u)]$  may include edges $(v,w)$ such that $\delta(u,(v,w))>r$, whereas such edges are excluded from $E_r(u)$.
We let $G_{<r}(u)=(V_{<r}(u),E_{<r}(u))$ denote the \emph{open} ball graph of radius~$r$ around~$u$. The definitions of $V_{<r}(u)$ and $E_{<r}(u)$ are identical to those of $V_{r}(u)$ and $E_{r}(u)$ with the weak inequalities $\delta(u,v)\leq r$ and $\delta(u,e)\leq r$ replaced by strict inequalities.

The following general and simple lemma from \cite{kadria2023improved} is used in the correctness proof of our algorithm, and therefore, we provide its proof here for completeness.
\begin{lemma}[\cite{kadria2023improved}] \label{lem:main}
Let  $G = (V, E,\ell)$ be a weighted undirected graph, $C$ a cycle in~$G$, $u\in V$, and $r>0$. If $V_r(u)\cap C\ne \emptyset$, then $C\subseteq G_{r+\frac{1}{2}(\ell(C)+M(C))}(u)$.
\end{lemma}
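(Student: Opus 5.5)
We must show that every edge $e=(v,w)$ of $C$ satisfies $\delta(u,e)\le r+\frac{1}{2}(\ell(C)+M(C))$, where $M(C)$ is read as the length of the heaviest edge of $C$. The vertex condition for $C\subseteq G_{r+\frac12(\ell(C)+M(C))}(u)$ then follows at once, since $\delta(u,x)\le\delta(u,e)$ for each endpoint $x$ of $e$. The plan is to go from $u$ to a vertex $x\in V_r(u)\cap C$, which costs at most $r$. From $x$ we then walk along $C$ toward the nearer endpoint of $e$, choosing whichever direction around the cycle is cheaper.

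Fix $x\in C$ with $\delta(u,x)\le r$, and fix an edge $e=(v,w)\in C$. Removing $e$ from $C$ leaves a path $Q$ from $v$ to $w$ of length $\ell(C)-\ell(e)$, and $x$ lies on $Q$. Hence $\ell(Q[x,v])+\ell(Q[x,w])=\ell(C)-\ell(e)$, so one of these two subpaths has length at most $\frac{1}{2}(\ell(C)-\ell(e))$. Therefore
\[
\min\{\delta(x,v),\delta(x,w)\}\le \tfrac{1}{2}(\ell(C)-\ell(e)).
\]
By the triangle inequality,
\[
\delta(u,e)=\min\{\delta(u,v),\delta(u,w)\}+\ell(e)\le r+\tfrac12(\ell(C)-\ell(e))+\ell(e)=r+\tfrac12(\ell(C)+\ell(e)).
\]
Since $\ell(e)\le \ell(M(C))$, we get $\delta(u,e)\le r+\frac{1}{2}(\ell(C)+M(C))$, so $e\in E_{r+\frac12(\ell(C)+M(C))}(u)$.

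There is no real obstacle here. The only care needed is the degenerate case where $x$ is itself an endpoint of $e$, in which one of the two subpaths is empty and the bound holds trivially. One should also note that $C$ is treated as a simple cycle, so that removing $e$ really does leave a single path.
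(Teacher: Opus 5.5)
Your proof is correct and follows essentially the same argument as the paper: fix a vertex of $C$ within distance $r$ of $u$, bound its distance to the nearer endpoint of each cycle edge $e$ by $\frac12(\ell(C)-\ell(e))$, then apply the triangle inequality and $\ell(e)\le M(C)$. Your path-splitting justification of that bound and your check that the vertices of $C$ also lie in the ball are details the paper leaves implicit.
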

\begin{proof} 
Let $v\in V_r(u)\cap C$. By definition $\delta(u,v)\le r$. 
Let $(x,y)\in C$. Assume, without loss of generality, that $\delta(v,x)\le \delta(v,y)$. As $\delta(v,x)+\ell(x,y)+\delta(v,y)\le \ell(C)$, we get that $\delta(v,x)\le \frac12(\ell(C)-\ell(x,y))$. Thus
\begin{align*}
    \delta(u,(x,y)) &\LE \delta(u,v)+\delta(v,x)+\ell(x,y) \\
    &\LE r + \frac12(\ell(C)-\ell(x,y)) + \ell(x,y) \\
    &\EQ r+\frac12(\ell(C)+\ell(x,y)) 
     \LE r+\frac12(\ell(C)+M(C)),
\end{align*}
where the last inequality follows from the fact that $(x,y)\in C$, and therefore $\ell(x,y)\le M(C)$.
Thus, $(x,y)\in E_{r+\frac{\ell(C)+M(C)}{2}}(u)$, for every $(x,y)\in C$ and therefore $C\subseteq G_{r+\frac{\ell(C)+M(C)}{2}}(u)$, as required.
\end{proof}

Let $V=A_0\supseteq A_1 \supseteq A_2 \supseteq \dots \supseteq A_k = \emptyset$ be a hierarchy of vertex sets, where $k \ge 1$. Following \cite{DBLP:journals/jacm/ThorupZ05} we let $B_i(u)=\{w\in A_i\mid \delta(u,w) < \delta(A_{i+1},u)\}$.

Following \cite{kadria2023improved} we define the \emph{cluster graph} of $u\in A_i\setminus A_{i+1}$ in~$G$ to be the graph $\Cl(u)=(\Cl_V(u),\Cl_E(u))$, where 
\begin{align*}
    \Cl_V(u) &\EQ \{ v\in V \mid \delta(u,v)<\delta(A_{i+1}, v) \} \;, \\
    \Cl_E(u) &\EQ \{ (v,w)\in E \mid \delta(u,v) + \ell(v,w) < \delta(A_{i+1}, w) \} \;.
\end{align*}
The cluster graph can be viewed as an extension of the clusters of Thorup and Zwick\cite{DBLP:journals/jacm/ThorupZ05} to a subgraph rather than a set of vertices. Notice that in the definition of $\Cl_E(u)$, we have that $v,w$ have asymmetrical roles, and it might be that $v,w$ would satisfy the condition, whereas $w,v$ would not.

For any $u\in V$ and $0\le i<k$, we let $p_i(u)=\arg \min_{v\in A_i} \delta(u,v)$, i.e., $p_i(u)$ is a vertex of~$A_i$ closest to~$u$ (ties are broken lexicographically). 
\cite{kadria2023improved} proved the following property on cluster graphs.
\begin{lemma}[\cite{kadria2023improved}]\label{L-shortest-path-cluster}
Let $u\in A_i\setminus A_{i+1}$. If $v\in \Cl_V(u)$ then $P(u,v)\subseteq \Cl(u)$.
\end{lemma}
% \begin{proof}
% Let $x$ be a vertex on a shortest path~$P$ from~$u$ to~$v$. Assume, for contradiction, that $x\notin\Cl_V(u)$. Let $w=p_{i+1}(x)\in A_{i+1}$. Then, $\delta(w,x)=\delta(x,A_{i+1})\le\delta(u,x)$. It follows that $\delta(w,v)\le \delta(w,x)+\delta(x,v)\le\delta(u,x)+\delta(x,v)=\delta(u,v)$, contradicting the claim that $v\in \Cl_V(u)$. The proof for the edges on~$P$ is similar.
% \end{proof}

Clusters have especially nice properties when the hierarchy $V=A_0\supseteq A_1 \supseteq A_2 \supseteq \dots \supseteq A_k = \emptyset$ is obtained using random sampling. Lemma~\ref{L-cluster-size} gives one such property and is proven in \cite{DBLP:journals/jacm/ThorupZ05} using a simple probabilistic argument. 

\begin{lemma}[\cite{DBLP:journals/jacm/ThorupZ05}]\label{L-cluster-size}
If $A_{i+1}$, for $i=0,1,\ldots,k-2$, is obtained by including each vertex of $A_i$ independently with probability $n^{-1/k}$, then $\EE[\sum_{u\in V}|\Cl_V(u)|]=O(kn^{1+1/k})$.
\end{lemma}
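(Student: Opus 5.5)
We prove this with the standard Thorup--Zwick argument: first rewrite the sum of cluster sizes as a sum of bunch sizes, then bound each bunch by a geometric random variable. Both $u\in\Cl_V(w)$ and $w\in B_i(v)$ are determined by the same hierarchy, so we may swap the order of summation. For $w\in A_i\setminus A_{i+1}$ we have $v\in \Cl_V(w)$ iff $\delta(w,v)<\delta(A_{i+1},v)$, which is exactly the condition $w\in B_i(v)$. Hence
\[
\sum_{w\in V}|\Cl_V(w)| \EQ \sum_{v\in V}\sum_{i=0}^{k-1}|B_i(v)\setminus A_{i+1}| \LE \sum_{v\in V}\sum_{i=0}^{k-1}|B_i(v)|.
\]
Since $A_k=\emptyset$, the top level $i=k-1$ has $B_{k-1}(v)=A_{k-1}$. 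By linearity of expectation it therefore suffices to show two bounds for every fixed $v$:
\begin{itemize}
\item $\EE[|B_i(v)|]=O(n^{1/k})$ for each $i\le k-2$;
\item $\EE[|A_{k-1}|]=O(n^{1/k})$.
\end{itemize}

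\textbf{The top level.} Each vertex of $V$ survives into $A_{k-1}$ independently with probability $n^{-(k-1)/k}$. Hence $\EE[|A_{k-1}|]=n\cdot n^{-(k-1)/k}=n^{1/k}$.

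\textbf{The lower levels, $i\le k-2$.} Condition on the set $A_i$ and order its vertices $w_1,w_2,\dots$ by nondecreasing $\delta(v,\cdot)$, breaking ties consistently. A vertex $w_j$ can lie in $B_i(v)$ only if none of $w_1,\dots,w_j$ was selected into $A_{i+1}$. (With ties, vertices equidistant to a sampled one are excluded anyway, since the inequality is strict; so this is an upper bound.)

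Each element of $A_i$ enters $A_{i+1}$ independently with probability $p=n^{-1/k}$, and the ordering does not depend on which elements are selected. Therefore
\[
\EE\bigl[|B_i(v)| \,\big|\, A_i\bigr] \LE \sum_{j\ge1}(1-p)^{j} \LE 1/p \EQ n^{1/k}.
\]
Removing the conditioning and summing over $i$ and $v$ gives the total $O(kn^{1+1/k})$.

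The only delicate step is the conditioning. We must make sure the order is fixed by $A_i$ and the graph alone, independent of the coin flips that produce $A_{i+1}$, so that the geometric bound applies; it does, because the distances are deterministic. The rest is routine.
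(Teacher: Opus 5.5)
Your proof is correct and is essentially the argument the paper relies on: the paper gives no proof of its own and cites Thorup and Zwick, whose ``simple probabilistic argument'' is exactly your duality $v\in \Cl_V(w)\iff w\in B_i(v)$ for $w\in A_i\setminus A_{i+1}$, followed by the geometric bound $\EE[|B_i(v)|]\le n^{1/k}$ for $i\le k-2$ and $\EE[|A_{k-1}|]=n^{1/k}$ at the top level. (The equality $B_{k-1}(v)=A_{k-1}$ uses the paper's connectivity assumption, but the inclusion $B_{k-1}(v)\subseteq A_{k-1}$ is all you need.)
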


The main tool from ~\cite{kadria2023improved} that we need is algorithm $\ClusterOrCycle$ (see \Cref{AP-OldToold} for the pseudocode), which either constructs a cluster or reports a short cycle.
% using the vertex hierarchy $V=A_0\supseteq A_1 \supseteq A_2 \supseteq \dots \supseteq A_k = \emptyset$.
In~\cite{kadria2023improved} the vertex hierarchy $V=A_0\supseteq A_1 \supseteq A_2 \supseteq \dots \supseteq A_k = \emptyset$, where $k \ge 1$, is initialized using the procedure $\Initialize$ (see \Cref{AP-OldToold} for the pseudocode). The value of $\delta(u,A_i)$ and $p_i(u)$ are computed as well, for every $u\in V$ and $i\in [k]$. In the $\Initialize$ algorithm, $\Preprocess$ is also called to allow efficient access to edges in  $\ClusterOrCycle$. 
The properties of $\ClusterOrCycle$ are summarized in the following lemma:

\begin{lemma}[Lemma 5 in \cite{kadria2023improved}]\label{L-ClusterOrCycle}\Copy{L-ClusterOrCycle}{Let $G=(V,E,\ell)$ be a  weighted undirected graph on which we run procedure $\Initialize$ and let $u\in V$. If $\Cl(u)$ contains a cycle, let $r>0$ be the smallest number such that $\Cl(u)\cap G_r(u)$ contains a cycle, then $\ClusterOrCycle(u)$ returns a description of a cycle of length at most $2r$. Furthermore, it returns the shortest paths from $u$ to all vertices of $\Cl(u)\cap G_{<r}(u)$. Otherwise, if $\Cl(u)$ is a tree, then $\ClusterOrCycle(u)$ returns all the shortest paths from $u$ to $\Cl(u)$. $\ClusterOrCycle(u)$ can be implemented in $O(|\Cl_V(u)|\log n)$ time.
}
\end{lemma}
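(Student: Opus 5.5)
The plan is to implement $\ClusterOrCycle(u)$, for $u\in A_i\setminus A_{i+1}$, as a Dijkstra-like search from $u$ that only ever touches $\Cl(u)$. It stops at the first moment the explored subgraph stops being a tree.

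\textbf{Preprocessing.} $\Preprocess$, called from $\Initialize$, sorts the adjacency list of every vertex $v$, separately for each level $j$, by the key $\ell(v,w)-\delta(w,A_{j+1})$. This uses that $\delta(\cdot,A_{j+1})$ is already known from $\Initialize$. The point is that the edge condition $\delta(u,v)+\ell(v,w)<\delta(A_{i+1},w)$ of $\Cl_E(u)$ is equivalent to $\ell(v,w)-\delta(w,A_{i+1})<-\delta(u,v)$. Hence, once $\delta(u,v)$ is fixed, the edges of $v$ lying in $\Cl_E(u)$ form a prefix of $v$'s level-$i$ list. Scanning that list, we stop at the first edge that fails the condition and pay $O(1)$ extra per vertex.

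\textbf{The search.} We run Dijkstra from $u$ and relax only edges in that prefix, so by construction only edges of $\Cl_E(u)$ are used. We maintain a shortest-path tree $T$. Since $\delta(u,v)+\ell(v,w)<\delta(A_{i+1},w)$, every vertex $w$ we reach satisfies $\delta(u,w)<\delta(A_{i+1},w)$, so it lies in $\Cl_V(u)$. Conversely, by \Cref{L-shortest-path-cluster}, every $v\in\Cl_V(u)$ is reached along its true shortest path, so the distances computed are correct. When scanning $(v,w)$ from $v$:
\begin{itemize}
\item if $w$ is unlabeled, we label it;
\item if $w$ is already labeled and $(v,w)$ is not its tree edge, we record a candidate cycle with key $\delta(u,(v,w))$.
\end{itemize}
Candidate cycles go into the same priority queue as vertices, keyed by $\delta(u,(v,w))$. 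This key is at most $\delta(u,v)+\ell(v,w)$, and it is known once both endpoints have tentative labels. The search halts when a candidate cycle is extracted before any remaining vertex.

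\textbf{Correctness.} Let $r$ be minimal such that $\Cl(u)\cap G_r(u)$ contains a cycle. As $r$ grows, edges of $\Cl(u)$ enter in order of $\delta(u,e)$, and a cycle first appears exactly when some edge $e=(v,w)$ with $\delta(u,e)=r$ joins two vertices already connected by tree paths.
\begin{itemize}
\item Before that moment, $\Cl(u)\cap G_{<r}(u)$ is a tree of shortest paths, and all of its vertices are extracted with correct distances. This gives the shortest-paths part of the lemma.
\item The returned cycle is $P(u,v)\cup\{e\}\cup P(u,w)$ with common prefix removed. Assuming w.l.o.g.\ $\delta(u,v)\le\delta(u,w)$, we have $\delta(u,v)+\ell(e)=\delta(u,e)=r$ and $\delta(u,w)\le\delta(u,v)+\ell(e)=r$. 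Its length is therefore at most $\delta(u,v)+\ell(e)+\delta(u,w)\le 2r$.
\item If no candidate is ever created, $\Cl(u)$ is a tree and we have computed all shortest paths to it.
\end{itemize}

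\textbf{Running time.} Each scanned edge either discovers a new cluster vertex (at most $|\Cl_V(u)|$ times), closes the first cycle (after which we stop), or is the single failing edge at the end of a prefix (once per vertex). Hence $O(|\Cl_V(u)|)$ heap operations are performed, giving $O(|\Cl_V(u)|\log n)$ time.

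\textbf{Main obstacle.} The delicate step is the interleaving of cycle candidates with vertex extractions. We must ensure that extracting the candidate of minimum $\delta(u,e)$ really corresponds to the minimal $r$. We must also ensure that when both endpoints are discovered via different tree edges, the candidate's key is correctly updated once the second endpoint's final distance is known. I would handle this by inserting or decreasing the candidate key whenever either endpoint's label improves, and arguing via the monotonicity of Dijkstra extraction.
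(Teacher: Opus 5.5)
\textbf{Gap: your running-time bound does not hold for the algorithm you describe.}

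You run eager Dijkstra: when $v$ is extracted you scan its whole qualifying prefix, and a non-tree edge only creates a deferred candidate, so the search does not stop there. Here is a counterexample.
\begin{itemize}
\item Take $u\in A_\alpha$. Then $A_{\alpha+1}=\emptyset$, so every edge satisfies the cluster condition and $\Cl(u)$ is all of $G$.
\item Let $G$ consist of edges of length $1$ from $u$ to $v_1,\dots,v_s$, together with all $\binom{s}{2}$ edges $(v_a,v_b)$ of length $100$.
\item Then $r=101$. All $v_a$ receive key $1$, and every clique edge creates a candidate with key $101$.
\item Hence all $\Theta(s^2)$ clique edges are scanned before the first candidate is extracted, although $|\Cl_V(u)|=s+1$. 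The search costs $\Omega(|\Cl_V(u)|^2)$, not $O(|\Cl_V(u)|\log n)$.
\end{itemize}

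Your accounting (``closes the first cycle, after which we stop'') would be valid only if you stopped at the first scanned non-tree edge. But then the $2r$ bound fails, because eager Dijkstra scans all edges of $v$ at time $\delta(u,v)$ regardless of their length. For example, take $\ell(u,a)=\ell(u,b)=1$, $\ell(a,b)=100$, $\ell(u,c)=\ell(u,d)=2$, $\ell(c,d)=1$. Here $r=3$, yet scanning $a$ immediately hits $(a,b)$ and returns a cycle of length $102$. Eager scanning with deferred candidates can satisfy one of the two requirements, but not both.

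\textbf{Missing idea: lazy, length-ordered relaxation as in Spira's algorithm, which is what the paper uses.}
\begin{itemize}
\item The heap holds edges keyed by $d(u,v)+\ell(v,w)$.
\item Each discovered vertex has at most one pending edge in the heap: its next qualifying edge in non-decreasing order of $\ell$.
\item The run stops at the first extracted edge whose other endpoint is already discovered.
\end{itemize}
Edges are therefore processed in order of their key, which yields the $2r$ bound. Each extracted edge either discovers a new vertex or ends the run, so only $O(|\Cl_V(u)|)$ edges are ever touched.

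This requires the primitive $\Next(u,v)$: return the next edge of $v$, in non-decreasing order of the original length, satisfying $\ell(v,w)-\delta(w,A_{i+1})<-\delta(u,v)$. Your prefix trick cannot provide this. Sorting by the shifted key makes the qualifying edges a prefix, but it destroys the order by $\ell$ inside that prefix.

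The paper resolves this with a priority-search-tree-like structure. The adjacency list is sorted by $\ell$ at the leaves, and each internal node stores the minimum shifted length in its subtree. The next qualifying edge in length order is then found in $O(\log n)$ time. (Your remark about decreasing candidate keys when labels improve also becomes unnecessary, since there are no deferred candidates in this scheme.)
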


\section{Technical overview}
\paragraph{Algorithm Overview.} Our techniques build upon the distance oracle of Thorup and Zwick\\~\cite{DBLP:journals/jacm/ThorupZ05} and the algorithm of~\cite{kadria2023improved}, with an important and interesting difference.
While both the prior works \cite{DBLP:journals/jacm/ThorupZ05} and \cite{kadria2023improved} employ uniform sampling to construct a vertex hierarchy of $k$ levels, our algorithm instead uses {\em non-uniform sampling}. This non-uniform sampling enables the creation of what can be viewed as a ``half-level'' in the hierarchy, an idea we elaborate on below, which is crucial for achieving improved approximation guarantees for odd values of $k = 2\alpha + 1$.

% Roughly speaking, our algorithm works as follows.
Let $A_0=V$, and let $A_1\subseteq A_0$ be a random subset where each vertex of $A_0$ remains in $A_1$ with probability $n^{-1/k}$. Then, for every $2 \le i\le \alpha$, $A_i\subseteq A_{i-1}$ is a random subset where each vertex of $A_{i-1}$ remains in $A_i$ with probability $n^{-2/k}$, and let $A_{\alpha+1}=\emptyset$. 

Let $C$ be a shortest cycle of length $g$.
Since $A_1$ is sampled with probability $n^{-1/k}$, the expected size of $B_0(u)$ is $O(n^{1/k})$ (whereas $|B_i(u)| = O(n^{2/k})$,  for every $i > 0$). Our algorithm combines ideas from Ducoffe~\cite{DBLP:journals/siamdm/Ducoffe21} and Kadria, Roditty, Sidford, Vassilevska Williams, and Zwick\cite{kadria2023improved} with the fact that $B_0(u)$ is relatively small. 
This allows us to efficiently incorporate the ``half-level'' $A_1$ together with $A_2$ and to either find a short cycle satisfying $\hat{g} \le 2g$, or to guarantee that the cycle $C$ has a vertex sufficiently close to $A_2$, specifically $\delta(C, A_2) \le 1.5 \cdot \tfrac{2}{3}g = g$, where $\delta(C, A_2) = \min\{\delta(c, a) \mid  \pair{c, a} \in C \times A_2\}$. 

Then, by utilizing ideas from \cite{kadria2023improved} for the rest of the levels we show that either a cycle of length $\hat{g} \le 2\delta(A_{i-1}, C)+4g/3$ is found or $\delta(A_i, C) \le \delta(A_{i-1}, C)+\frac{2g}{3}$. 
Combined with the fact that $\delta(A_2,C) \le g$ we show that $\hat{g}\leq \frac{2k}{3}g$. 

Roughly speaking, our algorithm works as follows. 
Let $\hat{g}$ be the girth approximation, and let $\hat{C}$ be the cycle of length $\hat{g}$ found.
In Phase 1, we call $\MinCycle(G[B_0(u)])$, and update $\hat{g}$ and $\hat{C}$ accordingly.
Then, in Phase 2, for every $u\in A_1$ we call $\ClusterOrCycle(u)$ and update $\hat{g}$ and $\hat{C}$ accordingly, and finally, in Phase 3, we iterate over every edge $(u,v)\in E$, and check whether a short cycle can be closed using the edge $(u,v)$ and a precomputed distance.

The correctness proof is divided into two cases: the case that $M(C)\le g/3$, and the case that $M(C)\ge g/3$, where $M(C)$ is the maximum edge weight in $C$. (See \Cref{L-Corr-small,L-corr-big}.)

Let $(u,u')\in C$ be an edge such that $\ell(u,u')=M(C)$, and let\\ $h_i'(C)=\min(\delta(u, A_{i}), \delta(u', A_{i}))$, and let $h_i(C)=\min_{w\in A_i}(\delta(w,C))$ (see \Cref{fig:h_i} for an illustration). 

First, we show that in Phase 1 of our algorithm, either a short cycle is found or $h_2(C),h_2'(C)\le g$ (See \Cref{L-bound-h2}).

In the case that $M(C)\le g/3$, we then show that in Phase 2 of the algorithm, in every level we have that either $\hat{g} \leq 2h_i(C) + \frac{4}{3}g$ or $h_i(C) \le h_{i-1}(C)+2g/3$ (See \Cref{C-bound-hi}).

Since $A_{\alpha+1}=\emptyset$ and therefore $h_{\alpha+1}(C)=\infty$, by applying the above inequality $(\alpha-2)$ times, we get that:
$$\hat{g} \le 2\left(h_2(C) + \frac{2}{3}(\alpha - 2)g\right) + \frac{4}{3}g \stackrel{h_2(C)\le g}\le 2\left(g + \frac{2}{3}(\alpha - 2)g\right) + \frac{4}{3}g 
= \frac{2}{3}(2\alpha + 1)g,$$
as required.

Similarly, in the case that $M(C) > g/3$, we show that either a short cycle is found in Phase 2 or 3, or that $h_i'(C) \le h_{i-1}'(C)+2g/3$, and using the above inequalities for $h_i'(C)$ we have that $\hat{g} \le \frac{2}{3}(2\alpha + 1)g$.

\paragraph{Conditional Lower Bounds for $C_{\le 4}$ detection.}
We consider the $C_{\le 4}$  detection problem, which asks whether a graph contains a triangle or a cycle of length $4$.
Here we modify a reduction by~\cite{jin2023removing} from \textsc{Strong}\ThreeSum to triangle detection. We show that if one instead starts from the \textsc{Stronger}\ThreeSum Hypothesis, where the input \ThreeSum instance is assumed to both be a Sidon Set and have small numbers, then one can create polylogarithmically more instances of triangle detection that now contain no $4$-cycles and whp the \ThreeSum instance has a solution if and only if one of these instances contains a triangle. Hence, any efficient $C_{\le 4}$ detection algorithm can be used to solve the original \ThreeSum instance efficiently.

\paragraph{Upper and lower bound for \AllEdgeTriangleOrListing}
We also consider the \\$(k,t)$-\AllEdgeTriangleOrListing problem, which asks to either solve the \AllEdgeTriangle or list \textit{at least} $t$ cycles of length at most $k$.
To show a reduction from \ThreeSum to \\$(k,t)$-\AllEdgeTriangleOrListing, we consider the problem of \AllEdgeTriangle in sparse graphs with a bounded number of short cycles that was proven by \cite{jin2023removing} to require $n^{2-o(1)}$ time.
We then leverage a sampling technique used by \cite{DBLP:conf/stoc/AbboudBKZ22, DBLP:conf/stoc/AbboudBF23, jin2023removing} to further reduce the number of cycles of length at most $k$ to a point where the number of cycles found is less than $t$, and therefore the \AllEdgeTriangleOrListing in fact solves the \AllEdgeTriangle in these subgraphs. Using this technique we obtain the $O(m^{1+\frac{1}{k-1}})$ lower bound for the \AllEdgeTriangleOrListing problem.

In addition, we provide an upper bound algorithm for the \AllEdgeTriangleOrListing \\problem. To achieve this, we extend the \DegenerateOrCycle and \BallOrCycle procedures \\of \cite{kadria2022algorithmic} to handle the more challenging task of \AllEdgeTriangleOrListing, rather than girth approximation. Using these extended tools, we develop an algorithm that solves the 
($2k,t$)-\AllEdgeTriangleOrListing problem in $O(m+\min(m^{1+\frac{1}{k+1}},n^{1+2/k})+tk)$ time.

\section{Weighted girth approximation algorithm}
In this section\footnote{This section has missing proofs due to the line limit, for the full version see \Cref{S-algo-full}.}, we prove the following theorem:
\Reminder{T-main-odd}

Let $k=2\alpha+1$.
Our algorithm is composed of an initialization phase and three different phases of girth approximations. Roughly speaking, Phase 1 handles the first ``half level'' using ideas from \cite{Ducoffe19}, and Phases 2,3 use ideas from \cite{kadria2023improved} for the rest of the levels.

In the initialization phase, we initialize all the data structures needed for the algorithm; this phase is similar to $\Initialize$  from \cite{kadria2023improved}. For completeness, we describe $\Initialize$ in \Cref{AP-In}, with one major difference: we create a hierarchy of vertex sets using non-uniform sampling. We have $A_0=V$, $A_{\alpha+1}=\emptyset$, $A_1 = \Sample(A_0,n^{-1/(2\alpha+1)})$, and \\$A_i = \Sample(A_{i-1},n^{-2/(2\alpha+1)})$, for every $2\le i \le \alpha$. The reason for these different probabilities is to create a first ``half level''. Using this ``half level'', the algorithm can address odd values of $k$. In this phase, the current girth estimation $\hat{g}$ is initialized to $\infty$, the current shortest cycle $\hat{C}$ is set to $\emptyset$, and both the distance hash table $d$ and the predecessor hash table $\pi$ (used to reconstruct shortest paths) are initialized as empty hash tables. In addition, the algorithm also computes $\delta(p_i(u),u)$ for every $0\le i \le \alpha$, and calls $\Preprocess(G)$, as required by \cite{kadria2023improved}, to efficiently implement $\ClusterOrCycle$.

In Phase 1, for every vertex $u \in V$, the algorithm computes the subgraph $G[B_0(u)]$ using \Cref{L-Create_B_S} and calls $\MinCycle(u,G[B_0(u)])$. If the length of a shortest cycle that goes through $u$ in $G[B_0(u)]$ is smaller than  $\hat{g}$, then the algorithm updates $\hat{C}$ and $\hat{g}$.

In Phase 2, for every vertex $u \in A_1$, the algorithm computes $\ClusterOrCycle(u)$.
If a cycle shorter than the current estimate $\hat{g}$  is detected by  $\ClusterOrCycle(u)$ in $\Cl(u)$, then the algorithm updates $\hat{C}$  and $\hat{g}$.

Finally, in phase 3, for every edge $(v, w) \in E$ and every $0 \le i \le \alpha$, the algorithm checks whether the edge $(v, w)$ closes a cycle in the shortest paths tree rooted at $u = p_i(v)$. If $(v,w)$ indeed closes a cycle and the length of this cycle is smaller than $\hat{g}$, then the algorithm updates $\hat{C}$ and $\hat{g}$.

The algorithm checks whether $(v, w)$ actually closes a cycle rather than merely retracing a path in the shortest paths tree rooted at $u$  by  
verifying that $\pi(u, v) \ne (w, v)$ and $\pi(u, w) \ne (v, w)$. 
If $\pi(u, v) \ne (w, v)$ and $\pi(u, w) \ne (v, w)$, a cycle is indeed formed, and the length of the cycle is bounded from above by $\hat{g}'=d(u,v)+\ell(v,w)+\delta(u,w)$. If $\hat{g}'<\hat{g}$, the algorithm updates $\hat{g}$ accordingly. In this case, we succinctly represent the discovered cycle $\hat{C}$ with the triplet $(u,v,w)$. 

At the end of the algorithm, to obtain a cycle from the triplet $(u,v,w)$, the algorithm computes $u'$, the Lowest Common Ancestor (LCA) of~$v$ and~$w$ in the shortest paths tree rooted at~$u$, and returns the simple cycle $P(u',v) \cup P(u',w) \cup (v,w)$.
The algorithm returns $\hat{g}$ as the estimated girth, along with $\hat{C}$ as the corresponding cycle.

% \begin{algorithm2e}[t]
% \caption{$\Cycle(G,\alpha)$}\label{A-Cycle}
% \tcc{Initialization phase}
% $\hat{g}\gets \infty$, $\hat{C}\gets \emptyset$, $d\gets\HashTable()$, $\pi\gets\HashTable()$\\
% $A_0\gets V$ ; $A_{\alpha+1}=\emptyset$ ; $A_1 \gets \Sample(A_0,n^{-1/(2\alpha+1)})$ \\
% \lFor{$i \in [2,\alpha]$}{
%     $A_i\gets \Sample(A_{i-1},n^{-2/(2\alpha+1)})$
% }
% $\Preprocess(G)$\\
% Compute $p_i(u)$ and $d(u,p_i(u))$ for every $\pair{u,i}\in \pair{V,[\alpha]}$\\

% \tcc{Phase 1:}
% \For{$u\in V$}{
%     Compute $G[B_0(u)]$ using~\Cref{L-Create_B_S}\\
%     $\pair{\hat{g}',\hat{C}'}\gets \MinCycle(u,G[B_0(u)])$\\
%     \If{$\hat{g}'\le \hat{g}$}{
%         $\hat{g}\gets \hat{g}'$ ; $\hat{C}\gets \hat{C}'$
%     }
% }

% \tcc{Phase 2:}
% \For{$u\in A_1$}{
%     $\pair{\hat{g}',\hat{C}'}\gets \ClusterOrCycle(u)$\\
%     \If{$\hat{g}'\le \hat{g}$}{
%         $\hat{g}\gets \hat{g}'$ ; $\hat{C}\gets \hat{C}'$
%     }
% }

% \tcc{Phase 3:}
% \For {$(v,w)\in E$}{ \label{L-Alg-Loop2}
%     \For{$i\gets 0$ {\bf to} $\alpha$}
%         {
%         $u\gets p_i(v)$ ;
%         $\hat{g}' \gets d(u,v) + \ell(v,w) + d(u,w)$ \;
%         \If {$\hat{g}' < \hat{g}$ {\bf and} $\pi(u,v) \neq (w,v)$ {\bf and} $\pi(u,w) \neq (v,w)$ }{ 
%         \label{L-Cycle-If-pi}
%             $\hat{g} \gets \hat{g}'$ ;
%             $\hat{C}\gets (u,v,w)$ \;
%         }
%     }
% }
% \lIf {$\hat{C}$ is a triplet $(u,v,w)$} {$u'\gets LCA(u,v,w)$; $\hat{C}=P(u',v) \cup P(u',w) \cup (v,w)$}
% \Return $\pair{\hat{g},\hat{C}}$
% \end{algorithm2e}

Next, we bound the running time of the algorithm and show that $\Cycle(G,k)$ takes $\Ot(m + n^{1 + 2/k}) = \Ot(m + n^{1 + 2/(2\alpha + 1)})$ time, since $k = 2\alpha + 1$.
\begin{lemma}[\Cref{L-alg-runtime} in \Cref{S-algo-full}]
    $\Cycle(G,k)$ takes $O((m+kn^{1+2/(2\alpha+1)})\log{n} + \alpha m)$ time.
\end{lemma}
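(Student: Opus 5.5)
We bound the cost of each phase separately and sum the bounds; write $k=2\alpha+1$ throughout.

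\textbf{Initialization.} Sampling the hierarchy takes $O(n)$ time. The values $p_i(u)$ and $\delta(u,A_i)$ for every $u\in V$ and $0\le i\le\alpha$ come from one Dijkstra run per level from a virtual source attached to $A_i$. That is $O(\alpha(m+n\log n))$ in total. $\Preprocess(G)$ costs $O(m\log n)$ at most, as in \cite{kadria2023improved}; it essentially sorts adjacency lists.

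\textbf{Phase 1.} $A_1=\Sample(V,n^{-1/k})$, so \Cref{L-Create_B_S} with $x=1/k$ computes all the graphs $G[B_0(u)]=G[B_{A_1}(u)]$ in $O((m+n^{1+2/k})\log n)$ time. The step I expect to need the most care is bounding the size of these graphs. $\MinCycle(u,G[B_0(u)])$ costs $O(|E(G[B_0(u)])|+|B_0(u)|\log n)$, and we need the total $\sum_u |E(G[B_0(u)])|$ to be $O(n^{1+2/k})$. I would get this from the edge bound that \Cref{L-Create_B_S} itself relies on. The vertices of $B_0(u)$ are the vertices closer to $u$ than any sampled vertex, so $|B_0(u)|$ is stochastically dominated by a geometric variable with mean $n^{1/k}$. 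Hence $|E(G[B_0(u)])|\le \binom{|B_0(u)|}{2}$ has expectation $O(n^{2/k})$, and summing over $u$ gives $O(n^{1+2/k})$ in expectation. Phase 1 therefore costs $O((m+n^{1+2/k})\log n)$.

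\textbf{Phase 2.} For $u\in A_1\setminus A_{i+1}$ with $u\in A_i$, \Cref{L-ClusterOrCycle} bounds the cost of $\ClusterOrCycle(u)$ by $O(|\Cl_V(u)|\log n)$. The standard Thorup--Zwick argument behind \Cref{L-cluster-size}, adapted to our non-uniform probabilities, bounds the cluster sizes. The clusters of level-$i$ centres with $i\ge1$ have expected total size $O(|A_i|\cdot n^{2/k}\cdot n/|A_i|\cdot\ldots)$; more precisely, each vertex $v$ lies in $O(n^{2/k})$ expected clusters of level $i$, because $v\in\Cl_V(u)$ means $u\in B_i(v)$, and $|B_i(v)|$ is geometric with mean $n^{2/k}$ when the sampling probability into $A_{i+1}$ is $n^{-2/k}$. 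For the last level, $i=\alpha$, we have $A_{\alpha+1}=\emptyset$ and $|A_\alpha|\approx n\cdot n^{-1/k}\cdot n^{-2(\alpha-1)/k}=n^{2/k}$ in expectation, so $|B_\alpha(v)|\le|A_\alpha|=O(n^{2/k})$. Summing over $v$ and over the $\alpha$ levels gives $O(\alpha n^{1+2/k})$. Phase 2 costs $O(kn^{1+2/k}\log n)$.

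\textbf{Phase 3 and cycle recovery.} Phase 3 does $O(1)$ hash-table lookups of $d(u,v)$ and $\pi(u,v)$ for each edge and each of the $\alpha+1$ levels, giving $O(\alpha m)$. The hash tables themselves are filled during Phase 2 with one entry per cluster vertex, so they add nothing asymptotically. Recovering the cycle from a triplet needs a single Dijkstra and an LCA computation, which is $O(m+n\log n)$. Adding all phases gives $O((m+kn^{1+2/k})\log n+\alpha m)$ in expectation, as claimed.
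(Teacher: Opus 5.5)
Your proof is correct and follows the same phase-by-phase accounting as the paper. The phases are bounded as follows: initialization via Dijkstra runs and \Preprocess, Phase 1 via \Cref{L-Create_B_S} plus \MinCycle{}, Phase 2 via \Cref{L-ClusterOrCycle} with the Thorup--Zwick cluster-size bound, and Phase 3 via $O(\alpha m)$ hash lookups. Your second-moment bound $\mathbb{E}\binom{|B_0(u)|}{2}=O(n^{2/k})$ and your re-derivation of the cluster-size bound for the non-uniform hierarchy, with the last level handled via $\mathbb{E}|A_\alpha|=n^{2/k}$, make rigorous two steps the paper states loosely. The paper writes $|B_0(u)|^2=O(n^{2/k})$ directly, and it cites \Cref{L-cluster-size}, which is stated for uniform sampling.
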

\begin{proof}
    See \Cref{L-alg-runtime} for the complete proof.
\end{proof}

Next, we prove the following lemma, which is a main tool used in our correctness proof. 

\begin{lemma}[\Cref{L-bound-h2} in \Cref{S-algo-full}]
Let $C$ be a shortest cycle; that is, $\ell(C)=g$. 
Let $u \in C$, and let $(x, y)$ be the farthest edge from $u$ in $C$, i.e., $(x,y)=\arg\max_{(x,y)\in C}(\delta(u,(x,y)))$.

Then either $\hat{g} \le 2g$ or $\min(\delta(A_2,x),\delta(A_2,y)) \le g$.
\end{lemma}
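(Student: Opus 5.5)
The plan is a case analysis driven by Phase~1, then Phase~2 applied to the vertex $a=p_1(u)\in A_1$. Throughout, let $D=\delta(u,(x,y))$, and assume without loss of generality $\delta(u,x)\le\delta(u,y)$, so $D=\delta(u,x)+\ell(x,y)$.

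\textbf{Step 1: bounds on the farthest edge.} Applying the argument of \Cref{lem:main} with $v=u$ and $r=0$ gives $\delta(u,(x,y))\le \frac12(g+\ell(x,y))$. Combined with $D=\delta(u,x)+\ell(x,y)$, this yields
\[
\delta(u,x)\le \tfrac12\bigl(g-\ell(x,y)\bigr).
\]
Also, every vertex $c\in C$ is an endpoint of some edge $e\in C$, so $\delta(u,c)\le\delta(u,e)\le D$.

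\textbf{Step 2: use of Phase~1.} Phase~1 calls $\MinCycle(u,G[B_0(u)])$ and $u\in C$. If every vertex of $C$ lies in $B_0(u)$, then $C\subseteq G[B_0(u)]$, the call returns a cycle of length $g$, and $\hat g\le g$. Otherwise some $c\in C$ has $\delta(u,c)\ge \delta(u,A_1)$. With Step~1 this gives
\[
\delta(u,a)=\delta(u,A_1)\le D\le \tfrac12\bigl(g+\ell(x,y)\bigr).
\]

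\textbf{Step 3: the easy subcase.} Suppose $x\notin \Cl_V(a)$. By definition of $\Cl_V(a)$,
\[
\delta(A_2,x)\le\delta(a,x)\le\delta(a,u)+\delta(u,x)\le \tfrac12\bigl(g+\ell(x,y)\bigr)+\tfrac12\bigl(g-\ell(x,y)\bigr)=g,
\]
which is the second alternative of the statement. The same computation should apply when $y\notin\Cl_V(a)$, after checking the bound for $\delta(u,y)$ via $\delta(u,y)\le\delta(u,x)+\ell(x,y)$.

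\textbf{Step 4: the main obstacle.} The hard case is $x,y\in\Cl_V(a)$, where we must show $\hat g\le 2g$. First I would use \Cref{L-shortest-path-cluster}: $P(a,x)$ and $P(a,y)$ lie in $\Cl(a)$.

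Next I would try to show that $C\cup P(a,u)$ essentially lies inside $\Cl(a)\cap G_r(a)$ for some $r\le g$. Then \Cref{L-ClusterOrCycle} gives a cycle of length at most $2r\le 2g$. Here the direct bound $r\le\delta(a,u)+\frac12(g+M(C))$ from \Cref{lem:main} is too weak. Instead I would measure distances from $a$ to the edges of $C$ via whichever of $x$ or $u$ is closer, using the split of $C$ at $(x,y)$ into two arcs of length roughly $g/2$ each.

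If some edge of $C$ falls outside $\Cl_E(a)$, the cluster does not contain the cycle. In that case I would fall back on Phase~3 with $i=1$: the edge $(x,y)$, or the first edge of $C$ that leaves the cluster, closes a cycle in the shortest-path tree from $a=p_1(v)$. That cycle has length at most $\delta(a,v)+\ell(v,w)+\delta(a,w)$. I would then bound this by $2g$ using the estimates of Step~1, and the $\pi$-checks guarantee the cycle is genuine.

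I expect the delicate part to be exactly this accounting for edges of $C$ near the boundary of $\Cl(a)$. The asymmetric definition of $\Cl_E$ makes it nontrivial to guarantee that either the cluster contains the cycle within radius $g$, or some Phase~3 edge closes a cycle of length at most $2g$.
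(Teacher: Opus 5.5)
Your Steps 1--3 are essentially fine, with one repair. For the far endpoint $y$, the bound $\delta(u,a)\le\frac12(g+\ell(x,y))$ is too weak. However, failure of Phase~1 actually gives $\delta(u,a)\le\max_{c\in C}\delta(u,c)=\delta(u,y)\le g/2$. So $y\notin\Cl_V(a)$ yields $\delta(A_2,y)\le\delta(a,u)+\delta(u,y)\le 2\delta(u,y)\le g$.

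The same bound removes your worry about the radius. Every $e\in C$ has $\delta(a,e)\le\delta(a,u)+D\le\delta(u,y)+\delta(u,x)+\ell(x,y)=g$, so $C\subseteq G_g(a)$. Hence if $C\subseteq\Cl(a)$, \Cref{L-ClusterOrCycle} gives $\hat g\le 2g$; this is the paper's first case.

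The gap is the remaining subcase of Step~4: $x,y\in\Cl_V(a)$ but some edge of $C$ lies outside $\Cl(a)$. There you try to force $\hat g\le 2g$ through Phase~3 at level~$1$, which does not work. Phase~3 pairs an edge $(v,w)$ only with the roots $p_i(v)$. So the tree rooted at $a=p_1(u)$ is consulted for $(x,y)$ only if $a=p_1(x)$ or $a=p_1(y)$, which need not hold. For a root $p_1(x)\neq a$, the distance to $y$ is stored only if $y$ lies in that root's cluster.

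A concrete instance: take a unit triangle $u,x,y$, an $A_1$-pendant $a$ at $u$ of weight $\tfrac12$, and $A_1$-pendants at $x$ and at $y$ of weight $1$. Add $b\in A_2$ joined to $x$ and $y$ by edges of weight $2$. Then:
\begin{itemize}
\item $a=p_1(u)$ and $x,y\in\Cl_V(a)$, since $1.5<2$.
\item $(x,y)$ fails the cluster condition in both directions ($2.5\not<2$), so $\Cl(a)$ is a tree and $\ClusterOrCycle(a)$ returns no cycle.
\item $p_1(x),p_1(y)\neq a$, and $d(p_1(x),y)=\infty$ because $y\notin\Cl_V(p_1(x))$.
\end{itemize}
Your level-$1$ argument therefore certifies nothing. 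What holds is the other alternative, $\delta(A_2,x)=2\le g$.

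That is the missing idea: an edge of $C$ leaving $\Cl(a)$ should be converted into the $A_2$-bound, not into a cycle.

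Suppose first the offending edge $(s,t)$ is not $(x,y)$. It lies on one of the two arcs of $C$ that are shortest paths from $u$ to $x$ and to $y$. Orient it away from $u$, and let $z\in\{x,y\}$ be the end of its arc. The violated cluster condition gives $\delta(A_2,t)\le\delta(a,s)+\ell(s,t)\le\delta(a,u)+\delta(u,t)$. Hence $\delta(A_2,z)\le\delta(A_2,t)+\delta(t,z)\le\delta(a,u)+\delta(u,z)\le g/2+g/2=g$.

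If the offending edge is $(x,y)$ itself, oriented from $x$ to $y$, then $\delta(A_2,y)\le\delta(a,u)+\delta(u,x)+\ell(x,y)\le\delta(u,y)+\delta(u,x)+\ell(x,y)=g$.

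So the right case split is $C\subseteq\Cl(a)$ versus not. Your Step~3 is a special case of the latter, and Phase~3 is not needed for this lemma at all.
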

\begin{proof}
    Let $u \in C$, and let $(x, y)$ be the farthest edge from $u$ in $C$. Without loss of generality, assume that $\delta(u, x) \ge \delta(u, y)$.  
    If $C\subseteq G[B_0(u)]$, then we have \\$\hat{g} \le \ell(\MinCycle(u,G[B_0(u)])) = g \le 2g$, and the claim holds. Therefore, we assume that $C \not\subseteq G[B_0(u)]$.
    
    Since $C \not\subseteq G[B_0(u)]$, there exists a vertex $w \in C$ such that $w \notin B_0(u)$. 
    From the definition of $B_0(u)$, we have $\delta(u, A_1) \le \delta(u, w)$. Since $x$ is the farthest vertex from $u$ in $C$, we get:
    \begin{equation}\label{e-0-1}
        \delta(u,A_1) \le \delta(u,w) \le \delta(u,x)
    \end{equation}
    
    Since $C$ is a shortest cycle and $x\in C$, we have that $C=P(u,x)\cup P_2(u,x)$, and therefore $\ell(C)=\ell(P(u,x))+\ell(P_2(u,x))$. Thus,
    \begin{equation}\label{e-1-1}
        g=\ell(C)=\ell(P(u,x))+\ell(P_2(u,x))=\delta(u,x)+\ell(P_2(u,x)).
    \end{equation}

    Let $r = \delta(u, A_1) + \ell(P_2(u, x))$. We show that:
    \begin{cclaim}[\Cref{C-C-in-G_r} in \Cref{S-algo-full}]
        $C \subseteq G_r(p_1(u))$
    \end{cclaim}
    \begin{proof}
        See \Cref{C-C-in-G_r} in \Cref{S-algo-full} for the complete proof.
    \end{proof}
        % In order to prove that $C \subseteq G_r(p_1(u))$ we prove that for every $e\in C$ we have that $\delta(p_1(u),e)\le r$ and therefore $e\in G_r(p_1(u))$, as required.
        
        % Since $C=P(u,x)\cup P_2(u,x)$ we have that either   $e\in P(u,x)$ or   $e\in P_2(u,x)$. If $e\in P(u,x)$ then we have that $\delta(u,e)\le \ell(P(u,x)) \le \ell(P_2(u,x))$, and therefore $\delta(p_1(u),e) \stactri \le \delta(A_1,u)+\ell(P_2(u,x))=r$, as required.\footnote{Throughout the paper, $\stactri\le $ is a step that follows from the triangle inequality.}
        % If $e\in P_2(u,x)$ then we have that $\delta(u,e) \le \ell(P_2(u,x))$, and therefore $\delta(p_1(u),e) \stactri \le \delta(A_1,u) + \ell(P_2(u,x))$, as required.
    % \end{proof}

    We divide the rest of the proof into the case that $C \subseteq \Cl(p_1(u))$, and the case that $C \not\subseteq \Cl(p_1(u))$.
    If $C \subseteq \Cl(p_1(u))$, then together with \Cref{C-C-in-G_r} we have that $C \subseteq G_r(p_1(u)) \cap \Cl(p_1(u))$, and therefore from \Cref{L-ClusterOrCycle} we have that:
    \begin{align*}
        \hat{g} &\le \ell(\ClusterOrCycle(p_1(u))) \le 2r=2(\delta(p_1(u),u) + \ell(P_2(u,x))) \\
        &\stackrel{\ref{e-0-1}}{\le} 2(\delta(u,x)+\ell(P_2(u,x))) \stackrel{\ref{e-1-1}}{\le} 2g,
    \end{align*}
    as required.

    If $C \not\subseteq \Cl(p_1(u))$, then we must have a cycle edge $(s,t)\in C$ such that $(s,t)\notin \Cl(p_1(u))$.
    If $(s,t)\neq (x,y)$ (recall that $(x, y)$ is the farthest edge from $u$ in $C$), then since $C$ is a shortest cycle we have that $(s,t)\subseteq P(u,t)$ or $(t,s)\subseteq P(u,s)$, and therefore $\delta(u,(s,t))=\delta(u,s)$ or $\delta(u,(s,t))=\delta(u,t)$. Wlog, we assume that $\delta(u,(s,t))=\delta(u,s)+\ell(s,t)=\delta(u,t)$.
    Since $(s,t)\notin \Cl(p_1(u))$, it follows from the definition of $\Cl(p_1(u))$ that
    \begin{equation}\label{e-A-2-t-1}
        \delta(A_2,t) \le \delta(p_1(u),s)+\ell(s,t) \stactri\le \delta(A_1,u)+\delta(u,s) + \ell(s,t) = \delta(A_1,u) + \delta(u,t)
    \end{equation}
    
    Since $(s,t)\neq (x,y)$ it follows that $(s,t) \in P(u,x) \cup P(u,y)$. (See \Cref{fig:c_6_1} for an illustration of this case.)
    If $(s, t) \in P(u, y)$, then $\delta(u, y) = \delta(u, t) + \delta(t, y)$. Therefore:
    \begin{align*}
        \delta(A_2, y) &\stactri\le \delta(y,t)+\delta(A_2,t) \stackrel{\ref{e-A-2-t-1}}\le \delta(y,t)+\delta(p_1(u),u) + \delta(u,t)
        \\&\stackrel{x\notin B_0(u)}\le \delta(u,t)+\delta(t,y)+\delta(u,x) \stackrel{t\in P(u,y)}\le \delta(u,y)+\delta(u,x) \le g/2+g/2=g,
    \end{align*}
    \begin{figure}
    \centering
    \tikzset{every picture/.style={line width=0.75pt}} %set default line width to 0.75pt        
    \tikzset{every picture/.style={line width=0.75pt}} %set default line width to 0.75pt        

\begin{tikzpicture}[x=0.75pt,y=0.75pt,yscale=-1,xscale=1]
%uncomment if require: \path (0,261); %set diagram left start at 0, and has height of 261

%Shape: Ellipse [id:dp2827032205151191] 
\draw   (247.16,170) .. controls (247.16,144.04) and (286.66,123) .. (335.38,123) .. controls (384.11,123) and (423.61,144.04) .. (423.61,170) .. controls (423.61,195.96) and (384.11,217) .. (335.38,217) .. controls (286.66,217) and (247.16,195.96) .. (247.16,170) -- cycle ;
%Shape: Arc [id:dp08222238404625459] 
\draw  [draw opacity=0][line width=1.5]  (220.68,237.59) .. controls (229.05,239.75) and (237.95,240.92) .. (247.16,240.92) .. controls (297.01,240.92) and (337.41,206.85) .. (337.41,164.83) .. controls (337.41,122.81) and (297.01,88.75) .. (247.16,88.75) .. controls (237.95,88.75) and (229.05,89.91) .. (220.68,92.08) -- (247.16,164.83) -- cycle ; \draw  [color={rgb, 255:red, 245; green, 166; blue, 35 }  ,draw opacity=1 ][line width=1.5]  (220.68,237.59) .. controls (229.05,239.75) and (237.95,240.92) .. (247.16,240.92) .. controls (297.01,240.92) and (337.41,206.85) .. (337.41,164.83) .. controls (337.41,122.81) and (297.01,88.75) .. (247.16,88.75) .. controls (237.95,88.75) and (229.05,89.91) .. (220.68,92.08) ;  
%Shape: Arc [id:dp9764287245762697] 
\draw  [draw opacity=0][line width=1.5]  (405.86,210.13) .. controls (396.74,214.34) and (386.75,216.67) .. (376.27,216.67) .. controls (333.19,216.67) and (298.27,177.34) .. (298.27,128.83) .. controls (298.27,80.32) and (333.19,41) .. (376.27,41) .. controls (386.75,41) and (396.74,43.32) .. (405.86,47.54) -- (376.27,128.83) -- cycle ; \draw  [color={rgb, 255:red, 74; green, 144; blue, 226 }  ,draw opacity=1 ][line width=1.5]  (405.86,210.13) .. controls (396.74,214.34) and (386.75,216.67) .. (376.27,216.67) .. controls (333.19,216.67) and (298.27,177.34) .. (298.27,128.83) .. controls (298.27,80.32) and (333.19,41) .. (376.27,41) .. controls (386.75,41) and (396.74,43.32) .. (405.86,47.54) ;  
%Shape: Circle [id:dp74616805340857] 
\draw  [fill={rgb, 255:red, 0; green, 0; blue, 0 }  ,fill opacity=1 ] (241.99,164.83) .. controls (241.99,161.98) and (244.31,159.67) .. (247.16,159.67) .. controls (250.01,159.67) and (252.33,161.98) .. (252.33,164.83) .. controls (252.33,167.69) and (250.01,170) .. (247.16,170) .. controls (244.31,170) and (241.99,167.69) .. (241.99,164.83) -- cycle ;
%Shape: Circle [id:dp7874153787103315] 
\draw  [fill={rgb, 255:red, 0; green, 0; blue, 0 }  ,fill opacity=1 ] (335.38,123) .. controls (335.38,120.15) and (337.7,117.83) .. (340.55,117.83) .. controls (343.4,117.83) and (345.72,120.15) .. (345.72,123) .. controls (345.72,125.85) and (343.4,128.17) .. (340.55,128.17) .. controls (337.7,128.17) and (335.38,125.85) .. (335.38,123) -- cycle ;
%Shape: Circle [id:dp26385009266109005] 
\draw  [fill={rgb, 255:red, 0; green, 0; blue, 0 }  ,fill opacity=1 ] (371.11,128.83) .. controls (371.11,125.98) and (373.42,123.67) .. (376.27,123.67) .. controls (379.13,123.67) and (381.44,125.98) .. (381.44,128.83) .. controls (381.44,131.69) and (379.13,134) .. (376.27,134) .. controls (373.42,134) and (371.11,131.69) .. (371.11,128.83) -- cycle ;
%Shape: Circle [id:dp40079864585505265] 
\draw  [fill={rgb, 255:red, 0; green, 0; blue, 0 }  ,fill opacity=1 ] (411.11,153.83) .. controls (411.11,150.98) and (413.42,148.67) .. (416.27,148.67) .. controls (419.13,148.67) and (421.44,150.98) .. (421.44,153.83) .. controls (421.44,156.69) and (419.13,159) .. (416.27,159) .. controls (413.42,159) and (411.11,156.69) .. (411.11,153.83) -- cycle ;
%Shape: Circle [id:dp5936497669172328] 
\draw  [fill={rgb, 255:red, 0; green, 0; blue, 0 }  ,fill opacity=1 ] (400.11,197.83) .. controls (400.11,194.98) and (402.42,192.67) .. (405.27,192.67) .. controls (408.13,192.67) and (410.44,194.98) .. (410.44,197.83) .. controls (410.44,200.69) and (408.13,203) .. (405.27,203) .. controls (402.42,203) and (400.11,200.69) .. (400.11,197.83) -- cycle ;
%Shape: Circle [id:dp960849577468932] 
\draw  [color={rgb, 255:red, 245; green, 166; blue, 35 }  ,draw opacity=1 ][fill={rgb, 255:red, 245; green, 166; blue, 35 }  ,fill opacity=1 ] (241.99,88.83) .. controls (241.99,85.98) and (244.31,83.67) .. (247.16,83.67) .. controls (250.01,83.67) and (252.33,85.98) .. (252.33,88.83) .. controls (252.33,91.69) and (250.01,94) .. (247.16,94) .. controls (244.31,94) and (241.99,91.69) .. (241.99,88.83) -- cycle ;
%Shape: Circle [id:dp9217005175303342] 
\draw  [color={rgb, 255:red, 74; green, 144; blue, 226 }  ,draw opacity=1 ][fill={rgb, 255:red, 74; green, 144; blue, 226 }  ,fill opacity=1 ] (370.99,41.83) .. controls (370.99,38.98) and (373.31,36.67) .. (376.16,36.67) .. controls (379.01,36.67) and (381.33,38.98) .. (381.33,41.83) .. controls (381.33,44.69) and (379.01,47) .. (376.16,47) .. controls (373.31,47) and (370.99,44.69) .. (370.99,41.83) -- cycle ;

% Text Node
\draw (233.74,168) node [anchor=north west][inner sep=0.75pt]   [align=left] {$\displaystyle u$};
% Text Node
\draw (421.83,138.83) node [anchor=north west][inner sep=0.75pt]   [align=left] {$\displaystyle x$};
% Text Node
\draw (406.27,200.83) node [anchor=north west][inner sep=0.75pt]   [align=left] {$\displaystyle y$};
% Text Node
\draw (248,65) node [anchor=north west][inner sep=0.75pt]   [align=left] {$\displaystyle p_{1}( u)$};
% Text Node
\draw (357,10) node [anchor=north west][inner sep=0.75pt]   [align=left] {$\displaystyle p_{2}( t)$};
% Text Node
\draw (342.55,126) node [anchor=north west][inner sep=0.75pt]   [align=left] {$\displaystyle s$};
% Text Node
\draw (378.27,131.83) node [anchor=north west][inner sep=0.75pt]   [align=left] {$\displaystyle t$};
% Text Node
\draw (318.48,78.21) node [anchor=north west][inner sep=0.75pt]  [color={rgb, 255:red, 74; green, 144; blue, 226 }  ,opacity=1 ,rotate=-316.81] [align=left] {$\displaystyle B_{1}( t)$};
% Text Node
\draw (259.73,86.88) node [anchor=north west][inner sep=0.75pt]  [color={rgb, 255:red, 245; green, 166; blue, 35 }  ,opacity=1 ,rotate=-26.93] [align=left] {$\displaystyle B_{0}( u)$};

\end{tikzpicture}

    \caption{$\ell(C)=g$, $u\in C$, $(x,y)\in C$ farthest edge from $u$, $(s,t)\in C$ such that $(s,t)\notin \Cl(p_1(u))$ and $(s,t)\neq (x,y)$.}
    \label{fig:c_6_1}
\end{figure}

    where the last inequality follows from the fact that $\delta(u, z) \le \frac{g}{2}$ for every $z \in C$, as required. 

    If $(s, t) = (y, x)$, then $(y, x) \notin \Cl(p_1(u))$, and by the definition of $\Cl(p_1(u))$ we have $\delta(A_2,x) \le \delta(p_1(u),y)+\ell(y,x)$. Therefore:
    \begin{align*}
        \delta(A_2,x) &\le \delta(p_1(u),y) + \ell(y,x) \stactri\le \delta(p_1(u),u)+\delta(u,y)+\ell(y,x) \\
        &\stackrel{\ref{e-0-1}}\le \delta(u,x)+\ell(P_2(u,x)) \stackrel{\ref{e-1-1}}{=} g,
    \end{align*}
    as required.
\end{proof}

Next, using \Cref{L-bound-h2} we bound the value of $\hat{g}$ from above by $\frac{2}{3}(2\alpha+1)g$.
Let $C$ be a shortest cycle in $G$, i.e. $\ell(C)=g$. Recall that $M(C)$ is the length of a longest edge in $C$. 
To prove that $\hat{g} \le \frac{2}{3}(2\alpha+1)g$, we consider the following two cases: the case that $M(C) \le g/3$ (\Cref{L-Corr-small}) and the case that $M(C) > g/3$  (\Cref{L-corr-big}).\footnote{We remark that the distinction between $M(C) \le g/3$ and $M(C) > g/3$ was also considered in the correction proof of \cite{kadria2023improved}.}
If $M(C) \le g/3$, then we show that in the first two loops of the algorithm, a cycle of length at most $\frac{2}{3}(2\alpha+1)g$ is found. Otherwise, if $M(C) > g/3$, we consider a longest edge $(u, u') \in C$ and show that either a short cycle is found during the first two loops, or that while considering the edge $(u,u')$ in the third loop of the algorithm a cycle of length at most $\frac{2}{3}(2\alpha+1)g$ is found. For the complete proofs, see \Cref{S-algo-full}. \Cref{T-main-odd} follows from \Cref{L-Corr-small,L-corr-big,L-alg-runtime}.
% We start with the case where $M(C) \le g/3$. 
% \begin{lemma}\label{L-Corr-small}
%     If $M(C) \le g/3$ then $\hat{g}\le \frac{2}{3}(2\alpha+1)g$
% \end{lemma}
% \begin{proof}
% Let $h_i(C)=\min_{w\in A_i}(\delta(w,C))$. (See \Cref{fig:h_i} for an illustration of the $h_i$.)
\begin{figure}[htbp]
    \centering
    \begin{minipage}{\linewidth}
        \centering
        \tikzset{every picture/.style={line width=0.75pt}} % set default line width
        % This forces the input figure to scale down to the minipage width
        \resizebox{\linewidth}{!}{%
            \input{figures/mathcha_1}%
        }
        \caption{Illustration of $h_i(C)$ and $h_i'(C)$, where $C$ is a shortest cycle in $G$.}
        \label{fig:h_i}
    \end{minipage}
\end{figure}

\section{Conditional lower bound for $(5/3-\epsilon)$-girth approximation under the \StrongerThreeSum hypothesis}

In this section, we show a girth approximation lower bound under the \StrongerThreeSum hypothesis, which is defined as follows.
\begin{hypothesis}[\StrongerThreeSum]
    There is no $O(n^{2-\eps})$ time algorithm for \ThreeSum on a set $A$ where $A\subseteq [\pm O(n^2)]$ and $A$ is a Sidon set.
\end{hypothesis}

We remark that although this hypothesis seems much stronger than the Strong \ThreeSum hypothesis, it is still plausible. In fact, \cite{jin2023removing} explicitly asks (Open problem 5) whether their techniques for fine-grained reductions for \ThreeSum in Sidon Sets can be used to prove the \StrongerThreeSum hypothesis assuming the Strong \ThreeSum hypothesis.

\cite{jin2023removing} proved a conditional lower bound for detecting a triangle under the Strong \ThreeSum hypothesis. 
Next, we prove a conditional lower bound for detecting $C_{\le 4}$, i.e. whether a cycle contains a triangle or a $C_4$. 

The reduction first follows the same steps as the reduction of Theorem 1.14 of \cite{jin2023removing}. 
Then, we show how to take a small number of subgraphs of the graphs created in the reduction of \cite{jin2023removing}
so that if the input set $A$ does not have a non-trivial solution to $a+b=c+d$ then the subgraphs created have no $4$-cycles. Moreover, as before, any triangle corresponds to a $3$SUM solution, and if $A$ has a $3$SUM solution, one of the subgraphs is guaranteed to have a triangle.
Therefore, detecting a triangle in a sparse $4$-cycle free graph is hard.
We prove:

% \avi{Theorem 2, p15: something is missing in the implementation details: if I scan all the y's between 0 and pqr, then the runtime should depend on pqr, not just on r.}
\Reminder{T-Stronger-3sum-LB}
\begin{proof}
    Assume for the sake of contradiction that for some $\eps>0$ there exists an $O(n^{1.5-\eps})$ time algorithm $Alg$ for $C_{\le 4}$ detection in graphs where $\text{deg}(u)=O(n^{1/4})$, for every $u\in V$. We will show that there exists an $O(n^{2-\eps})$ time algorithm for \ThreeSum on a (Sidon) set $A$ where $|A|=n$, $A\subseteq [\pm C_1 n^2]$, for some constant $C_1$ and there are no non-trivial solutions to $a+b=c+d$ with $a,b,c,d\in A$.

    First, we describe the construction of \cite{jin2023removing}.
    Let $A\subseteq [\pm n^2]$ be the input for \ThreeSum. Let $p,q,r \in [2C_1n^{2/3}\log^2{n},4C_1n^{2/3}\log^2{n}]$ be three distinct primes, sampled uniformly at random. Let $A'=\{a\pmod{pqr} \mid a\in A\}$. Let $U$ (resp. $V$,$W$) be a vertex set identified by all numbers $[0,pqr)$ that are congruent to $0$ mod $p$ (resp. $q$,$r$). We add an edge between $u\in U$ and $v\in V$ if $v-u\pmod{pqr}\in A'$, between $v\in V$ and $w\in W$ if $w-v\pmod{pqr}\in A'$ and between $w\in W$ and $u\in U$ if $u-w\pmod {pqr}\in A'$. See \Cref{fig:h_G_uvw} for an illustration of this graph. \begin{figure}
    \centering
    \tikzset{every picture/.style={line width=0.75pt}} %set default line width to 0.75pt        
    \tikzset{every picture/.style={line width=0.75pt}} %set default line width to 0.75pt        

\begin{tikzpicture}[x=0.75pt,y=0.75pt,yscale=-1,xscale=1]
%uncomment if require: \path (0,300); %set diagram left start at 0, and has height of 300

%Shape: Circle [id:dp4300239290897244] 
\draw   (188.33,203.83) .. controls (188.33,190.02) and (199.53,178.83) .. (213.33,178.83) .. controls (227.14,178.83) and (238.33,190.02) .. (238.33,203.83) .. controls (238.33,217.63) and (227.14,228.83) .. (213.33,228.83) .. controls (199.53,228.83) and (188.33,217.63) .. (188.33,203.83) -- cycle ;
%Shape: Circle [id:dp7727679704550483] 
\draw   (290.33,97.83) .. controls (290.33,84.02) and (301.53,72.83) .. (315.33,72.83) .. controls (329.14,72.83) and (340.33,84.02) .. (340.33,97.83) .. controls (340.33,111.63) and (329.14,122.83) .. (315.33,122.83) .. controls (301.53,122.83) and (290.33,111.63) .. (290.33,97.83) -- cycle ;
%Shape: Circle [id:dp736056338843742] 
\draw   (404.33,201.83) .. controls (404.33,188.02) and (415.53,176.83) .. (429.33,176.83) .. controls (443.14,176.83) and (454.33,188.02) .. (454.33,201.83) .. controls (454.33,215.63) and (443.14,226.83) .. (429.33,226.83) .. controls (415.53,226.83) and (404.33,215.63) .. (404.33,201.83) -- cycle ;
%Straight Lines [id:da567597020861471] 
\draw    (226.33,181.83) -- (295.33,113.83) ;
%Straight Lines [id:da10139563159690912] 
\draw    (238.33,203.83) -- (404.33,201.83) ;
%Straight Lines [id:da5562878878913574] 
\draw    (336.33,111.83) -- (414.33,180.83) ;
%Rounded Rect [id:dp4435870219796456] 
\draw   (17,41.2) .. controls (17,33.91) and (22.91,28) .. (30.2,28) -- (225.13,28) .. controls (232.42,28) and (238.33,33.91) .. (238.33,41.2) -- (238.33,80.8) .. controls (238.33,88.09) and (232.42,94) .. (225.13,94) -- (30.2,94) .. controls (22.91,94) and (17,88.09) .. (17,80.8) -- cycle ;

% Text Node
\draw (203.33,191.83) node [anchor=north west][inner sep=0.75pt]  [font=\LARGE] [align=left] {$\displaystyle U$};
% Text Node
\draw (304.33,85.83) node [anchor=north west][inner sep=0.75pt]  [font=\LARGE] [align=left] {$\displaystyle V$};
% Text Node
\draw (418.33,189.83) node [anchor=north west][inner sep=0.75pt]  [font=\LARGE] [align=left] {$\displaystyle W$};
% Text Node
\draw (213.21,160.21) node [anchor=north west][inner sep=0.75pt]  [rotate=-317] [align=left] {$\displaystyle v-u\in A'$};
% Text Node
\draw (269.3,207.99) node [anchor=north west][inner sep=0.75pt]  [rotate=-359.77] [align=left] {$\displaystyle u-w\in A'$};
% Text Node
\draw (360.45,105.02) node [anchor=north west][inner sep=0.75pt]  [rotate=-41.27] [align=left] {$\displaystyle w-v\in A'$};
% Text Node
\draw (27,35) node [anchor=north west][inner sep=0.75pt]  [font=\footnotesize] [align=left] {$\displaystyle U\leftarrow \{i=0\bmod p\ \mid i\in [ 0,pqr)\}$\\$\displaystyle V\leftarrow \{i=0\bmod q\ \mid i\in [ 0,pqr)\}$\\$\displaystyle W\leftarrow \{i=0\bmod r\ \mid i\in [ 0,pqr)\}$};

\end{tikzpicture}

    \caption{Illustration of \Cref{T-Stronger-3sum-LB}.}
    \label{fig:h_G_uvw}
\end{figure}

    \cite{jin2023removing} shows that these edges can be added efficiently: for each $a \in A'$,
    if $v - u = a \pmod{pqr}$, then $v - u = a \pmod q$. Also, by construction, $u = 0 \pmod{p}$ and $v = 0 \pmod{q}$, which imply $u = -a \pmod{q}$. For every fixed $a\in A'$ and $y\in [0,\ldots, pqr-1)$, if $u'=0\pmod p, u'=-a\pmod q, u'=y\pmod r,$ then $u=(u'\pmod {pqr})$ is uniquely determined by the Chinese Remainder Theorem. Hence, we can go over all $a\in A', y\in [0,\ldots, pqr-1)$ which determine $u\in U$ and then add an edge to $(u+a\pmod {pqr})\in V$.
    It thus takes  $\Ot(n \cdot r) = \Ot(n^{5/3})$ time to add all the edges. (The addition of the edges in $V\times W$ and $W\times U$ is done similarly.)
    Finally, we remove all the edges in the graph whose degree is larger than $Cn^{1/3}$.

    \cite{jin2023removing} shows that the graph has a triangle if and only if there is a solution for \ThreeSum in $A$; for completeness, we prove this for our construction in the appendix.
    \begin{cclaim} \label{C-Correctness-jin2023}
        The graph has a triangle if and only if there is a solution for \ThreeSum in $A$.
    \end{cclaim}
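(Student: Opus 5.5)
The plan is to prove the two directions separately. Orient every triangle as $u\in U$, $v\in V$, $w\in W$, and label its edges $a=v-u$, $b=w-v$, $c=u-w \pmod{pqr}$. First I will show that a triangle corresponds to a solution of $a+b+c\equiv 0 \pmod{pqr}$, and that such a congruence forces a genuine solution over the integers. Then I will show that a genuine solution produces a unique triangle, and finally argue that the degree-pruning step does not destroy it.

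\emph{Triangle $\Rightarrow$ solution.} Let $(u,v,w)\in U\times V\times W$ be a triangle. By construction there are $a,b,c\in A$ with $v-u\equiv a$, $w-v\equiv b$ and $u-w\equiv c \pmod{pqr}$. Summing the three congruences telescopes to $a+b+c\equiv 0\pmod{pqr}$. Since $A\subseteq[\pm C_1n^2]$, we have $|a+b+c|\le 3C_1n^2$. On the other hand, $p,q,r\ge 2C_1n^{2/3}\log^2 n$, so $pqr\ge 8C_1^3n^2\log^6 n>3C_1n^2$ for large $n$. The only multiple of $pqr$ in $[\pm 3C_1n^2]$ is $0$, so $a+b+c=0$ over the integers. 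This direction is deterministic, and pruning only removes edges, so it is unaffected.

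\emph{Solution $\Rightarrow$ triangle.} Let $a+b+c=0$ with $a,b,c\in A$. By the Chinese Remainder Theorem, choose the unique $u\in[0,pqr)$ satisfying
\[
u\equiv 0 \pmod p,\qquad u\equiv -a \pmod q,\qquad u\equiv -a-b \pmod r .
\]
Set $v=u+a$ and $w=v+b \pmod{pqr}$. Then $v\equiv 0\pmod q$, so $v\in V$, and $w\equiv 0\pmod r$, so $w\in W$. Moreover $u-w\equiv -a-b=c$. Hence $(u,v),(v,w),(w,u)$ are all edges before pruning.

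The remaining issue, which I expect to be the main obstacle, is that the pruning step deletes edges at vertices of degree larger than $Cn^{1/3}$. The neighbours of $u$ in $V$ are in bijection with the elements $a'\in A'$ satisfying $a'\equiv -u\pmod q$. So the degree of a vertex is the size of a residue class (``bucket'') of $A$ modulo one of $q,r,p$.

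I will bound bucket sizes using the randomness of the primes. For distinct $a,a'\in A$, the difference $a-a'$ is nonzero and at most $O(n^2)$, so it has only $O(\log n)$ prime divisors in the sampling range. That range contains $\Theta(n^{2/3}\log n)$ primes. Hence a random prime divides $a-a'$ with probability $O(n^{-2/3})$. Summing over the $O(n^2)$ pairs, the expected number of colliding pairs is $O(n^{4/3})$. By Markov's inequality, only $O(n/C)$ elements of $A$ lie in buckets of size larger than $Cn^{1/3}$.

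If a solution uses one of these heavy elements, I would handle it as in \cite{jin2023removing}: fix the heavy element and solve the remaining 2SUM by hashing, in $\Ot(n)$ time each. Choosing $C$ appropriately, or repeating with fresh primes, keeps this within $O(n^{2-\eps})$ time. If the solution uses only light elements, all three edges of the constructed triangle survive pruning. Consequently, the pruned graph has a triangle if and only if $A$ has a solution not already detected by the heavy-element brute force, which is exactly the form needed in the reduction.
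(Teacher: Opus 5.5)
Your proof of the direction triangle $\Rightarrow$ solution is correct. Your CRT construction of the triangle before pruning is also correct; your choice $u\equiv -a-b \pmod r$ is the paper's $u\equiv z\pmod r$.

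\textbf{The gap is in how you handle pruning.} Your global collision count shows that the expected number of elements lying in a bucket of size larger than $Cn^{1/3}$ is $O(n/C)$. For the constant $C$ used in the construction, this is still $\Theta(n)$ elements. Running an $\Ot(n)$-time 2SUM for each of them costs $\Ot(n^2/C)$, which is not $O(n^{2-\eps})$.

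Neither remedy you suggest repairs this. Taking $C=n^{\eps}$ raises the pruning threshold to $n^{1/3+\eps}$. The final graphs, with $N=\Theta(n^{4/3})$ vertices, then no longer have maximum degree $O(N^{1/4})$, which \Cref{T-Stronger-3sum-LB} needs. Drawing fresh primes does not shrink the heavy set in any single trial. So the statement you end with is not established within the time budget. (A minor point: a vertex's degree is the sum of two bucket sizes, one toward each neighbouring part, so the lightness threshold should be halved.)

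\textbf{The missing idea} is to argue per fixed solution rather than over all of $A$. Then no brute-force step is needed. Fix $a+b+c=0$ and the vertex $u$ you constructed. Its neighbours correspond to elements of $A$ congruent to $a$ modulo $q$, together with elements congruent to $c$ modulo $r$. By the divisibility estimate you already have, each other element of $A$ falls into either class with probability $O(n^{-2/3})$, so $\EE[\deg(u)]=O(n^{1/3})$. The same bound holds for $v$ and $w$.

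By Markov's inequality, for a suitable constant $C$ each of $u,v,w$ is pruned with probability at most $1/10$. Hence the triangle survives with probability at least $7/10$. Repeating with $O(\log n)$ independent choices of $p,q,r$ makes some trial keep it with high probability. This is exactly the paper's proof, and it is the sense in which the claim holds: with high probability over the repetitions, not deterministically for one pruned graph.
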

    \begin{proof}
        See \Cref{S-LB-full} for the complete proof.  
    \end{proof}

    If $A$ had a  solution $x+y+z=0$ where $x=z$, then $(A, 2A)$ would have a $2$SUM solution. Since $2$SUM can be solved in $O(n\log n)$ time by sorting, we can solve \ThreeSum in $A$ in that much time. Hence, we assume that for any \ThreeSum solution in $A$, we have $x\neq z$. 
    Since this means that $(x-z)\neq 0$, and since $(x-z)$ has $O(\log n)$ prime factors, we can assume that whp our random prime $p$ does not divide $x-z$ and hence $x\neq z\pmod p$. 
    
    Now, notice that if $A$ has a \ThreeSum solution $x+y+z=0$, then whp in one of our trials, there is a triangle $(u,v,w)$ in the graph where \[u=0\pmod p, v=x\pmod p, w=-z\pmod p.\] 
    
    Since we can assume that $x\neq z \pmod p$ in any \ThreeSum solution, we can reduce \ThreeSum to $O(\log n)$ instances of subgraphs of our graphs above, where for every $v\in V, w\in W$ we have that $v\neq -w \pmod p$ and such that we are guaranteed that whp one of the subgraphs will contain a triangle if $A$ had a \ThreeSum.

    We do this as follows. Fix one of our graphs $G$ from the trials above. For each $i\in [0,\ldots,\log p)$ and each choice of $(b,b')\in \{(0,1),(1,0)\}$, we create a subgraph $G_{bb'}$ of $G$ by only keeping those $v\in V$ such that the $i$th bit of $v\pmod p$ is $b$, and those $w\in W$ such that the $i$th bit of $-w\pmod p$ is $b'$.
    Since a triangle $(u,v,w)$ corresponding to a \ThreeSum solution has $v\neq -w \pmod p$, there is some bit $i$ for which the $i$th bits of $v\pmod p$ and $-w\pmod p$ ($b$ and $b'$) differ. For the corresponding subgraph $G_{bb'}$, both $v$ and $w$ appear and hence $(u,v,w)$ is a triangle. Meanwhile, for every graph $G_{bb'}$, for every $v,w$ we have that $v\neq -w\pmod p$.
    
    Now let's focus on any one of the graphs $G_{bb'}$ mentioned above.
    Suppose, for contradiction, that $G_{bb'}$ contains a $4$-cycle. There are two types of $4$-cycles. The first is w.l.o.g of the form $u,u'\in U, v,v'\in V$ that goes $u\rightarrow v\rightarrow u'\rightarrow v'\rightarrow u$ (or $(U,V)$ can be $(V,W)$ or $(W,U)$). The second is wlog of the form $u\rightarrow v\rightarrow u'\rightarrow w\rightarrow u$, for $u,u'\in U$, $v\in V$, $w\in W$ (or some permutation of $U,V,W$). 

    Let's consider the first type of $4$-cycle: $u\rightarrow v\rightarrow u'\rightarrow v'\rightarrow u$. By construction, we have that $v-u=a\pmod {pqr}, v'-u=a'\mod (pqr), v-u'=a''\mod (pqr), v'-u'=a'''\pmod {pqr}$. As $v\neq v', u\neq u'$ we get that  $a\neq a'$, $a''\neq a'''$, $a\neq a''$ and $a'\neq a'''$. 

    However, we have that $a-a''=v-v'=a'-a'''\mod (pqr)$, so that $a+a'''=a'+a''\mod (pqr)$. Since $pqr$ is chosen to be large enough, we also have that $a+a'''=a'+a''$ for the corresponding integers in $A$. We showed that $a\neq a',a''$ and $a'''\neq a',a''$ so we have a non-trivial solution to $x+y=z+w$, a contradiction to $A$ being a Sidon set.

    For the second type of $4$-cycle $u-v-u'-w-u$ we obtain a similar contradiction but with a bit more work.
    We get $u-w=a\mod (pqr), v-u=a'\mod (pqr), u'-w=a''\mod (pqr), v-u'=a'''\mod (pqr)$.
    We have that $a+a'=v-w=a''+a'''\mod (pqr)$ so there is a solution in $A$ $$a+a'=a''+a'''.$$

    Again, since $u\neq u'$ we get that $a\neq a''$, $a'\neq a'''$.

    Suppose for the sake of contradiction that $a=a'''$ (a similar argument can be made for $a'=a''$). Then from the definition of $a$ and $a'''$ we get that $(u-w)=(v-u')\mod (pqr)$ and hence $$u+u'=v+w\mod (pqr).$$
    
    However, consider this equation modulo $p$. Since for all $u\in U$ we have $u=0\pmod p$, we obtain $u+u'=0\pmod p$. Meanwhile, $v+w\neq 0\pmod p$ because in $G_{bb'}$ we ensured that for all $v,w$, $v\neq -w\pmod p$, a contradiction. 

    Thus, we must have that $a\neq a'''$ and $a'\neq a''$, in addition to the fact that $a\neq a''$ and $a'\neq a'''$, and therefore $a+a'=a''+a'''$ is a non-trivial solution to $x+y=z+w$, a contradiction to $A$ being a Sidon set. Thus, such cycles cannot exist in $G_{bb'}$.

    This means that in $G_{bb'}$ we have removed the $4$-cycles with two nodes in $U$.
    There could be $4$-cycles with two nodes in $V$ (and one in $U,W$ each) or two nodes in $W$ (and one in $U,V$).
    To handle these, we create $O(\log^2 n)$ subgraphs of $G_{bb'}$ where we also go through all $O(\log^2 n)$ choices of pairs of bits of $q$ and $r$ to make sure that $u+w\neq 0\pmod q$ and $u+v\neq 0\pmod r$. We obtain a poly-logarithmic number of subgraphs where we are guaranteed that if there were a \ThreeSum in $A$, in one of the subgraphs, this \ThreeSum is represented by a triangle and there are no $4$-cycles.    
    
    The number of vertices $N$ in each of the poly-logarithmic number of graphs is $\Theta(n^{4/3})$, and the maximum degree is $O(n^{1/3}) = O(N^{1/4})$, so the $C_{\le 4}$ detection instance requires $n^{2-o(1)} = N^{1.5-o(1)}$ time under the \StrongerThreeSum hypothesis.
\end{proof}

\ifC
\bibliography{articles}
\appendix
\else\fi

\section{Weighted girth approximation algorithm}\label{S-algo-full}
In this section, we prove the following theorem:
\Reminder{T-main-odd}

Let $k=2\alpha+1$.
Our algorithm is composed of an initialization phase and three different phases of girth approximations. Roughly speaking, Phase 1 handles the first ``half level'' using ideas from \cite{Ducoffe19}, and Phases 2,3 use ideas from \cite{kadria2023improved} for the rest of the levels.

In the initialization phase, we initialize all the data structures needed for the algorithm; this phase is similar to $\Initialize$  from \cite{kadria2023improved}. For completeness, we describe $\Initialize$ in \Cref{AP-In}, with one major difference: we create a hierarchy of vertex sets using non-uniform sampling. We have $A_0=V$, $A_{\alpha+1}=\emptyset$, $A_1 = \Sample(A_0,n^{-1/(2\alpha+1)})$, and \\$A_i = \Sample(A_{i-1},n^{-2/(2\alpha+1)})$, for every $2\le i \le \alpha$. The reason for these different probabilities is to create a first ``half level''. Using this ``half level'', the algorithm can address odd values of $k$. In this phase, the current girth estimation $\hat{g}$ is initialized to $\infty$, the current shortest cycle $\hat{C}$ is set to $\emptyset$, and both the distance hash table $d$ and the predecessor hash table $\pi$ (used to reconstruct shortest paths) are initialized as empty hash tables. In addition, the algorithm also computes $\delta(p_i(u),u)$ for every $0\le i \le \alpha$, and calls $\Preprocess(G)$, as required by \cite{kadria2023improved}, to efficiently implement $\ClusterOrCycle$.

In Phase 1, for every vertex $u \in V$, the algorithm computes the subgraph $G[B_0(u)]$ using \Cref{L-Create_B_S} and calls $\MinCycle(u,G[B_0(u)])$. If the length of a shortest cycle that goes through $u$ in $G[B_0(u)]$ is smaller than  $\hat{g}$, then the algorithm updates $\hat{C}$ and $\hat{g}$.

In Phase 2, for every vertex $u \in A_1$, the algorithm computes $\ClusterOrCycle(u)$.
If a cycle shorter than the current estimate $\hat{g}$  is detected by  $\ClusterOrCycle(u)$ in $\Cl(u)$, then the algorithm updates $\hat{C}$  and $\hat{g}$.

Finally, in phase 3, for every edge $(v, w) \in E$ and every $0 \le i \le \alpha$, the algorithm checks whether the edge $(v, w)$ closes a cycle in the shortest paths tree rooted at $u = p_i(v)$. If $(v,w)$ indeed closes a cycle and the length of this cycle is smaller than $\hat{g}$, then the algorithm updates $\hat{C}$ and $\hat{g}$.

The algorithm checks whether $(v, w)$ actually closes a cycle rather than merely retracing a path in the shortest paths tree rooted at $u$  by  
verifying that $\pi(u, v) \ne (w, v)$ and $\pi(u, w) \ne (v, w)$. 
If $\pi(u, v) \ne (w, v)$ and $\pi(u, w) \ne (v, w)$, a cycle is indeed formed, and the length of the cycle is bounded from above by $\hat{g}'=d(u,v)+\ell(v,w)+\delta(u,w)$. If $\hat{g}'<\hat{g}$, the algorithm updates $\hat{g}$ accordingly. In this case, we succinctly represent the discovered cycle $\hat{C}$ with the triplet $(u,v,w)$. 

At the end of the algorithm, to obtain a cycle from the triplet $(u,v,w)$, the algorithm computes $u'$, the Lowest Common Ancestor (LCA) of~$v$ and~$w$ in the shortest paths tree rooted at~$u$, and returns the simple cycle $P(u',v) \cup P(u',w) \cup (v,w)$.
The algorithm returns $\hat{g}$ as the estimated girth, along with $\hat{C}$ as the corresponding cycle.

\begin{algorithm2e}[t]
\caption{$\Cycle(G,\alpha)$}\label{A-Cycle}
\tcc{Initialization phase}
$\hat{g}\gets \infty$, $\hat{C}\gets \emptyset$, $d\gets\HashTable()$, $\pi\gets\HashTable()$\\
$A_0\gets V$ ; $A_{\alpha+1}=\emptyset$ ; $A_1 \gets \Sample(A_0,n^{-1/(2\alpha+1)})$ \\
\lFor{$i \in [2,\alpha]$}{
    $A_i\gets \Sample(A_{i-1},n^{-2/(2\alpha+1)})$
}
$\Preprocess(G)$\\
Compute $p_i(u)$ and $d(u,p_i(u))$ for every $\pair{u,i}\in \pair{V,[\alpha]}$\\

\tcc{Phase 1:}
\For{$u\in V$}{
    Compute $G[B_0(u)]$ using~\Cref{L-Create_B_S}\\
    $\pair{\hat{g}',\hat{C}'}\gets \MinCycle(u,G[B_0(u)])$\\
    \If{$\hat{g}'\le \hat{g}$}{
        $\hat{g}\gets \hat{g}'$ ; $\hat{C}\gets \hat{C}'$
    }
}

\tcc{Phase 2:}
\For{$u\in A_1$}{
    $\pair{\hat{g}',\hat{C}'}\gets \ClusterOrCycle(u)$\\
    \If{$\hat{g}'\le \hat{g}$}{
        $\hat{g}\gets \hat{g}'$ ; $\hat{C}\gets \hat{C}'$
    }
}

\tcc{Phase 3:}
\For {$(v,w)\in E$}{ \label{L-Alg-Loop2}
    \For{$i\gets 0$ {\bf to} $\alpha$}
        {
        $u\gets p_i(v)$ ;
        $\hat{g}' \gets d(u,v) + \ell(v,w) + d(u,w)$ \;
        \If {$\hat{g}' < \hat{g}$ {\bf and} $\pi(u,v) \neq (w,v)$ {\bf and} $\pi(u,w) \neq (v,w)$ }{ 
        \label{L-Cycle-If-pi}
            $\hat{g} \gets \hat{g}'$ ;
            $\hat{C}\gets (u,v,w)$ \;
        }
    }
}
\lIf {$\hat{C}$ is a triplet $(u,v,w)$} {$u'\gets LCA(u,v,w)$; $\hat{C}=P(u',v) \cup P(u',w) \cup (v,w)$}
\Return $\pair{\hat{g},\hat{C}}$
\end{algorithm2e}

Next, we bound the running time of the algorithm and show that $\Cycle(G,k)$ takes $\Ot(m + n^{1 + 2/k}) = \Ot(m + n^{1 + 2/(2\alpha + 1)})$ time, since $k = 2\alpha + 1$.
\begin{lemma}\label{L-alg-runtime}
    $\Cycle(G,k)$ takes $O((m+kn^{1+2/(2\alpha+1)})\log{n} + \alpha m)$ time.
\end{lemma}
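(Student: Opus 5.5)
We bound the running time phase by phase, taking expectations over the random sampling and using linearity of expectation. Throughout, $k=2\alpha+1$; the last level, $A_\alpha$, has expected size $n^{1-(2\alpha-1)/k}=n^{2/k}$.

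\textbf{Initialization.} Sampling the hierarchy costs $O(\alpha n)$. Following $\Initialize$ of \cite{kadria2023improved}, we compute $p_i(u)$ and $\delta(u,p_i(u))$ for each $0\le i\le\alpha$. For each $i$ this is one Dijkstra run from a super-source attached to $A_i$, so the total is $O(\alpha(m+n\log n))$. We charge $\Preprocess(G)$ at the cost stated in \cite{kadria2023improved}, which is $O(m\log n)$ or better; it is dominated by the sorting of adjacency lists.

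\textbf{Phase 1.} Here we apply \Cref{L-Create_B_S} with $S=A_1$ and $x=1/k$. It computes all $G[B_0(u)]$ in $O((m+n^{1+2/k})\log n)$ time. The key point, and the reason for the half level, is the size of these subgraphs. We have $\EE|B_0(u)|=O(n^{1/k})$, so $G[B_0(u)]$ has at most $|B_0(u)|^2$ edges. Moreover, the lemma's output size, $\sum_u|E(G[B_0(u)])|$, is already bounded by its running time, so it is $O(n^{1+2/k}\log n)$ in expectation. Running $\MinCycle(u,G[B_0(u)])$ costs $O(|E(G[B_0(u)])|+|B_0(u)|\log n)$. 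Summing over $u$ gives $O(n^{1+2/k}\log n)$.

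\textbf{Phase 2.} By \Cref{L-ClusterOrCycle}, the call for $u\in A_1$ takes $O(|\Cl_V(u)|\log n)$ time. We need $\EE[\sum_{u\in A_1}|\Cl_V(u)|]=O(\alpha n^{1+2/k})$. We adapt the argument of \Cref{L-cluster-size} by swapping the sum: $\sum_u|\Cl_V(u)|=\sum_v|\{u: v\in \Cl_V(u)\}|$. For $u\in A_i\setminus A_{i+1}$, the condition $v\in\Cl_V(u)$ means $u\in B_i(v)$. So the sum equals $\sum_v\sum_{i\ge1}|B_i(v)|$.

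For $1\le i<\alpha$, each vertex of $A_i$ survives to $A_{i+1}$ with probability $n^{-2/k}$. Scanning $A_i$ in increasing distance from $v$ therefore gives $\EE|B_i(v)|\le n^{2/k}$ by the usual geometric argument. For $i=\alpha$, $B_\alpha(v)\subseteq A_\alpha$, and $\EE|A_\alpha|=n^{2/k}$. The total is $O(\alpha n^{1+2/k})$, and with the log factor it is within the bound. The main subtlety is this level-$\alpha$ bound together with the nonuniform rates; once the sum is swapped as above, it reduces to geometric tails.

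\textbf{Phase 3.} This phase does $\alpha+1$ constant-time hash lookups per edge, so it costs $O(\alpha m)$. The lookups are valid only if the tables $d$ and $\pi$ hold $d(p_i(v),v)$, $d(p_i(v),w)$ and the last edges $\pi(\cdot,\cdot)$. These are filled in during Initialization and Phase 2. Following \cite{kadria2023improved}, if an entry is missing we treat it as $\infty$ (skip it), which does not affect the time.

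Finally, the LCA computation and path reconstruction at the end cost $O(n)$. Adding up, and noting $\alpha\le k$, we get $O((m+kn^{1+2/(2\alpha+1)})\log n+\alpha m)$ in expectation, as claimed.
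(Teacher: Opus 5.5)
Your proof is correct and follows the paper's phase-by-phase accounting: initialization via super-source Dijkstra runs, Phase~1 via \Cref{L-Create_B_S} plus $\MinCycle$, Phase~2 via \Cref{L-ClusterOrCycle} and a cluster-size bound, and Phase~3 at $O(\alpha)$ per edge. The paper simply cites \Cref{L-cluster-size} (stated for uniform sampling) and writes $O(|B_0(u)|^2\log n)=O(n^{2/k}\log n)$ per vertex. You instead re-derive the cluster bound for the non-uniform hierarchy by swapping sums into $\sum_v\sum_{i\ge 1}|B_i(v)|$, and you bound the Phase~1 edge counts by the output size of \Cref{L-Create_B_S}; that output bound should read $O((m+n^{1+2/k})\log n)$ rather than $O(n^{1+2/k}\log n)$, which is harmless since $m\log n$ is already in the claimed bound.
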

\begin{proof}
    From \Cref{L-Init}, initializing the graph takes $O((m + kn)\log n)$ time.  
    Since $|A_1| = O(n^{1 - 1/(2\alpha+1)})$, it follows from \Cref{L-Create_B_S} that computing $G[B_0(u)]$ for every $u \in V$ takes $O(n^{1 + 2/(2\alpha+1)} \log n)$ time.
    Computing $\MinCycle(u, G[B_0(u)])$ requires \\$O(|E(G[B_0(u)])| \log n) = O(|B_0(u)|^2 \log n) = O(n^{2/(2\alpha+1)} \log n)$ time per vertex. Hence, computing $\MinCycle(u, G[B_0(u)])$ for all $u \in V$ takes a total of $O(n^{1 + 2/(2\alpha+1)} \log n)$ time.
    
    From \Cref{L-ClusterOrCycle}, computing $\ClusterOrCycle(u)$ takes $O(|\Cl(u)| \log n)$ time.  
    By \Cref{L-cluster-size}, we have $\mathbb{E}[\sum_{u \in A_1} |\Cl_V(u)|] = O(kn^{1 + 2/(2\alpha+1)})$. Therefore, computing $\ClusterOrCycle(u)$ for all $u \in A_1$ takes  
    \(
    O\left(\sum_{u \in A_1} |\Cl_V(u)| \log n \right) = O(kn^{1 + 2/(2\alpha+1)} \log n)
    \)
    time.
    
    In the final loop, for each edge $(v, w) \in E$, the algorithm iterates over $\alpha$ possible vertices, which costs $O(\alpha)=O(k)$ time per edge. Therefore, this loop takes $O(km)$ time.
    Overall, we get that the running time of the algorithm is $O((m + \alpha n^{1 + 2/(2\alpha+1)}) \log n + \alpha m)=\Ot(m+n^{1+2/k})$, as required.
\end{proof}

Next, we prove the following lemma, which is a main tool used in our correctness proof. 

\begin{lemma}\label{L-bound-h2}
Let $C$ be a shortest cycle; that is, $\ell(C)=g$. 
Let $u \in C$, and let $(x, y)$ be the farthest edge from $u$ in $C$, i.e., $(x,y)=\arg\max_{(x,y)\in C}(\delta(u,(x,y)))$.

Then either $\hat{g} \le 2g$ or $\min(\delta(A_2,x),\delta(A_2,y)) \le g$.
\end{lemma}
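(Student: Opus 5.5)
The plan is to follow the structure already sketched in the overview and split on whether $C$ survives the first ``half level'' around $u$. Fix $u\in C$ and let $(x,y)$ be the farthest edge of $C$ from $u$, with $\delta(u,x)\ge\delta(u,y)$ without loss of generality.

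\emph{Case 1: $C\subseteq G[B_0(u)]$.} Phase~1 runs $\MinCycle(u,G[B_0(u)])$. Since $u$ lies on a shortest cycle contained in that subgraph, $\MinCycle$ returns a cycle of length $g$, so $\hat g\le g$.

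\emph{Case 2: some $w\in C$ lies outside $B_0(u)$.} By the definition of $B_0(u)$, $\delta(u,A_1)\le\delta(u,w)\le\delta(u,x)$; here I use that $x$ is the farthest cycle vertex from $u$. Because $C$ is a shortest cycle through $u$, it equals $P(u,x)\cup P_2(u,x)$. This gives $g=\delta(u,x)+\ell(P_2(u,x))$, and every vertex of $C$ is within $g/2$ of $u$.

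Set $r=\delta(u,A_1)+\ell(P_2(u,x))$. I will show $C\subseteq G_r(p_1(u))$. Every edge $e$ of $C$ lies on $P(u,x)$ or on $P_2(u,x)$, so $\delta(u,e)\le\ell(P_2(u,x))$ (note $\ell(P(u,x))\le \ell(P_2(u,x))$). Adding $\delta(p_1(u),u)=\delta(u,A_1)$ via the triangle inequality gives $\delta(p_1(u),e)\le r$.

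Now split on whether $C\subseteq\Cl(p_1(u))$, noting that $p_1(u)\in A_1$ is processed in Phase~2.

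\emph{Case 2a: $C\subseteq\Cl(p_1(u))$.} Then $\Cl(p_1(u))\cap G_r(p_1(u))$ contains a cycle. By \Cref{L-ClusterOrCycle}, $\hat g\le 2r$, and
\[
2r\le 2\bigl(\delta(u,x)+\ell(P_2(u,x))\bigr)=2g.
\]

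\emph{Case 2b: some edge $(s,t)\in C$ is not in $\Cl(p_1(u))$.} The definition of $\Cl_E$ then gives $\delta(A_2,t)\le\delta(p_1(u),s)+\ell(s,t)$ for the orientation that fails. I distinguish two subcases.

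If $(s,t)\ne(x,y)$, the edge lies on a shortest path from $u$, say $P(u,y)$ or $P(u,x)$, oriented so that $\delta(u,t)=\delta(u,s)+\ell(s,t)$. Then
\[
\delta(A_2,t)\le\delta(u,A_1)+\delta(u,t)\le\delta(u,x)+\delta(u,t).
\]
Walking from $t$ to the endpoint ($y$ or $x$) of that shortest path yields $\delta(A_2,\cdot)\le\delta(u,x)+\delta(u,y)\le g/2+g/2=g$, or similarly with $x$.

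If $(s,t)$ is $(x,y)$ in either orientation, I get $\delta(A_2,x)\le\delta(p_1(u),y)+\ell(y,x)\le\delta(u,x)+\delta(u,y)+\ell(x,y)$. Since $P_2(u,x)$ is exactly $P(u,y)$ followed by the edge $(y,x)$, the right-hand side equals $g$. The other orientation is symmetric, using $\delta(u,y)\le\delta(u,x)$, and bounds $\delta(A_2,y)$ instead.

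The main obstacle is the orientation bookkeeping in Case~2b. The asymmetric definition of $\Cl_E$ only controls the endpoint that is farther along the shortest-path tree from $u$. I must make sure that the non-shortest-path orientation of a tree edge can also be handled: either it still bounds some endpoint's distance to $A_2$ by $g$, or it contradicts membership in the cluster.
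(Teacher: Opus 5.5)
Your write-up follows the paper's proof step for step: the same split on $C\subseteq G[B_0(u)]$, the same $r=\delta(u,A_1)+\ell(P_2(u,x))$ with $C\subseteq G_r(p_1(u))$, the same use of $\ClusterOrCycle(p_1(u))$, and the same final estimates. Your extra subcase $(s,t)\in P(u,x)$, which the paper leaves implicit, is fine, since it gives $\delta(A_2,x)\le\delta(u,A_1)+\delta(u,x)\le 2\delta(u,x)\le g$.

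The gap is the one you flag yourself and leave open. In Case 2b you only know that \emph{some} orientation of $(s,t)$ violates the $\Cl_E$ condition, and you then take it to be the favorable one without justification. The unfavorable orientations cannot be handled the way you suggest:
\begin{itemize}
\item For the far edge, failure of $(x,y)$ gives $\delta(A_2,y)\le\delta(u,A_1)+\delta(u,x)+\ell(x,y)$. Reaching $g=\delta(u,x)+\delta(u,y)+\ell(x,y)$ would require $\delta(u,A_1)\le\delta(u,y)$, but you only know $\delta(u,A_1)\le\delta(u,x)$, since all cycle vertices outside $B_0(u)$ may be farther from $u$ than $y$. So this orientation is not ``symmetric''.
\item For a tree edge with $\delta(u,t)=\delta(u,s)+\ell(s,t)$, failure of $(t,s)$ only gives $\delta(A_2,s)\le\delta(u,A_1)+\delta(u,s)+2\ell(s,t)$. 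When $(s,t)\in P(u,y)$, walking to $y$ then yields only $\delta(A_2,y)\le\delta(u,x)+\delta(u,y)+2\ell(s,t)$, which can exceed $g$.
\end{itemize}

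The missing observation is what the paper's ``Wlog'' rests on. An undirected edge $\{s,t\}$ belongs to the cluster graph as soon as one of $(s,t)$, $(t,s)$ satisfies the $\Cl_E$ condition. The paper uses this convention throughout: in Claim~\ref{C-bound-hi} and Subclaim~\ref{sub-cond-true} it concludes membership after checking a single orientation. It is also the reading under which Case 2a invokes Lemma~\ref{L-ClusterOrCycle}.

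Consequently, an edge of $C$ outside $\Cl(p_1(u))$ violates the condition in \emph{both} orientations, and you may always pick the favorable one. For a tree edge this is the orientation pointing away from $u$ along $P(u,x)$ or $P(u,y)$. For the far edge it is $(y,x)$, which is why the paper writes $(s,t)=(y,x)$ and uses $\delta(u,A_1)\le\delta(u,x)$ there. Add this sentence and delete the ``other orientation is symmetric'' claim; your argument is then complete and coincides with the paper's.
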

\begin{proof}
    Let $u \in C$, and let $(x, y)$ be the farthest edge from $u$ in $C$. Without loss of generality, assume that $\delta(u, x) \ge \delta(u, y)$.  
    If $C\subseteq G[B_0(u)]$, then we have\\ $\hat{g} \le \ell(\MinCycle(u,G[B_0(u)])) = g \le 2g$, and the claim holds. Therefore, for the remainder of the proof, we assume that $C \not\subseteq G[B_0(u)]$.
    
    Since $C \not\subseteq G[B_0(u)]$, there exists a vertex $w \in C$ such that $w \notin B_0(u)$. 
    From the definition of $B_0(u)$, we have $\delta(u, A_1) \le \delta(u, w)$. Since $x$ is the farthest vertex from $u$ in $C$, we get that:
    \begin{equation}\label{e-0}
        \delta(u,A_1) \le \delta(u,w) \le \delta(u,x)
    \end{equation}
    
    Since $C$ is a shortest cycle and $x\in C$, we have that $C=P(u,x)\cup P_2(u,x)$, and therefore $\ell(C)=\ell(P(u,x))+\ell(P_2(u,x))$. Thus,
    \begin{equation}\label{e-1}
        g=\ell(C)=\ell(P(u,x))+\ell(P_2(u,x))=\delta(u,x)+\ell(P_2(u,x)).
    \end{equation}

    Let $r = \delta(u, A_1) + \ell(P_2(u, x))$. We show that:
    \begin{cclaim}\label{C-C-in-G_r}
        $C \subseteq G_r(p_1(u))$
    \end{cclaim}
    \begin{proof}
        In order to prove that $C \subseteq G_r(p_1(u))$ we prove that for every $e\in C$ we have that $\delta(p_1(u),e)\le r$ and therefore $e\in G_r(p_1(u))$, as required.
        
        Since $C=P(u,x)\cup P_2(u,x)$ we have that either   $e\in P(u,x)$ or   $e\in P_2(u,x)$. If $e\in P(u,x)$ then we have that $\delta(u,e)\le \ell(P(u,x)) \le \ell(P_2(u,x))$, and therefore $\delta(p_1(u),e) \stactri \le \delta(A_1,u)+\ell(P_2(u,x))=r$, as required.\footnote{Throughout the paper, $\stactri\le $ is a step that follows from the triangle inequality.}
        If $e\in P_2(u,x)$ then we have that $\delta(u,e) \le \ell(P_2(u,x))$, and therefore $\delta(p_1(u),e) \stactri \le \delta(A_1,u) + \ell(P_2(u,x))$, as required.
    \end{proof}

    We divide the rest of the proof into the case that $C \subseteq \Cl(p_1(u))$, and the case that $C \not\subseteq \Cl(p_1(u))$.
    If $C \subseteq \Cl(p_1(u))$, then together with \Cref{C-C-in-G_r} we have that $C \subseteq G_r(p_1(u)) \cap \Cl(p_1(u))$, and therefore from \Cref{L-ClusterOrCycle} we have that:
    \begin{align*}
        \hat{g} &\le \ell(\ClusterOrCycle(p_1(u))) \le 2r=2(\delta(p_1(u),u) + \ell(P_2(u,x))) \\
        &\stackrel{\ref{e-0}}{\le} 2(\delta(u,x)+\ell(P_2(u,x))) \stackrel{\ref{e-1}}{\le} 2g,
    \end{align*}
    as required.

    If $C \not\subseteq \Cl(p_1(u))$, then we must have a cycle edge $(s,t)\in C$ such that $(s,t)\notin \Cl(p_1(u))$.
    If $(s,t)\neq (x,y)$ (recall that $(x, y)$ is the farthest edge from $u$ in $C$), then since $C$ is a shortest cycle we have that $(s,t)\subseteq P(u,t)$ or $(t,s)\subseteq P(u,s)$, and therefore $\delta(u,(s,t))=\delta(u,s)$ or $\delta(u,(s,t))=\delta(u,t)$. Wlog, we assume that $\delta(u,(s,t))=\delta(u,s)+\ell(s,t)=\delta(u,t)$.
    Since $(s,t)\notin \Cl(p_1(u))$, it follows from the definition of $\Cl(p_1(u))$ that
    \begin{equation}\label{e-A-2-t}
        \delta(A_2,t) \le \delta(p_1(u),s)+\ell(s,t) \stactri\le \delta(A_1,u)+\delta(u,s) + \ell(s,t) = \delta(A_1,u) + \delta(u,t)
    \end{equation}
    
    Since $(s,t)\neq (x,y)$ it follows that $(s,t) \in P(u,x) \cup P(u,y)$. (See \Cref{fig:c_6_1} for an illustration of this case.)
    If $(s, t) \in P(u, y)$, then $\delta(u, y) = \delta(u, t) + \delta(t, y)$. Therefore:
    \begin{align*}
        \delta(A_2, y) &\stactri\le \delta(y,t)+\delta(A_2,t) \stackrel{\ref{e-A-2-t}}\le \delta(y,t)+\delta(p_1(u),u) + \delta(u,t)
        \\&\stackrel{x\notin B_0(u)}\le \delta(u,t)+\delta(t,y)+\delta(u,x) \stackrel{t\in P(u,y)}\le \delta(u,y)+\delta(u,x) \le g/2+g/2=g,
    \end{align*}
    where the last inequality follows from the fact that $\delta(u, z) \le \frac{g}{2}$ for every $z \in C$, as required. 

    If $(s, t) = (y, x)$, then $(y, x) \notin \Cl(p_1(u))$, and by the definition of $\Cl(p_1(u))$ we have $\delta(A_2,x) \le \delta(p_1(u),y)+\ell(y,x)$. Therefore:
    \begin{align*}
        \delta(A_2,x) &\le \delta(p_1(u),y) + \ell(y,x) \stactri\le \delta(p_1(u),u)+\delta(u,y)+\ell(y,x) \\
        &\stackrel{\ref{e-0}}\le \delta(u,x)+\ell(P_2(u,x)) \stackrel{\ref{e-1}}{=} g,
    \end{align*}
    as required.
\end{proof}

Next, using \Cref{L-bound-h2} we bound the value of $\hat{g}$ from above by $\frac{2}{3}(2\alpha+1)g$.
Let $C$ be a shortest cycle in $G$, i.e. $\ell(C)=g$. Recall that $M(C)$ is the length of a longest edge in $C$. 
To prove that $\hat{g} \le \frac{2}{3}(2\alpha+1)g$, we consider the following two cases: the case that $M(C) \le g/3$ (\Cref{L-Corr-small}) and the case that $M(C) > g/3$  (\Cref{L-corr-big}).\footnote{We remark that the distinction between $M(C) \le g/3$ and $M(C) > g/3$ was also considered in the correction proof of \cite{kadria2023improved}.}
If $M(C) \le g/3$, then we show that in the first two loops of the algorithm, a cycle of length at most $\frac{2}{3}(2\alpha+1)g$ is found. Otherwise, if $M(C) > g/3$, we consider a longest edge $(u, u') \in C$ and show that either a short cycle is found during the first two loops, or that while considering the edge $(u,u')$ in the third loop of the algorithm a cycle of length at most $\frac{2}{3}(2\alpha+1)g$ is found.
We start with the case where $M(C) \le g/3$. 
\begin{lemma}\label{L-Corr-small}
    If $M(C) \le g/3$ then $\hat{g}\le \frac{2}{3}(2\alpha+1)g$
\end{lemma}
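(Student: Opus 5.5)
The plan is to follow the technical overview. Set $h_i(C)=\min_{w\in A_i}\delta(w,C)$. First bound $h_2(C)\le g$ unless $\hat{g}\le 2g$, using \Cref{L-bound-h2}. Then show that for every $3\le i\le \alpha+1$, either $\hat{g}\le 2h_{i-1}(C)+\frac43 g$ or $h_i(C)\le h_{i-1}(C)+\frac23 g$. Since $A_{\alpha+1}=\emptyset$ gives $h_{\alpha+1}(C)=\infty$, the second alternative must fail at some level $i\le\alpha+1$. Take the first such level and telescope: $h_{i-1}(C)\le h_2(C)+\frac23(i-3)g\le g+\frac23(\alpha-2)g$. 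This yields $\hat{g}\le 2g+\frac43(\alpha-2)g+\frac43 g=\frac23(2\alpha+1)g$. The corner case $\alpha=1$ ($k=3$) has no levels beyond $A_2=\emptyset$. There \Cref{L-bound-h2} directly gives $\hat{g}\le 2g$, because $\delta(A_2,\cdot)=\infty$.

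For the first step, pick any $u\in C$ and apply \Cref{L-bound-h2}. Either $\hat{g}\le 2g\le\frac23(2\alpha+1)g$, and we are done, or some endpoint of the farthest edge is within distance $g$ of $A_2$, so $h_2(C)\le g$.

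The inductive step should be modeled on the even-$k$ argument of \cite{kadria2023improved}. Fix $i\ge 3$ and let $c\in C$ and $w=p_{i-1}(c)\in A_{i-1}$ realize $h_{i-1}(C)=\delta(w,c)$. Note that $i-1\ge 2$, so $w\in A_1$ and $\ClusterOrCycle(w)$ is run in Phase 2. By \Cref{lem:main} with $r=h_{i-1}(C)$ and $M(C)\le g/3$, we get $C\subseteq G_{r+\frac23 g}(w)$. There are two cases.

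\textbf{Case $C\subseteq \Cl(w)$.} Then $\Cl(w)\cap G_{r+\frac23 g}(w)$ contains a cycle, and \Cref{L-ClusterOrCycle} gives $\hat{g}\le 2(h_{i-1}(C)+\frac23 g)=2h_{i-1}(C)+\frac43 g$.

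\textbf{Case $C\not\subseteq\Cl(w)$.} Some vertex or edge of $C$ fails the cluster condition. For an edge $(v,z)$ this means $\delta(A_i,z)\le \delta(w,v)+\ell(v,z)$, and the right-hand side is at most $\delta(w,(v,z))\le h_{i-1}(C)+\frac23 g$ after choosing the orientation so that $\delta(w,v)$ is the smaller endpoint distance. For a vertex $v$ the condition is similar. In either case $h_i(C)\le h_{i-1}(C)+\frac23 g$.

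I expect the main obstacle to be this edge-orientation subtlety. The definition of $\Cl_E$ is asymmetric, and \Cref{lem:main} bounds $\min\{\delta(w,v),\delta(w,z)\}+\ell(v,z)$. So I must check that a failing edge can be taken in the direction that starts at the endpoint closer to $w$. If both directions of an edge fail, either one works. If only the far direction fails, I would argue via \Cref{L-shortest-path-cluster}: the closer endpoint lies in $\Cl_V(w)$, and failure from the far side still gives a bound at most $\delta(w,(v,z))$ on the appropriate endpoint.

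A secondary check is that the $h_2$ bound from \Cref{L-bound-h2} is measured to a vertex of $C$, which is exactly what the induction uses.
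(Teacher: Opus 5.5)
Your proposal is correct and is essentially the paper's proof. The main line matches: first, \Cref{L-bound-h2} gives $h_2(C)\le g$ or else $\hat g\le 2g$. Then at each level, \Cref{lem:main} with $M(C)\le g/3$ places $C$ in the ball of radius $h_{i-1}(C)+\frac23 g$ around $p_{i-1}(c)$. So either $C$ lies in that cluster and $\ClusterOrCycle$ gives $\hat g\le 2h_{i-1}(C)+\frac43 g$, or some vertex of $C$ is within $h_{i-1}(C)+\frac23 g$ of $A_i$. The paper states this dichotomy contrapositively, and your ``first failing level'' telescoping, with the explicit $\alpha=1$ case, is a slightly cleaner version of its iteration.

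The orientation worry dissolves. An edge of $C$ belongs to the cluster graph as soon as one orientation satisfies the condition, so any edge witnessing $C\not\subseteq \Cl(w)$ fails in the closer-endpoint orientation. Hence the ``only the far direction fails'' sub-case never arises. The bound you sketch for that sub-case would not follow anyway, since far-side failure only gives $\delta(A_i,v)\le \delta(w,z)+\ell(v,z)$.
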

\begin{proof}
Let $h_i(C)=\min_{w\in A_i}(\delta(w,C))$. (See \Cref{fig:h_i} for an illustration of the $h_i$.)
To prove the lemma, we prove the following claim (\Cref{C-bound-hi}). For every $2 \leq i \leq \alpha$, either  
$h_{i+1}(C) \leq h_i(C) + \frac{2}{3}g$, or $\hat{g} \leq 2h_i(C) + \frac{4}{3}g$.

Using this claim, we prove the lemma as follows. From~\Cref{L-bound-h2}, we have that either $h_2(C) \le g$ or $\hat{g} \le 2g$. 
If $\hat{g} \le 2g$, then the lemma holds since the approximation guarantee is met. Therefore, for the remainder of the proof, we assume that $h_2(C) \le g$.
 
By applying \Cref{C-bound-hi} $(\alpha - 2)$ times, starting from $h_2(C)$, we conclude that either
\[
h_{\alpha+1}(C) \le h_2(C) + (\alpha - 1) \cdot \frac{2}{3}g
\quad \text{or} \quad
\hat{g} \le 2\left(h_2(C) + \frac{2}{3}(\alpha - 2)g\right) + \frac{4}{3}g.
\]  
The first inequality leads to a contradiction, since $A_{\alpha+1} = \emptyset$, and therefore $h_{\alpha+1}(C) = \infty$.  
Therefore, the second inequality must hold, and we have that  
\[
\hat{g} \le 2(h_2(C)+\frac{2}{3}(\alpha-2)g)+\frac{4}{3}g \stackrel{h_2(C)\le g}\le 2\left(g + \frac{2}{3}(\alpha - 2)g\right) + \frac{4}{3}g 
= \frac{2}{3}(2\alpha + 1)g,
\]
as required.
Therefore, to complete the proof of the lemma, it remains to prove the following claim.
\begin{cclaim}\label{C-bound-hi}
    Either $\hat{g} \le 2h_i(C)+4g/3$ or $h_{i+1}(C) \le h_i(C)+\frac{2}{3}g$
\end{cclaim}
\begin{proof}
    Let $w\in C$ such that $\delta(p_i(w),w)=h_i(C)$.
    % Let $w\in A_i$ such that $\delta(w,C)=h_i(C)$. 
    If there exists $x\in C$ such that $\delta(A_{i+1},x) \le h_i(C)+\frac{2}{3}g$ then $h_{i+1}(C) \le \delta(A_{i+1}, x) \le h_i(C)+\frac{2}{3}g$, and the claim holds. Therefore, for the rest of the proof, we assume that $\delta(A_{i+1}, x) > h_i(C)+2g/3$, for every vertex $x\in C$, and show that in this case $\hat{g} \le 2h_i(C)+4g/3$. 
    
    We will show that in this case $C\subseteq \Cl\left(p_i(w)) \cap G_{h_i(C)+2g/3}(p_i(w)\right)$, and therefore from\\ \Cref{L-ClusterOrCycle} we have that $\hat{g}\le \ell(\ClusterOrCycle(w)) \le 2(h_i(C)+2g/3)=2h_i(C)+4g/3$, as required.

    First, we show that $C\subseteq  G_{h_i(C)+2g/3}(p_i(w))$.
    From \Cref{lem:main} with $r=h_i(C)$, we have that $C\subseteq G_{\delta(p_i(w),w)+\frac{1}{2}(\ell(C)+M(C))}(p_i(w))$. 
    Since $\delta(p_i(w),w)=h_i(C)$, we have that \\$C\subseteq G_{h_i(C)+\frac{1}{2}(\ell(C)+M(C))}(p_i(w))$.
    Since $M(C)\le g/3$, we have that $\frac{1}{2}(\ell(C)+M(C)) \le 2g/3$ and therefore $C\subseteq G_{h_i(C)+2g/3}(p_i(w))$, as required.

    Next, we show that $C\subseteq \Cl(p_i(w))$. Let $(s,t)\in C$, we will show that $(s,t)\in \Cl(p_i(w))$.
    Since $C\subseteq G_{h_i(C)+2g/3}(p_i(w))$ we have that $\delta(p_i(w), (s,t))\le h_i(C)+2g/3$.
    Recall that we are in the case where for every vertex $x\in C$ we have $\delta(x, A_{i+1}) > h_i(C)+2g/3$. Therefore, we get that: 
    $\delta(p_i(w), (s,t))\le h_i(C)+2g/3 < \delta(t, A_{i+1})$. Therefore, from the definition of $\Cl(p_i(w))$ we get that $(s,t)\in \Cl(p_i(w))$, as required.
\end{proof}
\end{proof}

Next, we show that $\hat{g}\le \frac{2}{3}(2\alpha+1)g$ in the case where $M(C) > g/3$.
\begin{lemma}\label{L-corr-big}
    If $M(C) > g/3$ then $\hat{g}\le \frac{2}{3}(2\alpha+1)g$.
\end{lemma}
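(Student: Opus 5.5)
The plan mirrors the proof of \Cref{L-Corr-small}, but tracks $h_i'(C)=\min(\delta(u,A_i),\delta(u',A_i))$ for a longest edge $(u,u')\in C$ (so $\ell(u,u')=M(C)>g/3$) instead of $h_i(C)$. The goal is three facts:
\begin{enumerate}
\item[(i)] either $\hat g\le 2g$ or $h_2'(C)\le g$;
\item[(ii)] for every $2\le i\le\alpha$, either $\hat g\le 2h_i'(C)+\frac43 g$ or $h_{i+1}'(C)\le h_i'(C)+\frac23 g$;
\item[(iii)] since $h_{\alpha+1}'(C)=\infty$, chaining (ii) from $h_2'(C)\le g$ gives $\hat g\le 2(g+\frac23(\alpha-2)g)+\frac43 g=\frac23(2\alpha+1)g$.
\end{enumerate}
Step (iii) is the same arithmetic as in \Cref{L-Corr-small}, so the work is in (i) and (ii).

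For (i) I would use \Cref{L-bound-h2} with a well-chosen start vertex $c\in C$: the endpoint of the edge containing the point of $C$ antipodal to the midpoint of $(u,u')$. For every edge $e\in C$ we have $\delta(c,e)\le\frac12(g+\ell(e))$, and for $e=(u,u')$ the bound is essentially attained. Since $M(C)$ is the largest edge length, this should make $(u,u')$ a farthest edge from $c$ (breaking ties in its favour). \Cref{L-bound-h2} then gives $\hat g\le 2g$ or $\min(\delta(A_2,u),\delta(A_2,u'))=h_2'(C)\le g$. If choosing $c$ this way is awkward, the fallback is to rerun the proof of \Cref{L-bound-h2} with the roles of $x,y$ played by $u,u'$.

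For (ii), let $v\in\{u,u'\}$ attain $h_i'(C)$, say $v=u$, and set $a=p_i(u)$. Suppose $h_{i+1}'(C)>h_i'(C)+\frac23g$. I want to show that the relevant vertices and edges of $C$ lie in $\Cl(a)$, so that $d(a,\cdot)$ and $\pi(a,\cdot)$ are valid and \Cref{L-ClusterOrCycle} applies. First, $\delta(a,u)=h_i'(C)$. Second, $\delta(a,u')\le h_i'(C)+g-M(C)<h_i'(C)+\frac23g$, going around $C$ the other way. These bounds are compared with $\delta(A_{i+1},\cdot)>h_i'(C)+\frac23 g$ at $u,u'$, and the same comparison is needed at the intermediate vertices of $C\setminus(u,u')$, which is where $\delta(A_{i+1},\cdot)$ at non-endpoint vertices must be controlled via triangle inequality through $u$ or $u'$.

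Then I look at Phase 3 for the edge $(u,u')$ with index $i$, so the root is $p_i(u)=a$. If $(u,u')$ closes a cycle in the shortest-path tree of $a$, meaning $\pi(a,u')\ne(u,u')$ and $\pi(a,u)\ne(u',u)$, the algorithm records
\[
\hat g\le d(a,u)+\ell(u,u')+d(a,u')\le h_i'(C)+M(C)+h_i'(C)+g-M(C)=2h_i'(C)+g,
\]
which is within the required bound. The remaining case is $(u,u')$ being a tree edge. Then the path $C\setminus(u,u')$ from $u$ to $u'$, together with the tree, yields a cycle inside $\Cl(a)\cap G_r(a)$ with $r\le h_i'(C)+\frac12(g-M(C))+M(C)$ or better. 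I expect to bound this by $h_i'(C)+\frac23g$ using $M(C)>g/3$, possibly by arguing that $C\setminus(u,u')$ lies in a ball of radius $h_i'(C)+(g-M(C))<h_i'(C)+\frac23 g$ and contains a non-tree edge. Then \Cref{L-ClusterOrCycle} in Phase 2 gives $\hat g\le 2h_i'(C)+\frac43g$.

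The main obstacle is this tree-edge/cluster-membership case analysis. One must guarantee simultaneously that $u,u'$ (and the non-tree edge used) lie in $\Cl(a)$, so that the hash-table values are meaningful, and that the radius at which \Cref{L-ClusterOrCycle} first sees a cycle is at most $h_i'(C)+\frac23 g$. Failure of cluster membership must translate precisely into $h_{i+1}'(C)\le h_i'(C)+\frac23g$. I would first try to handle it by the asymmetric definition of $\Cl_E$, applied to the edge of $C$ that leaves the cluster, as in the last case of \Cref{L-bound-h2}.
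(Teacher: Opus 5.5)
There is a genuine gap: the tree-edge case of the inductive step is never proved. Your outline otherwise follows the paper's proof. The base case $h_2'(C)\le g$ comes from \Cref{L-bound-h2} applied at a vertex of $C$ whose farthest edge is $(u,u')$. The level-$i$ step splits on whether $(u,u')$ is a non-tree or tree edge of the shortest-path tree of $a=p_i(u)$. In the non-tree case Phase~3 gives $\hat g\le 2h_i'(C)+g$, exactly as you compute. The tree-edge case is the content of \Cref{sub-cond-true}, and it is left as ``the main obstacle''. Moreover, the only radius you write down, $h_i'(C)+\frac12(g+M(C))$ from \Cref{lem:main}, exceeds $h_i'(C)+\frac23 g$ precisely because $M(C)>g/3$.

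There is also a smaller hole. Lying in $\Cl(a)$ does not by itself make $d(a,u')$ and $\pi(a,\cdot)$ available in Phase~3, because $\ClusterOrCycle(a)$ may stop early. The paper splits on the smallest $r$ for which $\Cl(a)\cap G_r(a)$ contains a cycle. If $r\le h_i'(C)+(g-M(C))$, Phase~2 already gives $\hat g\le 2r$; otherwise $u,u'\in\Cl(a)\cap G_{<r}(a)$ and the stored values are correct. Note also that $a\notin A_{i+1}$, since otherwise $\delta(u,A_{i+1})\le h_i'(C)$.

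In (i), ``the bound is essentially attained'' does not show that $(u,u')$ is a farthest edge. Take the endpoint $c$ \emph{nearer} to the antipodal point; that edge has length at most $M(C)$, so $c$ is within $M(C)/2$ of it, and hence the antipode of $c$ lies on $(u,u')$. Then $\delta(c,(u,u'))\ge g/2$, while every edge not containing the antipode of $c$ is at distance at most $g/2$ from $c$.

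The missing argument is short once the orientation is fixed. The orientation $\pi(a,u)=(u',u)$ is impossible: it gives $\delta(u',A_i)\le\delta(a,u')=\delta(a,u)-M(C)<\delta(u,A_i)$, contradicting the choice of $u$. So suppose $\pi(a,u')=(u,u')$. Then $\delta(a,(u,u'))\le\delta(a,u')\le h_i'(C)+(g-M(C))$. Also $(u,u')\in\Cl_E(a)$ by \Cref{L-shortest-path-cluster}, as the last edge of a shortest path to $u'\in\Cl_V(a)$.

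For the other edges, let $C'(u,u')=C\setminus\{(u,u')\}$ and orient each $(s,t)\in C'(u,u')$ from $u$ towards $u'$.
\begin{itemize}
\item Going from $a$ through $u$ gives $\delta(a,s)+\ell(s,t)\le h_i'(C)+(g-M(C))-\ell(C'(t,u'))$. This also places $(s,t)$ in the ball of radius $h_i'(C)+(g-M(C))$.
\item Going from $t$ to $A_{i+1}$ through $u'$ gives $\delta(t,A_{i+1})\ge\delta(u',A_{i+1})-\ell(C'(t,u'))>h_i'(C)+(g-M(C))-\ell(C'(t,u'))$.
\end{itemize}
This lower bound must be routed through $u'$, the endpoint the edge points to; routing it through $u$ fails for edges far from $u$. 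Your hypothesis $h_{i+1}'(C)>h_i'(C)+\frac23 g$ suffices for the strict inequality, because $g-M(C)<\frac23 g$. The paper instead assumes the sharper threshold $g-M(C)$.

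Hence $C\subseteq\Cl(a)\cap G_{h_i'(C)+(g-M(C))}(a)$, and \Cref{L-ClusterOrCycle} yields $\hat g\le 2h_i'(C)+2(g-M(C))<2h_i'(C)+\frac43 g$. With this inserted, your chaining in (iii) goes through as written.
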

\begin{proof}
    Let $(u,u')\in C$ such that $\ell(u,u')=M(C)$. Let $h_i'(C)=\min(\delta(u, A_{i}), \delta(u', A_{i}))$ (see \Cref{fig:h_i} for an illustration). 
    The proof follows the same framework as \Cref{L-Corr-small}, with one crucial difference: instead of bounding $h_i(C)$, we bound $h_i'(C)$.
    To prove the lemma, we prove the following claim (\Cref{C-induction-heavy}):
    for every $2 \leq i \leq \alpha$, either  
    $h_{i+1}'(C) \leq h_i'(C) + \frac{2}{3}g$,  
    or $\hat{g} \leq 2h_i'(C) + \frac{4}{3}g$.
    
    % This is similar to \Cref{L-bound-h2}. 
    Using this claim, we prove the lemma as follows.
    First, notice that since $(u,u')$ is a longest edge in $C$, there exists $v\in C$ such that $\delta(v,u)\le g/2$ and $\delta(v,(u,u')) > g/2$, and therefore $(u,u')$ is the farthest edge from $v$ in $C$.
    Therefore, we can apply \Cref{L-bound-h2} on $v$ and $(u,u')$ and get that either $\delta(u,A_2) \le g$ or $\delta(u',A_2) \le g$ or $\hat{g} \le 2g$.
    If $\hat{g} \le 2g$, then the lemma holds. Therefore, for the remainder of the proof we assume that $\delta(u,A_2) \le g$ or $\delta(u',A_2) \le g$. From the definition of $h_2'(C)$ we get that $h_2'(C) \le g$. 
     
    By applying \Cref{C-induction-heavy} $(\alpha - 2)$ times, starting from $h_2'(C)$, we obtain that either  
    \[
    h_{\alpha+1}'(C) \le h_2'(C) + (\alpha - 1) \cdot \frac{2}{3}g
    \quad \text{or} \quad
    \hat{g} \le 2\left(h_2'(C) + \frac{2}{3}(\alpha - 2)g\right) + \frac{4}{3}g.
    \]  
    The first inequality leads to a contradiction, since $A_{\alpha+1} = \emptyset$, and therefore $h_{\alpha+1}'(C) = \infty$.  
    Therefore, the second inequality must hold, and we have that  
    \[
    \hat{g} \le 2(h_2'(C)+\frac{2}{3}(\alpha-2)g)+\frac{4}{3}g \stackrel{h_2'(C)\le g}\le 2\left(g + \frac{2}{3}(\alpha - 2)g\right) + \frac{4}{3}g 
    = \frac{2}{3}(2\alpha + 1)g,
    \]
    as required.
    Therefore, to complete the proof of the lemma, we only need to prove the following claim.
    \begin{cclaim}\label{C-induction-heavy}
        Either $h_{i+1}'(C)\le h_i'(C)+2g/3$ or $\hat{g} \le 2h_i'(C)+4g/3$
    \end{cclaim}
    \begin{proof}
        Assume, without loss of generality, that $\delta(u, A_i) \le \delta(u', A_i)$, and therefore $\delta(u,A_i)=h_i'(C)$.
        If $\min\{\delta(u, A_{i+1}),\delta(u', A_{i+1})\} \le h_i'(C)+ (g-M(C))$ then the claim holds due to our assumption that $M(C)\geq g/3$.
        Therefore, for the rest of the proof, we assume that 
        \begin{equation}\label{eq:uaplus}
        \delta(u', A_{i+1}) > h_i'(C)+ (g-M(C))\textrm{ and } \delta(u, A_{i+1}) > h_i'(C) + (g-M(C)).\end{equation}

        Next, we show that $u'\in \Cl_V(p_{i}(u))\cap V_{h_i'(C)+ (g-M(C))}(p_{i}(u))$. Since $\ell(C) = g$, $(u, u')\in C$ and $\ell(u,u')=M(C)$ we get that $\delta(u,u') = \min\{M(C), g-M(C)\}\le g-M(C)$. From the triangle inequality we get that $\delta(p_i(u),u')\le \delta(p_i(u),u)+g-M(C)=h_i'(C)+g-M(C)$. Therefore, $u'\in V_{h_i'(C)+ (g-M(C))}(p_{i}(u))$. Since $\delta(p_i(u),u') \le h_i'(C)+g-M(C) < \delta(u', A_{i+1})$, we get that $u'\in \CL_V(p_i(u))$. Hence, $u'\in \Cl(p_{i}(u))\cap G_{h_i'(C) + (g-M(C))}(p_{i}(u))$.

        Let $r$ be the smallest number such that $\Cl(p_{i}(u))\cap G_r(p_i(u))$ contains a cycle. If $r\leq h_i'(C) + (g-M(C))$ then by Lemma~\ref{L-shortest-path-cluster} $\ClusterOrCycle(p_{i}(u))$, when called, finds a cycle of length at most $2r\leq 2h_i'(C) + 2(g-M(C)) \le 2h_i'(C) + 4g/3$, as required. 
        Thus, we assume that $r > h_i'(C) + (g-M(C))$. 

        Since $r > h_i'(C) +  (g-M(C))$ it follows from ~\Cref{L-ClusterOrCycle} that $\ClusterOrCycle(p_{i}(u))$
        computes $d(p_i(u),u')=\delta(p_i(u),u')$ and a shortest paths tree rooted at $p_i(u)$ that contains $u'$. 

        Next, we show that when $\Cycle$ considers the edge $(u,u')$ 
        it holds that $\pi(p_i(u), u)\neq (u',u)$ and $\pi(p_i(u), u')\neq (u,u')$.
        
        % \avi{there are many typos in the proof:}
        \begin{subclaim}\label{sub-cond-true}
            Either
            $\pi(p_i(u), u)\neq (u',u)$ and $\pi(p_i(u), u')\neq (u,u')$ 
            or $\hat{g}\le 2h_i'(C)+4g/3$
        \end{subclaim}
        \begin{proof}
            We first show that $\pi(p_i(u), u)\neq (u',u)$. Assume for the sake of contradiction that \\ $\pi(p_i(u), u) = (u',u)$.  This implies that  $\delta(p_i(u), u')< \delta(p_i(u), u)$. 
            Since it always holds that $\delta(u', A_i)\leq \delta(p_i(u), u')$, we get that $\delta(u', A_i) < \delta(u, A_i)$, a contradiction to our assumption that $\delta(u, A_i) \le 
            \delta(u', A_i)$.
            
            We now show that $\pi(p_i(u), u')\neq (u,u')$.
            Assume, for the sake of contradiction, that \\ $\pi(p_i(u), u')= (u,u')$. This implies that 
            $\delta(p_i(u),(u,u'))\leq \delta(p_i(u), u')\leq h_i'(C) + (g-M(C))$, and hence 
            $(u,u')$ is in $G_{h_i'(C)+ (g-M(C))}(p_{i}(u))$. 
            
            Since $u'$ is in $\Cl(p_{i}(u))$ it follows from Lemma~\ref{L-shortest-path-cluster} that the shortest path between $p_i(u)$ and $u'$ is in $\Cl(p_{i}(u))$, thus its last edge $(u,u')$ is in $\Cl(p_{i}(u))$.

            Consider the path $C'(u,u')$ between $u$ and $u'$ that uses the edges of $C\setminus \{(u,u')\}$. Its length is $g-M(C)$. The concatenation of $P(p_i(u), u)$ with $C'(u,u')$ is a path between $p_i(u)$ and $u'$ of length at most $h_i'(C)+(g-M(C))$, and thus the distance between $p_i(u)$ and each of the edges $C\setminus \{(u,u')\}$ is at most $h_i'(C)+(g-M(C))$ which implies the edges of $C\setminus \{(u,u')\}$ are in $G_{h_i'(C)+(g-M(C))}(p_{i}(u))$.

            Let $(s,t) \in C'(u,u')$, and assume that when traversing $C'(u,u')$ from $u$ to $u'$, the vertex $s$ is encountered before $t$.
            % Let $(s,t)\in C'(u,u')$ and assume that when going from $u$ to $u'$ on $C'(u,u')$ we first encounter $s$.
            Let $C'(t,u')$ be the path from $t$ to $u'$ in $C$ avoiding the edge $(u,u')$, and let $C'(u,s)$ be the path from $u$ to $s$. Therefore, $\ell(C'(u,s))=\ell(C)-\ell(s,t)-\ell(C'(t,u'))=g-M(C)-\ell(C'(t,u'))$.
            
            Next we show that $(s,t)$ satisfies $\delta(p_i(u),s)+\ell(s,t)<\delta(t, A_{i+1})$, and thus is in $\Cl(p_{i}(u))$. 
            From the triangle inequality, we get that 
            $$\delta(p_i(u),s)\stactri\leq \delta(p_i(u),u)+\delta(u,s) \leq \delta(p_i(u),u)+g-M(C)-\ell(s,t) - \ell(C'(t,u'))$$ 
            By adding $\ell(s,t)$ to both sides we get that: 
            $$\delta(p_i(u),s)+\ell(s,t)\leq  \delta(p_i(u),u)+g-M(C) - \ell(C'(t,u'))$$
            Since $\delta(u', A_{i+1}) \leq \ell(C'(t,u')) + \delta(t, A_{i+1})$ we get that
            $$\delta(p_i(u),s)+\ell(s,t) \leq h_i'(C) + g-M(C)-\ell(C'(t,u')) \stackrel{\ref{eq:uaplus}}{<}  \delta(u', A_{i+1}) - \ell(C'(t,u')) \stactri{\leq} \delta(t, A_{i+1}),$$ 
            and therefore $(s,t)\in \Cl(p_{i}(u))$, as required. 
            
            Overall, we get that $C\subseteq G_{h_i'(C)+ (g-M(C))}(p_{i}(u)) \cap \Cl(p_i(u))$ and therefore from~\Cref{L-ClusterOrCycle} we have that $\hat{g}\le 2(h_i'(C)+ (g-M(C))) \le 2h_i'(C)+4g/3$, as required.
        \end{proof}

        From~\Cref{sub-cond-true}, we have that either the condition of~\Cref{L-Cycle-If-pi} holds or the lemma holds. If the condition of~\Cref{L-Cycle-If-pi} holds we get that:
        % Since $\delta(p_{i}(u), u') \stactri\le \delta(A_i,u) + (g-M(C))$, $\delta(p_{i}(u),u) = \delta(u, A_i) \le h_i'(u)$, and $\ell(u, u')=M(C)$ we get that:
        \begin{align*}
            \hat{g} &\le \delta(p_{i}(u),u) + \delta(p_{i}(u), u') + \ell(u,u') \\
            &\stactri\le 2\delta(p_{i}(u),u) + \delta(u,u') + \ell(u,u')
            \\&\stackrel{h_i'(C)\le \delta(p_{i}(u),u)}\le 2h_i'(u) + \delta(u,u') + \ell(u,u') \\
            &\stackrel{\delta(u,u')\le (g-M(C))}\le 2h_i'(u) + (g-M(C)) + \ell(u,u')
            \\&\stackrel{\ell(u,u')=M(C)} \le 2h_i'(u) + (g-M(C)) + M(C) = 2h_i'(u) + g,
        \end{align*}
        and algorithm $\Cycle$ finds a cycle of length at most $2h_i'(u) + g \leq 2h_i'(u) + 4g/3$, as required.
    \end{proof}
\end{proof}

\Cref{T-main-odd} follows from \Cref{L-Corr-small,L-corr-big,L-alg-runtime}.
\section{Proof of \Cref{C-Correctness-jin2023}}\label{S-LB-full}

    Recall the construction of \cite{jin2023removing}.
    Let $A\subseteq [\pm n^2]$ be the input for \ThreeSum. Let $p,q,r \in [2C_1n^{2/3}\log^2{n},4C_1n^{2/3}\log^2{n}]$ be three distinct primes, sampled uniformly at random. Let $A'=\{a\pmod{pqr} \mid a\in A\}$. Let $U$ (resp. $V$,$W$) be a vertex set identified by all numbers $[0,pqr)$ that are congruent to $0$ mod $p$ (resp. $q$,$r$). We add an edge between $u\in U$ and $v\in V$ if $v-u\pmod{pqr}\in A'$, between $v\in V$ and $w\in W$ if $w-v\pmod{pqr}\in A'$ and between $w\in W$ and $u\in U$ if $u-w\pmod {pqr}\in A'$. See \Cref{fig:h_G_uvw} for an illustration of this graph. %\input{figures/figure_G_UVW}

    \cite{jin2023removing} shows that the graph has a triangle if and only if there is a solution for \ThreeSum in $A$.
    \begin{claim}[\Cref{C-Correctness-jin2023}]
        The graph has a triangle if and only if there is a solution for \ThreeSum in $A$.
    \end{claim}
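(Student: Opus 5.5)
We prove the two directions separately. Throughout we use that $pqr=\Theta(n^2\log^6 n)$ exceeds $3\max_{a\in A}|a|$, so that any linear relation among at most three elements of $A$ that holds modulo $pqr$ also holds over the integers.

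\emph{($\Leftarrow$) Triangle implies a \ThreeSum solution.} Take a triangle in the graph. The graph is tripartite, with edges only between $U$--$V$, $V$--$W$ and $W$--$U$, so the triangle is $u\in U$, $v\in V$, $w\in W$. By construction there are $a,b,c\in A$ with
\[
v-u\equiv a,\qquad w-v\equiv b,\qquad u-w\equiv c \pmod{pqr}.
\]
Summing the three congruences telescopes to $a+b+c\equiv 0\pmod{pqr}$. Since $|a+b+c|\le 3\max_{a\in A}|a|<pqr$, this forces $a+b+c=0$ over the integers, so $A$ has a \ThreeSum solution.

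\emph{($\Rightarrow$) A \ThreeSum solution implies a triangle.} Suppose $x+y+z=0$ with $x,y,z\in A$. We build $u,v,w$ by the Chinese Remainder Theorem. Choose $u\in[0,pqr)$ with
\[
u\equiv 0\pmod p,\qquad u\equiv -x\pmod q,\qquad u\equiv x+y\pmod r.
\]
Set $v=u+x \pmod{pqr}$ and $w=v+y \pmod{pqr}$. Then $v\equiv 0\pmod q$, so $v\in V$. Also $w\equiv u+x+y\equiv -(x+y)+x+y\equiv 0\pmod r$ (this uses $u\equiv-(x+y)$; choose the residue of $u$ mod $r$ to be $-(x+y)$ accordingly), so $w\in W$. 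Finally $u-w\equiv -(x+y)=z\pmod{pqr}$. Hence the edges $uv$, $vw$, $wu$ are all present, and they form a triangle.

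\emph{The degree-pruning step.} The only subtle point is that edges at vertices of degree larger than $Cn^{1/3}$ are deleted, which could destroy the triangle just built. I would handle this with the standard expectation argument of \cite{jin2023removing}. Because the primes are random, each vertex has expected degree $O(n\cdot n^{-2/3})$ up to polylogarithmic factors, so a fixed triangle survives with constant probability. Repeating with $O(\log n)$ independent prime choices makes it survive with high probability; this matches the "whp in one of our trials" phrasing used in the main proof. The main obstacle is exactly this step: a clean proof must bound, for the specific triangle vertices $u,v,w$, the probability that their degree exceeds the threshold. I would bound the expected number of $a'\in A$ with $a-a'$ divisible by the relevant prime by $O(n/p)$ plus lower-order terms, and then apply Markov's inequality. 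The deletion step only removes edges, so it cannot create triangles and the ($\Leftarrow$) direction is unaffected.
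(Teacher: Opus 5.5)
Your proposal is correct and takes essentially the same route as the paper. One direction uses the telescoping sum together with $pqr>3\max_{a\in A}|a|$. The other uses the CRT construction $u\equiv 0\pmod p$, $u\equiv -x\pmod q$, $u\equiv z=-(x+y)\pmod r$; your displayed $u\equiv x+y\pmod r$ is a slip, which you already corrected. The pruning step is handled by an expected-degree bound, Markov, a union bound, and $O(\log n)$ repetitions.

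One point to tighten: Markov against the threshold $Cn^{1/3}$ needs expected degree $O(n^{1/3})$ with no polylogarithmic loss, not "up to polylogarithmic factors." As in the paper, this holds because the range $[2C_1n^{2/3}\log^2 n,\,4C_1n^{2/3}\log^2 n]$ contains $\Theta(n^{2/3}\log n)$ primes. A random one therefore divides a fixed nonzero $a-x$ with probability $O(n^{-2/3})$.
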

    \begin{proof}
        The if condition is trivial, since if this graph has a triangle $(u,v,w)$, then $(v-u),(w-v),(u-w)\in A'$ form a solution for \ThreeSum in $A'$, and since $pqr>3\cdot C_1\cdot n^2$ it is also a solution in $A$. 

    For the other direction, suppose $A$ has a \ThreeSum solution $x+y+z=0\pmod{pqr}$. Consider the vertices $u\in U$, $v\in V$, and $w\in W$ such that:
    \begin{table}[H]
        \centering
        \begin{tabular}{c c c}
             $u=0 \pmod{p}$ & $u=-x \pmod{q}$ & $u=z \pmod{r}$  \\
             $v=x \pmod{p}$ & $v=0 \pmod{q}$ & $v=-y \pmod{r}$  \\
             $w=-z \pmod{p}$ & $w=y \pmod{q}$ & $w=0 \pmod{r}$  
        \end{tabular}
    \end{table}
    Notice that $v-u=x\pmod{pqr}$, $w-v=y\pmod{pqr}$, and $u-w=z\pmod{pqr}$, so $(u,v,w)$ forms a triangle before we remove the high-degree vertices in the graph.
    
    By previous discussions, the number of neighbors of $u$ in $V$ is the number of $a \in A$
    where $u = -a \pmod{q}$, which implies $a - x = 0 \pmod{q}$. Note that $q$ is a uniformly random prime from $[2n^{2/3} \log^2 n, 4n^{2/3} \log^2 n]$, which contains $\Theta(n^{2/3} \log n)$ primes by the prime number theorem. Also, $a-x$
    has $\Theta(\log n)$ distinct prime factors, so the probability $a - x = 0 \pmod{q}$ is $O(n^{-2/3})$. Therefore, if we
    use $\text{deg}(s, T)$ to denote the number of neighbors of $s$ in set $T$ , then $E[\text{deg}(u, V )] \le O(n^{1/3})$. We can
    similarly bound $E[\text{deg}(u, W )]$, so $E[\text{deg}(u)] \le O(n^{1/3}) \le 10C \cdot n^{1/3}$ for some sufficiently large constant
    $C$. By Markov's inequality, $Pr[\text{deg}(u) > Cn^{1/3}] \le \frac{1}{10}$ , so we remove vertex $u$ with probability at most $\frac{1}{10}$.
    Similarly, we remove vertex $v$ and $w$ with probability at most $\frac{1}{10}$. By union bound, the triangle $(u, v, w)$ will
    remain in the final graph with probability at least $1-\frac{3}{10}=\frac{7}{10}$.
    We can repeat this reduction $O(\log n)$ times to boost the success probability. 

    \end{proof}

\section{Upper and lower bounds for the \AllEdgeTriangleOrListing problem}
\subsection{Upper bound}
In this section, we present an upper bound for the ($2k,t$)-\AllEdgeTriangleOrListing problem. We prove:
\begin{theorem}
    There is an $O(\min(m^{1+\frac{1}{k+1}},n^{1+\frac{2}{k}}) + tk)$-time algorithm for the\\ ($2k,t$)-\AllEdgeTriangleOrListing problem.
\end{theorem}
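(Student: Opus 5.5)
The plan is to adapt the \DegenerateOrCycle and \BallOrCycle procedures of \cite{kadria2022algorithmic} into a ``list-or-peel'' routine. The key property of those procedures is that a BFS truncated at depth $k$ either sees only a tree or closes a cycle of length at most $2k$. Every time we close such a cycle, we output it and delete one of its edges; note that deletion must not affect triangle answers we still owe, which is handled below. Each listed cycle is charged $O(k)$ time, and we stop as soon as $t$ cycles have been listed. If we never reach $t$, the graph remaining after the deletions is locally tree-like, and on such a graph the \AllEdgeTriangle problem can be solved cheaply.

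\textbf{Step 1 (degeneracy).} Run a \DegenerateOrCycle-style peeling with threshold $d=m^{1/(k+1)}$, respectively $d=n^{1/k}$ for the $n$-bound. Repeatedly remove vertices of degree at most $d$. Once only vertices of degree greater than $d$ remain, run a depth-$k$ BFS from one of them. In a graph of minimum degree more than $d$, a depth-$k$ tree would already contain $d^k\ge$ (remaining size) vertices, so the BFS must close a cycle of length at most $2k$ within $O(d^k)$ steps. List that cycle, delete its closing edge, and continue. To avoid losing triangle information, before deleting an edge $e=(a,b)$ we first test in $O(\min(\deg a,\deg b))$ time whether $e$ lies in a triangle. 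This test is affordable because the closing step is charged to the cycle. Each listed cycle therefore costs $O(k+d)$, and cutting the budget so that the total stays within $O(m^{1+1/(k+1)}+tk)$ is the accounting I will need to verify.

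\textbf{Step 2 (low-degree part).} Once peeling finishes, every vertex was removed with degree at most $d$ toward later vertices. For each edge, orient it toward the vertex peeled later. Then for each vertex $v$, enumerate all pairs of its at most $d$ out-neighbours and check adjacency with a hash table. This answers, for every edge, whether it lies in a triangle, at cost $O(md)=O(m^{1+1/(k+1)})$, respectively $O(nd^2)=O(n^{1+2/k})$. Every triangle is caught at its earliest-peeled vertex, so combined with the pre-deletion triangle tests from Step 1, every edge receives a correct answer. If the count of listed cycles reaches $t$ at any point, we halt and output them instead.

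\textbf{Main obstacle.} The delicate part is the amortization in Step 1. After deleting an edge, degrees change and the ``minimum degree $>d$'' invariant may break, so BFS work spent on searches that end up peeling vertices rather than closing cycles must be charged to the peeled edges. My plan is to follow the \BallOrCycle potential argument of \cite{kadria2022algorithmic}. Each BFS either terminates at a vertex of degree at most $d$, paying $O(d)$ and charging its removal, or finds a cycle within $d^{k}$ explored vertices. The choice $d^{k+1}=m$ ensures that $d^k\cdot(\#\text{vertices})\le m^{1+1/(k+1)}$. I would also verify that listed cycles are distinct, which holds because each one contains its freshly deleted edge.
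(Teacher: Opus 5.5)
\paragraph{The gap is the amortization in Step 1.} It is not a bookkeeping detail: the find-one-cycle, delete-one-edge, restart loop is too slow.

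\emph{Why each run is expensive.} Edge deletions never decrease the girth. If the core has girth $2k$, a BFS can meet its first non-tree edge only while scanning from depth $k-1$. Before that it has built a tree ball of radius $k-1$, which has at least $d^{k-1}$ vertices while the minimum degree exceeds $d$. Each such run deletes only one edge, so there can be $\Theta(m)$ of them.

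\emph{A concrete instance.} Take $k=3$ and the point--line incidence graph of a projective plane of order $q$. It has girth $6$, degree $q+1>m^{1/4}=d$, $m=\Theta(q^3)$, and $Nd=o(m)$. With $t=\Theta(m)$, your procedure performs $\Theta(m)$ cycle-finding BFS runs of cost $\Omega(d^2)$ each, i.e.\ $\Omega(m^{3/2})$ time. The target is $O(m^{5/4}+tk)=O(m^{5/4})$.

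\emph{Your charging does not repair this.} A BFS that ends in a cycle removes no vertex, so its cost cannot be charged to vertex removals. The inequality you invoke is also false: $N\le 2m/d$ gives $d^k\cdot N\approx m^{2k/(k+1)}=m^{1+\frac{k-1}{k+1}}$, which exceeds $m^{1+\frac1{k+1}}$ for $k\ge 3$.

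\emph{Two smaller problems.}
First, the pre-deletion test costs $O(\min(\deg a,\deg b))$. In the core, degrees are only bounded from below, so ``$O(k+d)$ per cycle'' is unjustified.
Second, recording only whether $e$ itself lies in a triangle is not enough. Deleting $e=(a,b)$ destroys every triangle $(a,b,c)$, and with it the witnesses for $(a,c)$ and $(b,c)$. You would have to record all common neighbours.

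\paragraph{The missing idea is the ball-growing step of the paper's second loop.}
\begin{itemize}
\item A BFS is never stopped at its first cycle. \texttt{BallOrCycleListing}$(a,i,t)$ runs the depth-$i$ BFS to completion and outputs every non-tree edge it scans as a short cycle, aborting once $t$ cycles are listed. Its cost is $O(|\mathcal{V}_{\le i}(a)|+k\cdot\#\{\text{new cycles}\})$, and the second term is paid by the $tk$ budget.
\item One takes the first $i\ge 2$ with $|\mathcal{V}_{\le i}(a)|\le N^{i/k}$, which exists because $|\mathcal{V}_{\le k}(a)|\le N$.
\item One then deletes the whole inner ball $\mathcal{V}_{\le i-2}(a)$, of size at least $N^{(i-2)/k}$. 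This costs amortized $O(N^{2/k})$ per deleted vertex.
\item Every triangle through a deleted vertex lies in the fully scanned region $\mathcal{V}_{\le i-1}(a)$, so it is handled before deletion. No edge is ever deleted individually, so your triangle-bookkeeping issue does not arise.
\item Your Step~2 is fine and matches the paper's first loop. After that low-degree peeling, $N\le m/D=O(m^{k/(k+1)})$, so $O(N^{1+2/k})=O(m^{1+1/(k+1)})$; the bound $O(n^{1+2/k})$ follows from $N\le n$.
\end{itemize}
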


To obtain our algorithm for \AllEdgeTriangleOrListing, we build upon the \\\DegenerateOrCycle procedure of~\cite{kadria2022algorithmic}, extending it to list all triangles that involve low-degree vertices (implemented in the first for-loop of \Cref{A-AllEdgeTriangleOrListing}). For the remaining (degenerate) part of the graph, we combine techniques from~\cite{kadria2022algorithmic}  with a refined variant of the \BallOrCycle procedure, which we denote by \BallOrCycleListing (implemented in the second loop in \Cref{A-AllEdgeTriangleOrListing} and \Cref{A-BallOrCycleListing}).

Let $D = 1+\min(n^{1/k},2m^{1/(k+1)})$. 
The algorithm consists of two steps. First, for every vertex $a$ of degree at most $D$, in $O(\text{deg}(a)^2)$ time the algorithm finds all the triangles that contain $a$ and removes $a$ from the graph.
After removing low-degree vertices, the remaining graph has a minimum degree of at least $D$. The algorithm uses this fact and runs a modified version of $\codestyle{Cycle}$ presented by \cite{kadria2022algorithmic} to remove a set of vertices only after their triangles were already listed in the call to \BallOrCycleListing (or stops when $t$ cycles of length $2k$ are found).

Formally, the algorithm works as follows.
For every $a\in V$ let $d_a= \text{deg}(a)$, and let $S$ be the set of vertices whose degree is at most $D$.

While $S$ is not empty, the algorithm picks an arbitrary vertex $a$ from $S$, and removes $a$ from $S$. For every pair of vertices $u,v\in N(a)$, if $(u,v)$ is an edge,  the triangle $(a,u,v,a)$ is reported.
Next, for every vertex $b\in N(a)$,  $d_b$ is decreased by $1$ and if $d_b\le D$ then $b$ is added to $S$. Then, the algorithm removes $a$ and $E(a)$ from $G$ and proceeds to the next iteration.

After the first while loop ends, the algorithm proceeds to solve \AllEdgeTriangleOrListing for the remaining vertices as follows.
Let $N$ be the number of vertices that remained in $G$. While $G$ is not empty, the algorithm picks an 
arbitrary vertex $a\in V(G)$. 
The algorithm searches for the first $i$ such that $|\VV_{\le i}(a)|\leq N^{i/k}$, where $\VV_{\le i}(a)=\BallOrCycleListing(a, i,t)$.
Notice that $i \le k$, since otherwise $|V_{\le k}(a)|>N$, a contradiction.
The algorithm then removes $\VV_{\le (i-2)}(a)$ from $G$ and proceeds to the next iteration of the while loop.
Pseudo-code for the algorithm exists in \Cref{A-AllEdgeTriangleOrListing}.

\begin{algorithm2e}[t]
    \caption{\AllEdgeTriangleOrListing($G,k,t$)}\label{A-AllEdgeTriangleOrListing}
    Let $m'=1+\ceil{nD}$, where $D := 1+\min(n^{1/k},2m^{1/(k+1)})$\;
    % \lIf{$m> m'$}{Arbitrarily remove $m-m'$ edges from $G$}
    $\forall a\in V: d_a\gets \text{deg}(a)$\;
    $S\gets \{a \mid d_a\le D\}$\;
    \While{$S\neq \emptyset$} {
        Pick $a\in S$ and let $S\gets S\setminus a$\;
        For all $u,v\in N(a)$, if $(u,v)\in E$ then report the triangle $(a,u,v,a)$\; \label{l-go-over-neighbors}
        For all $b\in N(a)$ let $d_b\gets d_b-1$ and if $d_b\le D$ add $b$ to $S$\;
        Remove $a$ and all of its incident edges from $G$\;
    }
    
    $N \gets |V(G)|$\;
    \While{$G \neq \emptyset$} {
        Let $a\in V(G)$ be an arbitrary vertex\;
        \For{$i\gets 2$ to $k$}{
            $\VV_{\le i}(a) \gets \BallOrCycleListing(a, i, t)$\;
            \lIf{$\VV_{\le i}(a) = \codeNull$}{\Return}
            \If{$|\VV_{\le i}(a)|\le N^{i/k}$} {
                Remove $\VV_{\le (i-2)}(a)$ and all its incident edges from $G$\; \label{l-remove-v-i-2}
                \codestyle{break}
            }
        }
    }
\end{algorithm2e}

The~\BallOrCycleListing procedure gets as input a vertex $v$, a positive integer $R$, and a positive integer $t$. 
Let $d$ be an array storing the distances from $v$, and let $d[v]=0$. 
Let $\pi$ be an array of predecessors, and let $\pi[v]=\Null$.
Let $S$ be a set of vertices discovered so far, initially containing only $v$. Let $Q$ be a queue of vertices, initially containing only $v$.

The procedure \BallOrCycleListing works as follows: 
While $Q$ is not empty, the first element of $Q$ is extracted. Let $u$ be the extracted vertex. If $d[u]>R$ then the procedure terminates and returns $S$.
For each vertex $w\in N(u)$, the algorithm proceeds as follows. 
If $w\in S$, a cycle has been detected, and the algorithm adds the cycle to $\CC_{\le 2k}$, a global set of all the cycles of length at most $2k$, found so far. If $\CC_{\le 2k}$ contains at least $t$ cycles, then the procedure ends.
If $w\notin S$, then  $d[w]$ is set to $d[u]+1$, $\pi[w]$ is set to $u$, and $w$ is added to $S$  and $Q$.
Pseudo-code for \BallOrCycleListing exists in \Cref{A-BallOrCycleListing}.

\begin{algorithm2e}[t]
    \caption{\BallOrCycleListing($v, R, t$)}\label{A-BallOrCycleListing}
    \tcc{$\CC_{\le 2k}$ is a global list of found cycles}
    $d:=\codestyle{Array}(), d[v]\gets 0$, $\pi:=\codestyle{Array}(), \pi[v]\gets \Null$\;
    $S\gets \codestyle{Set}(\{v\})$, $Q\gets \codestyle{Queue}(\{v\})$\;
    \While{$Q \neq \emptyset$} {
        $u \gets Q.\codestyle{Extract}()$\;
        \lIf{$d[u] > R$}{
            \Return $S$}
        \ForEach{$w \in N(u)$}{
            \If{$w\in S$} {
                $\CC_{\le 2k} \gets \CC_{\le 2k} \cup \{v, u,w,\pi[w]\}$\; 
                \tcc{In $O(k)$ time all the vertices of the $\{v, u,w,\pi[w]\}$ cycle can be retrieved using the array $\pi$}
                \lIf {$|\CC_{\le 2k}| \ge t$} {\Return \codeNull}\label{l-stop-found-a-lot-cycles}
            }
            \Else{
                  $d[w] \leftarrow d[u] + 1$, $\pi[w] \leftarrow u$\;
                  $S\gets S\cup \{w\}$\;
                  $Q \gets Q.\codestyle{Insert}(w)$\;
            }
        }
    }
\end{algorithm2e}

Next, we state and prove the properties of \BallOrCycleListing.
\begin{lemma}\label{L-BallOrCycleListing}
    \BallOrCycleListing($v,R,t$) either returns $\VV_{\le R}(v)$ or stops after $\CC_{\le 2k}$ contains at least $t$ cycles. 
    The running time of \BallOrCycleListing($v,R,t$) is $O(|\VV_{\le R}(v)|+k|\CC'_{\le 2k}|)$, where $\CC'_{\le 2k}$ is the cycles added to the global set $\CC_{\le 2k}$ by \BallOrCycleListing($v,R,t$).
\end{lemma}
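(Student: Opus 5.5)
The plan is to analyze \BallOrCycleListing as a breadth-first search truncated at depth $R$, and argue correctness and running time separately.

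\emph{Correctness.} I would first establish the standard BFS invariant: vertices are extracted from $Q$ in non-decreasing order of $d[\cdot]$, and for every vertex $w$ placed in $S$ we have $d[w]=\delta(v,w)$ (unweighted distance). Also, $\pi$ encodes a BFS tree rooted at $v$. Hence, if the procedure reaches the test $d[u]>R$ for the first time, every vertex at distance at most $R$ has already been discovered and placed in $S$, because such vertices are discovered while scanning vertices at distance at most $R-1$, all of which were extracted earlier. Conversely, $S$ contains only vertices at distance at most $R+1$. (If exactly $\VV_{\le R}(v)$ is required, the cutoff condition is placed so that neighbors of depth-$R$ vertices are not inserted, or $S$ is restricted to $d\le R$ at no extra cost.) If $Q$ empties first, then $S$ is the whole connected component, which again equals $\VV_{\le R}(v)$.

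Otherwise the procedure returns \codeNull at \Cref{l-stop-found-a-lot-cycles}, which happens only once $|\CC_{\le 2k}|\ge t$. I would also check that each reported object is a genuine cycle of length at most $2k$. When $w\in S$ is encountered from $u$, the tree paths $v\to u$ and $v\to w$ together with $(u,w)$ form a closed walk. Both endpoints have depth at most $R\le k$, so trimming at the LCA of $u$ and $w$ yields a simple cycle of length at most $2k$, obtained in $O(k)$ time via $\pi$. Encounters with the tree parent $w=\pi[u]$ must be skipped so that no degenerate ``cycles'' are recorded.

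\emph{Running time.} Each vertex in $S$ is inserted into and extracted from $Q$ at most once, and scanning $N(u)$ costs $\deg(u)$. For each scanned pair $(u,w)$ there are two cases. If $w\notin S$, the step creates a new vertex of the ball, giving $O(|\VV_{\le R}(v)|)$ in total. If $w\in S$ (and $w$ is not the parent of $u$), the step adds a cycle at cost $O(k)$. Each non-tree edge inside the explored region therefore charges to a distinct element of $\CC'_{\le 2k}$, giving $O(k|\CC'_{\le 2k}|)$ in total.

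The main obstacle is making this charging precise. A non-tree edge is scanned from both endpoints, and the same geometric cycle may be reported twice, so distinct additions to the set could be fewer than the work spent. I would handle this either by counting $\CC'_{\le 2k}$ as the multiset of additions, or by recording a cycle only when the edge is scanned from the endpoint of larger BFS index, which removes the double count. In either case the total work is $O(|\VV_{\le R}(v)| + k|\CC'_{\le 2k}|)$, as claimed.
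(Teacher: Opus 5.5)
Your proposal follows the paper's proof: BFS correctness gives the returned ball, and every scanned edge is charged either to a newly discovered vertex or to an $O(k)$-time cycle addition. The parent-edge, double-scan and depth-$(R+1)$ overshoot issues you flag are genuine and are glossed over in the paper. Do commit to the variant that never scans a depth-$R$ vertex and counts every addition as a multiset, because your alternatives fail. Restricting $S$ afterwards leaves the scans of depth-$R$ vertices outside the claimed $O(|\VV_{\le R}(v)|+k|\CC'_{\le 2k}|)$ bound. Recording only from the larger-index endpoint never records non-tree edges whose deeper endpoint is unscanned, so those scans are not covered by the bound either. Also, the $2k$ length bound needs the scanning endpoint at depth at most $R-1$, since two adjacent endpoints at depth $R$ allow a cycle of length $2R+1$.
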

\begin{proof}
    First, we show that \BallOrCycleListing($v,R,t$) either returns $\VV_{\le R}(v)$ or stops after $\CC_{\le 2k}$ contains at least $t$ cycles. If the algorithm terminates in \Cref{l-stop-found-a-lot-cycles} then $\CC_{\le 2k}$ contains at least $t$ cycles. Otherwise, if the algorithm did not terminate in \Cref{l-stop-found-a-lot-cycles} then the procedure computes the first $R$   levels of the \codestyle{BFS} tree rooted at $v$. Therefore, from the correctness of $\codestyle{BFS}$, the returned set is indeed $V_{\le R}(v)$, as required.
    
    Since every edge iterated either finds a new vertex in $\VV_{\le R}(v)$, in $O(1)$ time, or adds a new cycle to $\CC_{\le 2k}$ in $O(k)$ time, we get that the total running time is $O(|\VV_{\le R}(v)|+k|\CC'_{\le 2k}|)$, as required.
\end{proof}

Next, we bound the running time of \AllEdgeTriangleOrListing.
\begin{lemma}
    \AllEdgeTriangleOrListing($G,k,t$) runs in $O(m+\min(m^{1+\frac{1}{k+1}}, n^{1+\frac{2}{k}})+tk)$-time.
\end{lemma}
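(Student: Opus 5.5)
We bound the two while loops of \AllEdgeTriangleOrListing separately, plus the cost of all reported cycles. Reading the graph, computing the degrees $d_a$ and initializing $S$ take $O(m)$ time.

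\textbf{First loop.} Each vertex $a$ enters $S$ once and is processed once. At that moment its current degree is at most $D$, so enumerating pairs $u,v\in N(a)$ costs $O(D\cdot \deg(a))$. Each edge test is $O(1)$ if we use a hash table of edges. Decrementing $d_b$ for the neighbors costs $O(\deg(a))$. Summing over all removed vertices, using the current degree $\deg(a)$ (which is at most $D$) for one factor, gives $O(D\sum_a \deg(a)) = O(mD)$.

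We now compare $mD$ with the target. Since $D \le 1+2m^{1/(k+1)}$, we get $mD = O(m+m^{1+1/(k+1)})$. Since $D\le 1+n^{1/k}$ and $m\le n^2$, we only get $mD\le m+mn^{1/k}$, which is not automatically $O(n^{1+2/k})$. I expect this to be handled by the first line of the pseudo-code, $m'=1+\lceil nD\rceil$: a graph with more than $nD$ edges has a vertex set of minimum degree greater than $D$, so we may restrict to, or truncate to, $m'$ edges. The per-vertex cost in the first loop should then be charged as $O(D^2)$, because every listed pair lies among at most $D$ remaining neighbors. This gives $O(nD^2)=O(\min(n^{1+2/k}, n m^{2/(k+1)}))$. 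Combining this with $nD\le m$ in the case where the min is attained by $m$ yields the claimed bound. This accounting is the step I expect to be the main obstacle. To resolve it, I would charge each $a$ with $O(D\cdot d_a^{\text{current}})$ and sum $d_a^{\text{current}}\le D$ over at most $n$ vertices, while still paying $O(m)$ once for the edge deletions.

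\textbf{Second loop.} By \Cref{L-BallOrCycleListing}, each call $\BallOrCycleListing(a,i,t)$ costs $O(|\VV_{\le i}(a)|+k|\CC'|)$. Over the whole algorithm, the $\CC'$ terms sum to $O(tk)$, since execution halts once $t$ cycles are found. For a fixed center $a$, let $i$ be the first index with $|\VV_{\le i}(a)|\le N^{i/k}$. Every earlier $j<i$ has $|\VV_{\le j}(a)|>N^{j/k}$, and the successful call costs at most $N^{i/k}$; the earlier calls cost at most $\sum_{j<i}|\VV_{\le j}(a)|\le i\cdot N^{i/k}$. We then charge this cost to the removed set $\VV_{\le i-2}(a)$, which has size greater than $N^{(i-2)/k}$ (or is $\{a\}$ when $i=2$). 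Each removed vertex is therefore charged $O(kN^{2/k})$. Summing over the at most $N$ removed vertices gives $O(N^{1+2/k})$.

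Finally, in the remaining graph every vertex has degree greater than $D$, and ball sizes grow by the expansion factor. Hence $N\le \min(n, 2m/D)$. Since $D\ge m^{1/(k+1)}$, this gives $N^{1+2/k}\le (m^{k/(k+1)})^{(k+2)/k}=m^{(k+2)/(k+1)}=m^{1+1/(k+1)}$, up to constants. Adding the three parts gives the stated running time.
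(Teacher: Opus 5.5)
Your outline follows the paper's proof. You bound the peeling loop in the two regimes of $D$, charge the cost $O(N^{i/k})$ of a successful ball computation to the at least $N^{(i-2)/k}$ deleted vertices of $V_{\le i-2}(a)$, and finish with $N\le\min(n,2m/D)$. Your accounting of the failed calls $j<i$ (total at most $i\cdot N^{i/k}$, since $V_{\le j}(a)\subseteq V_{\le i}(a)$) is more careful than the paper's, which charges only the successful call; it costs a factor $k$, harmless for constant $k$.

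Two smaller points need repair. First, drop the truncation idea. $m'$ plays no role in the analysis, and discarding edges could destroy triangles that must be reported, so it changes the problem. Nothing of the sort is needed. The peeling loop costs $\sum_a (d_a^{\mathrm{cur}})^2$ with every $d_a^{\mathrm{cur}}\le D$ and $\sum_a d_a^{\mathrm{cur}}\le m$, so it is simultaneously $O(mD)$ and $O(nD^2)$. Hence $\min(mD,nD^2)\le\min\bigl(m+2m^{1+1/(k+1)},\,4n^{1+2/k}\bigr)=O\bigl(m+\min(m^{1+1/(k+1)},n^{1+2/k})\bigr)$, which is what the paper's cases $m<nD$ and $m>nD$ amount to. Second, your inequality $D\ge m^{1/(k+1)}$ is false when $D=1+n^{1/k}$, since for dense graphs $n^{1/k}\ll m^{1/(k+1)}$. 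Split on which term realizes the minimum in $D$. If $D=1+2m^{1/(k+1)}$, then $N<2m/D<m^{k/(k+1)}$. Otherwise $n^{1/k}\le 2m^{1/(k+1)}$, so $N\le n\le 2^k m^{k/(k+1)}$. Either way $N^{1+2/k}=O(m^{1+1/(k+1)})$, and $N\le n$ gives the other term.

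The more substantive gap is the size of the deleted set when $i=3$. Your lower bound $|V_{\le i-2}(a)|>N^{(i-2)/k}$ comes from a failed test only if $i-2\ge 2$. The loop starts at $i=2$, so $j=1$ is never tested. Without a bound $\deg(a)\ge N^{1/k}$, a call costing up to $N^{3/k}$ could delete only $O(1)$ vertices, and the charging breaks.

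The paper fills this with the minimum degree $>D$ left by the first loop, combined with $D\ge N^{1/k}$, which holds in both regimes by the bounds on $N$ just given. Be aware that this guarantee is only literally valid before the first ball is deleted. Removing $V_{\le i-2}(a)$ lowers the degrees of surviving vertices at distance $i-1$ from $a$, and the algorithm does not re-peel.

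A clean way to make the step airtight is to keep applying the first-loop peeling to any vertex whose degree drops to at most $D$ during the second loop. Each such vertex costs $O(D\cdot d^{\mathrm{cur}})$ and $O(D^2)$, so this stays within the same $O(\min(mD,nD^2))$ budget. It also ensures $\deg(a)>D\ge N^{1/k}$ for every center $a$.
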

\begin{proof}
    The cost of removing a vertex $v$ from $G$ is $O(\text{deg}(v))$, and since a vertex can be removed only once from the graph, the cost of removing all the vertices through the algorithm is $\sum_{v\in V}O(\text{deg}(v))=O(m)$. The cost of reporting a cycle in \BallOrCycleListing is $O(k)$, or $O(1)$ in the reporting of the first for loop, therefore the total cost for reporting the $t$ cycles is $O(kt)$.

    We now analyze the cost of the first while loop in terms of $m$.
    First, we consider the case that $m<nD$, and therefore $D= 1+\min(n^{1/k},2m^{1/(k+1)})=2m^{1/(k+1)}+1$.
    Iterating over all pairs $u,w\in N(a)$ to detect triangles takes $O((d_a)^2)$ time, and after this iteration, the algorithm removes $a$ along with its $d_a$ remaining edges.
    Since we have extracted $a$ from $S$, we have that $d_a\le D$, and therefore the amortized cost for each edge removal is $O(d_a^2/d_a)=O(d_a)=O(D)$. 
    Since at most $m$ edges are removed in this manner, we get that the cost of the first while loop is $O(mD)$. 
    
    Consider now the case that $m>nD$. This implies that $D=1+n^{1/k}$.
    Since we have extracted $a$ from $S$, we have that $d_a\le D$, and therefore iterating over $u,w\in N(a)$ takes $O(d_a^2)=O(D^2)$ time. Therefore, the total time to remove at most $n$ vertices is $O(nD^2)=O(n^{1+\frac{2}{k}})$, as required.
    
    Next, we consider the second while loop of the algorithm.
    Recall that we use $N$ to denote the number of vertices in $G$ after the first while loop ends. 
    
    We show that the running time of the second while loop is $O(N^{1+\frac2k})$. 
    Let $a$ be a vertex picked in some iteration of the second while loop. 
    First, observe that unless $|\CC_{\le 2k}| \ge t$, thereby satisfying the listing requirement, there must exist some iteration $i$ in which $|\VV_{\le i}(a)| \le N^{i/k}$.
    To see this, assume, for the sake of contradiction, that this is not the case, and in
    every iteration $|\VV_{\le i}(a)| > N^{i/k}$, for every $1 \le i \le k$. This implies that, in particular, $|\VV_{\le k}(a)|>N$, a contradiction to the number of vertices in $G$.
    
    Since $|\CC_{\le 2k}| < t$, from \Cref{L-BallOrCycleListing} we have that $\VV_{\le i}(a)$ contains all the vertices of distance at most $i$ from $a$, for every integer $i$ and $a\in V$,   called throughout the algorithm. 
    Let $2\le i\le k$ be the first iteration such that $|V_{\le i}(a)| \le N^{i/k}$. 
    Notice that $|V_{\le 0}(v)|=|\{v\}|=N^{0/k}$ and $|V_{\le 1}(v)|=|N(v)|\ge N^{1/k}$.
    Therefore, since $i$ is the first iteration such that $|V_{\le i}(a)| \le N^{i/k}$, 
    we get that   $|V_{\le (i-2)}(a)|\ge N^{(i-2)/k}$. 
    
   In $O(N^{i/k})$ time the algorithm removes $N^{(i-2)/k}$ vertices, therefore, the amortized time for each vertex removal is $O(N^{i/k}/N^{(i-2)/k})=O(N^\frac2k)$. Since at most $N$ vertices are removed we get that the running time for the second while loop is $O(N^{1+\frac2k})$, as required.

    In the case that $m\le nD$, then $D=1+2m^{1/(k+1)}$. Since the minimum degree in $G$ after the first while loop ends is $D$, we have $N\le m/D=\frac{m}{1+2m^{1/(k+1)}}=O(m^{1-\frac{1}{k+1}})$ vertices, and we get that the running time is 
    $$O(N^{1+\frac2k})=O((m^{1-\frac{1}{k+1}})^{1+\frac2k})=O(m^{\frac{k}{k+1}\cdot \frac{k+2}{k}})=O(m^{1+\frac{1}{k+1}}),$$
    as required.

    In the case that $m> nD$,
    we bound the running time with $O(n^{1+\frac2k})$, since $N\le n$.
\end{proof}
We now turn to prove the correctness of \AllEdgeTriangleOrListing.
\begin{lemma}
    \AllEdgeTriangleOrListing($G,k,t$) either finds $t$ cycles of length at most $2k$ or lists all the triangles in $G$.
\end{lemma}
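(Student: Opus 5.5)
Assume the algorithm does not stop early because $t$ cycles of length at most $2k$ were listed; then every call to \BallOrCycleListing returns the full ball, by \Cref{L-BallOrCycleListing}. We must show that every triangle of the input graph $G$ is reported. We also need every reported cycle to be genuine and of length at most $2k$. The plan is to treat the two while loops separately, using one invariant: a vertex is removed from $G$ only after every triangle containing it in the current graph has been reported (or listed in $\CC_{\le 2k}$). Since removals only delete vertices, a triangle of the original $G$ is reported by the first removal that touches it, at which point all three vertices are still present.

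\textbf{First loop.} When $a$ is extracted from $S$ it is still in the current graph, and the loop checks every pair $u,v\in N(a)$ of current neighbours for adjacency. Every triangle $(a,u,v)$ still present is therefore reported before $a$ is deleted, so the invariant holds.

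\textbf{Second loop.} Here the algorithm deletes $\VV_{\le(i-2)}(a)$ after computing $\VV_{\le i}(a)$ by BFS. Take a triangle $\{x,y,z\}$ with $x\in\VV_{\le(i-2)}(a)$. Then $y,z$ lie within distance $i-1$ of $a$, so $x,y,z$ are all dequeued before the procedure stops at level $>i$. Every edge among them is thus scanned from its endpoint that is dequeued first. Let $z$ be the triangle vertex dequeued last. The other two vertices $x,y$ both scan the edges to $z$ before $z$ is dequeued, so at least one of those scans finds $z$ already in $S$, which is a non-tree event. The key step is a case check that the cycle recorded by that event, namely the BFS tree paths from the two endpoints up to their lowest common ancestor plus the closing edge, either is the triangle itself or is a cycle through the triangle's edges of length at most $2i\le 2k$. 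I would then argue that the recorded set $\CC_{\le 2k}$, together with the triangles already listed in the first loop, yields the triangle. The plan is to refine the recording rule if needed: upon each non-tree edge $(u,w)$ with $d[u],d[w]\le i-1$, test whether $\pi[u]=\pi[w]$ or $(\pi[u],w)\in E$ or $(u,\pi[w])\in E$. This costs $O(1)$ with hashing and identifies triangles exactly.

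\textbf{Main obstacle.} The delicate point is that the pseudocode records a generic closed walk $\{v,u,w,\pi[w]\}$ rather than explicitly testing for triangles. A triangle among deleted vertices could be missed if its closing edge is charged to a longer cycle. I expect to handle this by the observation above. Every edge of the triangle incident to a vertex in $\VV_{\le(i-2)}(a)$ lies inside the ball of radius $i-1$, so it is examined as either a tree edge or a non-tree edge. A triangle has at most two tree edges in a BFS tree, so at least one of its edges is non-tree and triggers the recording. For the length bound, each recorded cycle uses two tree paths of depth at most $i\le k$ plus one edge, so it has length at most $2k+1$. I will need to tighten this to $2k$ using that only edges scanned from vertices at depth $\le i$ matter, with both endpoints at depth $\le i$ and the odd case reducing to $2i-1$. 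I would also confirm that each distinct cycle is added once, so that the early-stop test $|\CC_{\le 2k}|\ge t$ is meaningful.
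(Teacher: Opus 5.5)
Your outer structure matches the paper's. You dispose of the early-stop case via \Cref{L-BallOrCycleListing}, argue that a vertex is deleted only after its triangles are reported, and handle the first loop by the pairwise neighbor check. In the second loop you use $\delta(a,x)\le i-2$ to place the whole triangle in $\VV_{\le i-1}(a)$, so all three of its edges are scanned by \codestyle{BallOrCycleListing}$(a,i,t)$. The paper stops there and concludes directly that the triangle is reported. You are right that this inference is not automatic: the rule $\{v,u,w,\pi[w]\}$ records the fundamental cycle of the scanned non-tree edge, not the triangle. However, your repair does not close the step.

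Your test on a non-tree edge $(u,w)$ checks $\pi[u]=\pi[w]$, $(\pi[u],w)\in E$, or $(u,\pi[w])\in E$. It can only ever output $\{u,w,\pi[u]\}$ or $\{u,w,\pi[w]\}$, i.e., a triangle containing a BFS-tree edge. It does catch all such triangles: a single non-tree edge forces its endpoints to be siblings, and with two non-tree edges the test fires on the non-tree edge at the child endpoint of the tree edge. But it misses every triangle whose three edges are all non-tree.

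For example, let $a$ have neighbors $p_1,p_2,p_3$. Let $x,y,z$ be depth-$2$ vertices with $\pi[x]=p_1$, $\pi[y]=p_2$, $\pi[z]=p_3$ that form a triangle, and let no $p_j$ be adjacent to the other two of $x,y,z$. On each of $(x,y),(y,z),(x,z)$ the parents differ and the cross edges are absent, so all three tests fail. The only cycles recorded for these edges are $5$-cycles through $a$. Thus $\{x,y,z\}$, which may be the only triangle through $(x,y)$, is never output.

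Your remaining sentences do not fill this hole. ``At least one edge is non-tree and triggers the recording'' yields a fundamental cycle, not the triangle. ``The recorded set $\CC_{\le 2k}$ yields the triangle'' names no mechanism. What is missing is a genuine additional step that handles triangles built entirely from non-tree edges, with its cost accounted for; a local parent check cannot do this.

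Separately, your length bookkeeping is wrong. Vertices at depth $i=R$ are still scanned, so the endpoint $w\in S$ can have depth $i+1$ and the recorded cycle can have length $2i+2$. The premise that both endpoints have depth at most $i$ fails for the procedure as written. Getting length at most $2k$ therefore needs a change to the procedure, not a sharper count.
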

\begin{proof}
    From \Cref{L-BallOrCycleListing} we have that the only case in which \BallOrCycleListing($v,R,t$) does not return $\VV_{\le R}(v)$ is when $|\CC_{\le 2k}| \ge t$. In this case, the lemma holds as the algorithm finds $t$ cycles of length at most $2k$. Therefore, throughout the proof, we assume that \BallOrCycleListing($v, R,t$) returns $\VV_{\le R}(v)$.
    
    To prove the correctness of the algorithm, we show that whenever a vertex $x$ is removed, all the triangles containing $x$ are reported. Consequently, since the algorithm terminates only when the graph is empty, it follows that all the triangles of the graph are listed.
    
    Let $x\in V$ be a vertex that was removed during the algorithm, and let $(x,u,w,x)$ be a triangle in $G$.
    
    If $x$ was removed in the first while loop, then in \Cref{l-go-over-neighbors} the algorithm considers every $v_1,v_2\in N(x)$, and in particular $u,w\in N(x)$ and reports the triangle $(x,u,w,x)$, as required.

    Otherwise, if $x$ was removed in \Cref{l-remove-v-i-2} during the $i$'th iteration, then we have that $x\in \VV_{\le (i-2)}(v)$, for some $v\in V$, and the algorithm computed \BallOrCycleListing($v,i,t$). 
    Since $x\in \VV_{\le (i-2)}(v)$ it follows that $\delta(x,v) \le i-2$, and since $u,w\in N(x)$, from the triangle inequality it follows that $\delta(v,w),\delta(v,u)\le \delta(v,x)+1 \le i-2+1=i-1$, and therefore $u,w\in \VV_{\le i-1}(v)$.
    
    During the call to \BallOrCycleListing($v,i,t$), all the edges of all the vertices $\VV_{\le i-1}$ were iterated, and in particular the edges $(x,u),(x,w)$, and $(u,w)$. Therefore, the algorithm reports the triangle $(x,u,w,x)$, as required.
\end{proof}

\subsection{Lower bound}
In this section, we consider the following \AllEdgeTriangleOrListing problem:
\begin{definition}[\AllEdgeTriangleOrListing]
    Let $G=(V,E)$, be an undirected graph.
    In the $(k, t)$-\AllEdgeTriangleOrListing problem, the goal is to return a solution to \textbf{one} of the following two problems:
    \begin{itemize}
        \item For every edge $e\in E$ determine if it is contained in a triangle and if so, return it.
        \item List \textbf{at least} $t$ cycles of length at most $k$.
    \end{itemize}
\end{definition}

To obtain our lower bound for \AllEdgeTriangleOrListing we use the following lemma of \cite{jin2023removing}, together with the sampling technique used by \cite{jin2023removing,DBLP:conf/stoc/AbboudBKZ22,DBLP:conf/stoc/AbboudBF23} and simple color coding introduced by \cite{AYZ97}.
\begin{lemma}[\cite{jin2023removing}]\label{C-1.6-jin}
    Under the \ThreeSum hypothesis, \AllEdgeTriangle on $n$-vertex graphs which have maximum degree at most $\sqrt{n}$ and have at most $n^{k/2}$ ~$k$-cycles for every $k \ge 3$ requires $n^{2-o(1)}$ time.
\end{lemma}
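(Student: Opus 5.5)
The plan is to follow the route of \cite{jin2023removing}. Any statement of the form ``few short cycles'' reduces to ``few short additive relations'' in the underlying \ThreeSum instance. So I would first move from the \ThreeSum hypothesis to a \ThreeSum instance with little additive structure, and then apply a Patrascu-style reduction to \AllEdgeTriangle whose cycles encode exactly those additive relations.

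\textbf{Step 1 (removing additive structure).} Starting from a \ThreeSum instance $A$ of size $N$, I would apply the structure-removal reduction of \cite{jin2023removing}. This consists of random modular hashing $a\mapsto (\sigma a \bmod P)$ into a universe of size $N^{1+o(1)}$, followed by random subsampling and splitting into $N^{o(1)}$ subinstances. The target is that, for every constant $k$, the number of non-trivial solutions to signed equations $\pm a_1\pm\dots\pm a_k\equiv 0$ with all $a_j\in A$ is $O(N^{k/2+o(1)})$. This is essentially the count a random set of this density would have. It must hold while the \ThreeSum solutions of the original instance are preserved (up to the usual $N^{o(1)}$ false positives), so that an $N^{2-\eps}$ algorithm on the structured-free instances would still refute the hypothesis.

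\textbf{Step 2 (graph construction).} I would then build the standard reduction graph. Partition the hashed universe into $\sqrt N$ buckets of $\sqrt N$ consecutive residues, and let the vertices be (layer, bucket offset) pairs in three layers, in the spirit of the $U,V,W$ construction used in the proof of \Cref{T-Stronger-3sum-LB}. Edges join $x$ to $x+a$ for $a\in A$, restricted so that each vertex sees only one bucket of $A$. This gives $n=\Theta(N)$ vertices and maximum degree $O(\sqrt n)$. The \ThreeSum question then becomes: for each edge, does it lie in a triangle? One \AllEdgeTriangle call answers all $\Theta(n^{1.5})$ bucket-pair queries, so an $n^{2-\eps}$ algorithm would solve \ThreeSum in $N^{2-\eps'}$ time. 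High-degree vertices are pruned as in \Cref{C-Correctness-jin2023}, with $O(\log n)$ repetitions to keep a witness triangle whp.

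\textbf{Step 3 (counting $k$-cycles).} A closed walk $x_0,\dots,x_k=x_0$ has edge labels satisfying $\sum_j \pm a_j\equiv 0$. A simple $k$-cycle therefore corresponds either to a non-trivial $k$-term relation (with a choice of starting vertex) or to a degenerate one. Degenerate relations force consecutive labels to cancel, which contradicts simplicity, similarly to the $4$-cycle argument in \Cref{T-Stronger-3sum-LB}. Multiplying the $O(N^{k/2+o(1)})$ relation count from Step 1 by the $O(n)$ choices of starting vertex per relation, and dividing by the number of vertex choices the relation already pins down, I expect a total of $n^{k/2+o(1)}$. The $n^{o(1)}$ factor is then absorbed by one more round of random subsampling.

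I expect Step 1 to be the main obstacle, specifically getting the relation bound simultaneously for all $k$ while keeping the universe size near-linear. Bounding a single fixed $k$ by second-moment hashing is routine. Controlling every order at once requires the full almost-linear hashing and iterated-subsampling machinery of \cite{jin2023removing}, including care with the additive error terms of the hash.
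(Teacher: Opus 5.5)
Your Step~1 cannot be carried out, because its target is false. Suppose $A$ has $N$ elements in an interval or cyclic group of size $U=N^{1+o(1)}$. Cauchy--Schwarz gives $|\{(a,b,c,d)\in A^4 : a+b\equiv c+d\}|\ge |A|^4/|A+A|\ge N^4/(2U)=N^{3-o(1)}$, and at most $2N^2$ of these solutions are trivial. So already for $k=4$ there are $N^{3-o(1)}$ non-trivial relations, not $O(N^{k/2+o(1)})$. Your heuristic for the target is also off: a random $N$-element subset of a universe of size $U$ has about $N^k/U$ solutions of a $k$-term signed equation, which here is $N^{k-1-o(1)}$. Having few $4$-term relations essentially means being a Sidon set, and that forces $U$ to be of order at least $N^2$. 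The paper does not prove this lemma; it imports it from \cite{jin2023removing}. There it follows from their main theorem, that $3$SUM remains $n^{2-o(1)}$-hard on Sidon sets, combined with earlier $3$SUM-to-All-Edge-Sparse-Triangle reductions such as that of \cite{DBLP:conf/stoc/AbboudBKZ22}. Their Sidon instances live in a universe blown up well beyond $N^2$, as the paper itself remarks when motivating the Stronger $3$SUM hypothesis.

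This also shows you have misdiagnosed the obstacle: no order-by-order control is needed. For a Sidon set every integer has at most two ordered representations as $a+b$. Fix all but four terms of a signed $k$-term equation (all but three when $k=3$); at most $O(N^2)$ (resp.\ $O(N)$) completions remain, whatever the right-hand side. So the Sidon property alone bounds the genuine integer relations by $O(N^{k-2})$ for every $k\ge 3$. What the cycle count then needs, and what your Step~3 does not supply, is a separation inside the reduction between two kinds of closed walks. Walks that close because of a genuine relation in $A$ close for every choice of hash, and must be cut down using the layer and bucket constraints. Walks that close only through a hash collision behave like closures in a random graph of density $n^{-1/2}$, contributing about $n^{k/2}$ in expectation. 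Your accounting (multiply the relation count by $O(n)$ starting vertices, then divide by ``the number of vertex choices the relation pins down'') has no such mechanism, and with your own numbers it needs an unexplained saving of a factor $n$. Finally, degenerate relations do not force consecutive labels to cancel once $k\ge 6$. The identity $a-b+c-a+b-c=0$ gives the closed walk $x,\ x+a,\ x+a-b,\ x+a-b+c,\ x-b+c,\ x+c,\ x$, which is a simple $6$-cycle for generic distinct $a,b,c$. Such cycles must be counted and suppressed through the layer and bucket constraints, not dismissed by simplicity.
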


Next, we show a conditional lower bound of $m^{1+\frac{1}{k-1}-o(1)}+t$ for the \AllEdgeTriangleOrListing problem under the \ThreeSum Hypothesis, as stated in the following theorem.
\begin{theorem}\label{T-LB-m-alledgeorlisting}
    Under the \ThreeSum hypothesis, there is no 
    $O(m^{1+\frac{1}{k-1}-\eps}+t)$
    time algorithm for the $(k,t)$-\AllEdgeTriangleOrListing problem.
\end{theorem}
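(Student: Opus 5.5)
We reduce from the hard instances of \Cref{C-1.6-jin}. Start with an \AllEdgeTriangle instance $G$ on $n$ vertices, maximum degree at most $\sqrt n$, so $m_G\le n^{3/2}$, and with at most $n^{j/2}$ cycles of length $j$ for every $3\le j\le k$. The plan is to split $G$ into many smaller subgraphs by random vertex sampling. Each subgraph should have so few cycles of length at most $k$ that an \AllEdgeTriangleOrListing algorithm cannot take the listing branch, and is therefore forced to solve \AllEdgeTriangle on it.

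\textbf{Step 1 (the split).} Fix a parameter $s$ and randomly partition $V$ into $s$ parts $V_1,\dots,V_s$. Apply color coding: for every triple $(a,b,c)\in[s]^3$, build the subgraph $G_{abc}=G[V_a\cup V_b\cup V_c]$. Every triangle lies in some $G_{abc}$, so solving \AllEdgeTriangle on all $s^3$ subgraphs and taking the OR per edge solves it on $G$.

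\textbf{Step 2 (parameters).} Each $G_{abc}$ has about $3n/s$ vertices. Each edge survives with probability about $(3/s)^2$, so the expected edge count is $m'=O(n^{3/2}/s^2)$. A fixed $j$-cycle survives with probability about $(3/s)^j$, so the expected number of $j$-cycles is at most $n^{j/2}(3/s)^j$. Choose $s$ so that $n^{k/2}/s^{k}$ is small relative to $m'$; roughly $s=n^{\beta}$ with $\beta$ tuned so that the total over $j\le k$ is $O(m')$. Then set $t=\Theta(m')$, say $t$ a constant times the expected cycle count plus $m'$. By Markov's inequality, repeating $O(\log n)$ times with fresh partitions ensures that, with high probability, every triangle is covered by some subgraph whose number of short cycles is below $t$. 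Subgraphs whose short-cycle count exceeds the threshold are discarded. Any run on a surviving subgraph that returns the listing answer is impossible, since fewer than $t$ cycles exist, so the algorithm must answer \AllEdgeTriangle there.

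\textbf{Step 3 (time accounting).} The total time is $\tilde O\bigl(s^3\,(m'^{\,1+1/(k-1)-\eps}+t)\bigr)$. Plugging in $m'\approx n^{3/2}/s^2$ and the chosen $s$, we want this to be $n^{2-\eps'}$, contradicting \Cref{C-1.6-jin}. I expect the main obstacle to be the exponent arithmetic: balancing the sampling rate so that the cycle count is at most $m'$ while $s^3\cdot m'^{\,k/(k-1)}\le n^{2-\Omega(\eps)}$. Equating $s^3(n^{3/2}/s^2)^{k/(k-1)}=n^2$ and the $j=k$ cycle bound $n^{k/2}s^{-k}\approx n^{3/2}s^{-2}$ should yield exactly the $1/(k-1)$ exponent. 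Concretely, the second equation gives $s^{k-2}=n^{(k-3)/2}$, and one should check that this is consistent with the first.

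If it is not consistent, I would adjust the construction. Instead of $s^3$ triple subgraphs, sample edges or vertices at rate $p$ and handle the per-edge coverage via $O(p^{-2}\log n)$ repetitions. I would then verify that shorter cycles $j<k$ give weaker constraints, since $n^{j/2}s^{-j}$ decreases more slowly only when $s<\sqrt n$, and $s\le\sqrt n$ holds in our regime.
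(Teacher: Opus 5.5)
Your framework is the same as the paper's: random vertex partition into $s^3$ sub-instances, $t$ chosen above the number of short cycles so the algorithm is forced into the \textsc{AllEdgeTriangle} branch, then add up the costs. The gap is in the balancing condition, which is exactly the step that produces the exponent. You require the per-instance number of cycles of length at most $k$ to be $O(m')$, i.e.\ $n^{k/2}s^{-k}\approx n^{3/2}s^{-2}$, which forces $s^{k-2}=n^{(k-3)/2}$. The consistency check you postponed fails. Ignoring the $-\varepsilon$, the algorithm's total cost over sub-instances is $s^3(n^{3/2}s^{-2})^{k/(k-1)}=n^{\frac{3k}{2(k-1)}}s^{\frac{k-3}{k-1}}$, which is increasing in $s$. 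At your $s$ it equals $n^{(2k-3)^2/(2(k-1)(k-2))}=n^{2+\frac{1}{2(k-1)(k-2)}}$, already $n^{25/12}$ for $k=4$. Including $\varepsilon$, the total is $n^{2+\frac{1}{2(k-1)(k-2)}-\frac{k\varepsilon}{2(k-2)}}$, which is super-quadratic whenever $\varepsilon<\frac{1}{k(k-1)}$, so there is no contradiction. Your fallback of sampling at rate $p$ does not help: vertex sampling at rate $p$ is the same computation with $s=1/p$. The obstruction is the target $t=\Theta(m')$, not the sampling scheme.

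The missing observation is that $t$ enters the running time only additively, once per sub-instance. So the per-instance cycle count may be as large as the time budget $m'^{1+1/(k-1)}$, not just $m'$. The right second condition is that the total listing cost $s^3\cdot n^{k/2}s^{-k}$ be at most $n^2$, i.e.\ $s^{k-3}\ge n^{(k-4)/2}$. The first condition is $s^{k-3}\le n^{(k-4)/2}$. Both are tight at $s=n^{\sigma}$ with $\sigma=\frac{k-4}{2(k-3)}=\frac12-\frac{1}{2k-6}$, which is the paper's choice. There the per-instance cycle count is $n^{k/(2(k-3))}=m'^{k/(k-1)}$, polynomially larger than $m'$. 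Taking $\sigma=\frac12-\frac{1}{2k-6}+\varepsilon'$ with $0<\varepsilon'\ll\varepsilon$ makes both terms $n^{2-\Omega(1)}$.

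With this fix, your fixed-threshold-and-discard variant works if $t$ is a constant times the expected per-instance count $n^{k/2}s^{-k}$. The paper instead doubles $t$ for each sub-instance until the algorithm returns an \textsc{AllEdgeTriangle} answer, paying $O(\log n)$ calls plus at most twice the true count. You also need $m'=O(n^{3/2}/s^2)$ for every sub-instance with high probability, not just in expectation, since the cost is superlinear in $m'$. As in the paper, this follows from Chernoff bounds on the degrees.
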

\begin{proof}
    We show that if there exists an algorithm $Alg$ that solves the\\ $(k,t)$-\AllEdgeTriangleOrListing problem in $O(m^{1+\frac{1}{k-1}-\eps}+t)$, for $\eps>0$,  time, then there 
    exists an algorithm that solves the \AllEdgeTriangle in graphs with maximum degree $O(\sqrt{n})$ and $O(n^{k/2})$ $k$-cycles in
    $O(n^{2-\delta})$, for $\delta>0$, time.
    By \Cref{C-1.6-jin} we get the desired conditional lower bound.
    
    Fix an instance of the \AllEdgeTriangle problem on an $n$-vertex graph $G=(V,E)$ with maximum degree $O(\sqrt{n})$ and $O(n^{k/2})$ $k$-cycles. Using the standard color-coding technique \cite{alon1995color}, we can assume that $G$ is tripartite with three vertex sets $A, B, C$.

    Let $0 < \sigma < \tfrac{1}{2}$ be a constant to be determined later. We partition the vertices of $G$ into $3n^{\sigma}$ sets, denoted $A_1,\dots,A_{n^{\sigma}},B_1,\dots,B_{n^\sigma},C_1,\dots,C_{n^\sigma}$, by picking each vertex of $A$ (resp. $B$ and $C$) independently and uniformly at random into one of the corresponding sets $A_i$ (resp. $B_i$ and $C_i$). It then suffices to solve the \AllEdgeTriangle problem on the sub-instances induced by $A_i \cup B_j \cup C_\ell$, for all $i,j,\ell \in [n^{\sigma}]$, which we denote by $G_{i,j,\ell}$.
    
    From Chernoff's inequality, it follows that, with high probability, 
    $G_{i,j,\ell}$ has $O(n^{1-\sigma})$ vertices, and maximum degree $O(n^{0.5-\sigma})$. Therefore, we get that with high probability:
    \begin{equation}\label{e-bound-E-G_ijl}
        |E(G_{i,j,\ell})| = O(n^{1-\sigma+0.5-\sigma})=O(n^{1.5-2\sigma})
    \end{equation}
    
    Since every cycle of length $k'$ remains in $G_{i,j,\ell}$ with probability $n^{-k'\sigma}$ we get that in expectation $G_{i,j,\ell}$ has at most $O(n^{k'/2-\sigma\cdot k'})=O(n^{k'(1/2-\sigma)})$ cycles of length $k'$ for any $3\le k'\le k$. 
    Since $\sigma<1/2$, the term $O(n^{k'(1/2-\sigma)})$ is  maximized when $k'=k$, and we get that the expected number of cycles of length between $3$ and $k$ is $O(n^{k(1/2-\sigma)})$. We denote this number with $C_{\le k}^{G_{i,j,\ell}}$.
    Let $C_{\le k}=\sum_{i,j,\ell\in [n^\sigma]}C_{\le k}^{G_{i,j,\ell}}$. 
    From the linearity of expectation we have that 
    $E[C_{\le k}]=O(n^{k(1/2-\sigma)+3\sigma})$.

    Therefore, by repeating this step $O(\log{n})$ times we get that with high probability:
    \begin{equation}\label{e-c-k-small}
        C_{\le k} = O(n^{k(1/2-\sigma)+3\sigma})
    \end{equation}

    Consider a specific subgraph $G_{i,j,\ell}$. 
    Notice, that solving\\ $(k,t)$-\AllEdgeTriangleOrListing, for $t>C_{\le k}^{G_{i,j,\ell}}$, in $G_{i,j,\ell}$ is equivalent to solving \AllEdgeTriangle in $G_{i,j,\ell}$. Using a geometric search in $O(\log{n})$ calls to $Alg$ we can find $t$ that satisfies $C_{\le k}^{G_{i,j,\ell}}< t\le 2C_{\le k}^{G_{i,j,\ell}}$.
    Therefore, using $Alg$ we get an $\Ot(|E(G_{i,j,\ell})|^{1+\frac{1}{k-1}}+t)$ time algorithm for \AllEdgeTriangle in $G_{i,j,\ell}$.
    % Therefore, using $Alg$ we solve \AllEdgeTriangle in $G_{i,j,\ell}$ in $\Ot(|E(G_{i,j,\ell})|^{1+\frac{1}{k-1}}+t)$ time.
    
    Next, we show that by using $Alg$ to solve \AllEdgeTriangle in all the $G_{i,j,\ell}$ instances we solve \AllEdgeTriangle in the original graph in $O(n^{2-\delta})$, for $\delta>0$, time which contradicts the \ThreeSum hypothesis.

    Recall that we assumed that the running time for $Alg$ is $O(m^{1+\frac{1}{k-1}-\eps}+t)$. Therefore,
    by using $Alg$ to solve $\AllEdgeTriangle$ in all instances $G_{i,j,\ell}$, for $i,j,\ell\in [n^\sigma]$, we get that the total running time is:
    \begin{align*}
        \sum_{i,j,\ell}\Ot(|E(G_{i,j,\ell})|^{1+\frac{1}{k-1}-\eps}+C_{\le k}^{G_{i,j,\ell}}) &\stackrel{\ref{e-bound-E-G_ijl}}{=}\Ot(n^{3\sigma}\cdot (n^{1.5-2\sigma})^{1+\frac{1}{k-1}-\eps}+\sum_{i,j,\ell}C_{\le k}^{G_{i,j,\ell}})
        \\&=\Ot(n^{3\sigma+(1.5-2\sigma)(1+\frac{1}{k-1}-\eps)}+C_{\le k})
        \\&\stackrel{\ref{e-c-k-small}}=\Ot(n^{1.5+\sigma+\frac{1.5-2\sigma}{k-1}-\eps(3/2-2\sigma)}+n^{k(1/2-\sigma)+3\sigma})
    \end{align*}

    By picking $\sigma=1/2-\frac{1}{2k-6}+\eps'$, for constant $\eps'>0$, we get that the total running time is 
    \begin{align*}
        &\Ot(n^{1.5+\sigma+\frac{1.5-2\sigma}{k-1}-\eps(3/2-2\sigma)}+n^{k(1/2-\sigma)+3\sigma}) \\
        % \\&=O(n^{1.5+(1/2-\frac{1}{2k-6}+\epsilon')+\frac{1.5-2(1/2-\frac{1}{2k-6}+\epsilon')}{k-1}-\eps(3/2-2(1/2-\frac{1}{2k-6}+\epsilon'))}+n^{k(1/2-(1/2-\frac{1}{2k-6}+\epsilon'))+3(1/2-\frac{1}{2k-6}+\epsilon')})
        &=\Ot(n^{2-\eps(1/2+\frac{1}{k-3}-2\eps')+\eps'(1-\frac{2}{k-1})}+n^{2-\eps'(k-3)})
    \end{align*}
    which is truly subquadratic for every  $0<\eps'<\eps/100$, as required.
\end{proof}

\ifC
\else
\bibliography{articles}
\pagebreak
\appendix
\fi

\section{Definitions}
We start with several problem definitions and known results that will be used to prove \Cref{T-AE-approximation-LB-1}. Let $C_{\le k}$ denote a cycle of length at most $k$. Let the set $[\pm n]$ be $\{i\mid  |i| < n\}$.

\begin{definition}[\ThreeSum]
The \ThreeSum problem asks whether, given a set $A$ of integers, there exist elements $a, b, c \in A$ such that $a + b + c = 0$.
\end{definition}

\begin{definition}[\ZeroTriangle] Given an $n$-node graph $G=(V,E,\wt)$ where $\wt:E\rightarrow [\pm n^{c}]$, for some constant $c>0$, decide whether there are three vertices $p, q, r$ such that $(p,q,r)$ is a triangle in $G$, and $\wt(p,q,r,p)=0$ 
\end{definition}

The following hypothesis serves as the foundation for many conditional lower bounds:

\begin{hypothesis}[\ThreeSum Hypothesis]
    In the word-RAM model with $O(\log n)$ bit words, there is no $O(n^{2-\eps})$, for $\eps>0$, time algorithm for \ThreeSum.
\end{hypothesis}

Sidon sets (also known as Golomb rulers) were extensively studied in the field of additive combinatorics (e.g. see the survey of~\cite{obryant2004completeannotatedbibliographywork}).

\begin{definition}[Sidon Set]
    A set of integers $A$ is a Sidon set if there are no distinct $a,b,c,d\in A$ such that $a+b=c+d$.
\end{definition}

Jin and Xu~\cite{jin2023removing} recently employed Sidon sets in the context of the \ThreeSum problem. In particular, they proved the following result:
\begin{lemma}[\cite{jin2023removing}]\label{L-Sidon-Exact}
    Under the \ThreeSum hypothesis, in the word-RAM model with $O(\log n)$ bit words, there is no
    $O(n^{3-\eps})$, for $\eps > 0$, time algorithm for \ZeroTriangle where $\{\wt(u,v)\mid (u,v)\in E\}$ is a Sidon Set.
\end{lemma}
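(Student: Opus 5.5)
We plan to follow the route of Jin and Xu: start from the classical fine-grained reduction from \ThreeSum to \ZeroTriangle, and insert a preliminary step that removes the additive structure of the \ThreeSum instance. The Sidon property of the edge weights will then be inherited from a Sidon-like property of the numbers themselves. Concretely, we take a \ThreeSum instance $A$ of size $n$ (universe $[\pm n^{c}]$ after the standard hashing reduction) and aim to produce, in subquadratic overhead, a small number of \ThreeSum instances $A'$ with the same answer. Each $A'$ should have at most $n^{2-\delta}$ non-trivial additive quadruples $a+b=c+d$.

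\textbf{Step 1 (dichotomy on additive energy).} We compare the additive energy $E(A)=|\{(a,b,c,d)\in A^4 : a+b=c+d\}|$ with the threshold $n^{3-\delta}$.
If the energy is large, we invoke the Balog--Szemer\'edi--Gowers theorem in its algorithmic form. This yields a large subset $A_1\subseteq A$ with small doubling, hence contained in a short generalized arithmetic progression. For that structured piece we solve \ThreeSum in truly subquadratic time, by FFT over the progression's coordinates. We then recurse on $A\setminus A_1$; iterating this peels off structure until the remaining set has energy at most $n^{3-\delta}$.
If the energy is small, there are only $O(n^{3-\delta})$ quadruples, so it remains to destroy them.

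\textbf{Step 2 (destroying the remaining quadruples).} We randomly subsample or randomly partition $A$ into groups, as in the color-coding and sampling used elsewhere in the paper. A fixed solution $x+y+z=0$ survives in some triple of groups, while each quadruple survives in a given sub-instance only with small probability. We choose the number of groups so that the expected number of quadruples per sub-instance is below one, and we discard the few elements involved in a surviving quadruple. Each discarded element is first checked by brute force in $O(n)$ time against all pairs via sorting/2SUM, so no solution is lost. The outcome is $n^{o(1)}\cdot\mathrm{poly}$ many genuine Sidon \ThreeSum instances, at total cost $O(n^{2-\delta'})$.

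\textbf{Step 3 (to \ZeroTriangle).} We apply the standard reduction from \ThreeSum to \ZeroTriangle of Vassilevska Williams and Williams (via Convolution-\ThreeSum and bucketing by hashing). In that reduction every edge weight is an element of the input, up to a shift that is fixed per edge part. To make the weight set itself Sidon, we add three large distinct offsets $X$, $Y$, and $-X-Y$ to the $AB$, $BC$, and $CA$ edges. Triangle sums are unchanged, and an equation $a+b=c+d$ among weights can only hold with matching offsets, so it pulls back to an additive quadruple in $A'$. Such quadruples are trivial by Step 2. Since the reduction maps an $O(n^{3-\eps})$ \ZeroTriangle algorithm to an $O(n^{2-\eps'})$ \ThreeSum algorithm, we obtain the claimed lower bound.

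The main obstacle is Step 1: it needs an efficient algorithmic BSG and a fast solver for \ThreeSum on sets of small doubling, with parameters that keep the total time subquadratic. We would first try the known algorithmic BSG by Chan and Lewenstein, and accept a polynomial loss in $\delta$.
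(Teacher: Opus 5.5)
The paper does not prove this lemma; it imports it from Jin and Xu. Its real content is their theorem that 3SUM stays hard on Sidon sets, which your Steps 1--2 would have to reprove. Your skeleton (lower the additive energy, destroy the remaining quadruples, reduce to \textsc{ZeroTriangle}) is reasonable, but the mechanism you propose for Step 1 does not work, and this is the central gap.

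First, doubling $K=n^{\Theta(\delta)}$ does not put $A_1$ inside a low-dimensional progression of size $\mathrm{poly}(K)|A_1|$. A union of $t$ generic translates of an interval has doubling $O(t)$, yet any progression containing it has dimension about $t$ or size exponential in $t$, so there is nothing to run FFT over. What algorithmic BSG in the style of Chan--Lewenstein actually gives you is a covering of popular-sum \emph{pairs} by rectangles $A'\times B'$ with small $|A'+B'|$, whose sumsets can then be computed quickly.

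Second, and more seriously, peeling off elements is unsound. Once $A_1$ is deleted you lose every solution that uses an element of $A_1$ together with elements of $A\setminus A_1$. Recovering these by 2SUM costs $\Theta(n)$ per element of $A_1$, hence $\Theta(n^2)$ over the whole peeling. BSG gives structure on pairs, not on elements. Turning it into a recursion that outputs genuine low-energy \emph{sets} within a subquadratic budget is exactly the idea your sketch is missing.

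Step 2 fails for similar budget reasons. A uniformly random partition into $g$ groups creates about $g^3$ sub-instances of size about $n/g$, costing $n^{2-\eps}g^{1+\eps}$. That is subquadratic only for $g<n^{\eps/(1+\eps)}$, while pushing $n^{3-\delta}$ quadruples below one per sub-instance needs $g\approx n^{(3-\delta)/4}$. An almost-linear hash cuts this to about $g^2$ sub-instances. But then $a+b=c+d$ forces $h(d)\approx h(a)+h(b)-h(c)$, so a quadruple survives in a given sub-instance with probability about $g^{-3}$, not $g^{-4}$. You also never say how the surviving quadruples are found. Testing a size-$m$ sub-instance for additive quadruples is Sidon-set verification, and the obvious $O(m^2)$ test summed over $g^2$ sub-instances of size $m=n/g$ is already $n^2$. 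Your global $O(n)$ brute-force check per discarded element is likewise quadratic once $\Theta(n)$ elements are discarded, and nothing in your argument bounds that number.

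Step 3 is wrong as stated: the offsets destroy the Sidon property rather than enforce it. For any $s\ne t$ in $A'$ that appear on both $AB$- and $BC$-edges, the weights satisfy $(s+X)+(t+Y)=(t+X)+(s+Y)$ with all four terms distinct. Matching offset multisets does not force matching pairings, so the trivial quadruple $s+t=t+s$ of $A'$ lifts to a non-trivial one among the weights. The easy fix is to colour-code $A'$ into three disjoint classes and put each class, unshifted, on one edge type, so that the weight set is a subset of $A'$.
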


Vassilevska W. and Xu~\cite{williams2020monochromatic} studied the following problem.
\begin{definition}[$(\rho,t)$-\AllEdgeTriangleListing \cite{williams2020monochromatic}] \label{def:AETD}
    Let $G=(A\cup B\cup C,E)$, be a tripartite graph with maximum degree $O(n^{1-\rho})$.
    In the $(\rho, t)$-\AllEdgeTriangleListing problem, the goal is to list for each edge $e \in E$, all triangles that contain $e$, if there are at most $t$ such triangles; otherwise, to list any $t$ distinct triangles containing $e$, assuming there are at least $t$ such triangles.
\end{definition}
We define $\rho$-\AllEdgeTriangle to be $(\rho,1)$-\AllEdgeTriangleListing.
Vassilevska W. and Xu \cite{williams2020monochromatic} showed that, under the \ThreeSum hypothesis, for every $\rho \ge 0.5$, there is no $O(n^{3 - 2\rho - \eps})$ time algorithm that solves $\rho$-\AllEdgeTriangle. 

Our goal is to show hardness for girth approximation. Therefore, in the following definition, we generalize \Cref{def:AETD} 
of the $(\rho,t)$-\AllEdgeTriangleListing problem by introducing an additional parameter $k$ and asking for every edge to detect a $C_{\le k}$. 
This can be viewed as a $k/3$-approximation to the $(\rho,t)$-\AllEdgeTriangleListing.

\begin{definition}[$(k,\rho,t)$-\AllEdgeCycleListing]
    Let $G=(A\cup B\cup C,E)$, be a tripartite graph with maximum degree $O(n^{1-\rho})$.
    In the $(k, \rho, t)$-\AllEdgeCycleListing problem, the goal is to list for each edge $e \in E$, all $C_{\le k}$ that contain $e$, if there are at most $t$ such $C_{\le k}$; otherwise, to list any $t$ distinct $C_{\le k}$ containing $e$, assuming there are at least $t$ such $C_{\le k}$.
\end{definition}

We define $(k,\rho)$-\AllEdgeCycle to be $(k,\rho,1)$-\AllEdgeCycleListing.

\section{Lower bounds for all edge shortest cycle approximation}
\subsection{$m^{1+\frac{1}{k-1}-o(1)}$ lower bound for $k/3$-stretch for all edge shortest cycle}
In this section, we show that using the techniques of \cite{jin2023removing} it is also possible to prove a lower bound for all edge shortest cycle. We use the following lemma:
\begin{lemma}[\cite{jin2023removing}]\label{L-jin-xu-6.1}
    Fix any constant $\sigma \in (0, 0.5)$, and any integer $k \ge 3$. Under the \ThreeSum hypothesis, it requires $n^{2-o(1)}$ time to solve $n^{3\sigma}$ instances of All-Edges Sparse Triangle on tripartite graphs with $O(n^{1-\sigma})$ vertices and maximum degree $O(n^{0.5-\sigma})$, such that the total number of cycles of length at most $k$ over all instances is $O(n^{k/2-(k-3)\sigma})$
\end{lemma}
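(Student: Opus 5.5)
The plan is to reduce from \Cref{C-1.6-jin}, using the same random-partition step as in the proof of \Cref{T-LB-m-alledgeorlisting}, but to count surviving cycles over all instances together rather than per instance. Start with an \AllEdgeTriangle instance $G$ on $n$ vertices with maximum degree at most $\sqrt n$ and at most $n^{k'/2}$ cycles of length $k'$ for every $k'\ge 3$. By color coding, assume $G$ is tripartite with parts $A,B,C$. Independently and uniformly, assign each vertex of $A$ (resp.\ $B$, $C$) to one of $n^{\sigma}$ buckets $A_i$ (resp.\ $B_j$, $C_\ell$). Output the $n^{3\sigma}$ induced instances $G_{i,j,\ell}=G[A_i\cup B_j\cup C_\ell]$.

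\textbf{Correctness and size.} Every triangle of $G$ has exactly one vertex in each part, so it lies in exactly one $G_{i,j,\ell}$. Hence, for each edge $e$, the triangle answers for $e$ in the (at most $n^{\sigma}$) instances containing $e$ together determine the answer for $e$ in $G$. Combining them costs $\Ot(|E|)=\Ot(n^{1.5})$, which is truly subquadratic, as is the construction itself. By Chernoff bounds and a union bound, whp every bucket has $O(n^{1-\sigma})$ vertices. Likewise, every vertex has $O(n^{0.5-\sigma})$ neighbours in each bucket; here we use $n^{0.5-\sigma}\ge n^{\Omega(1)}\gg\log n$, since $\sigma<1/2$ is a constant. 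So each instance has $O(n^{1-\sigma})$ vertices and maximum degree $O(n^{0.5-\sigma})$.

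\textbf{Cycle count.} Fix a cycle $Z$ of length $k'\le k$ in $G$. Since its vertices are distinct, their bucket assignments are independent. $Z$ appears in some instance only if all its vertices in the same part land in the same bucket, and then it appears in exactly one instance. If $Z$ meets $s\in\{2,3\}$ parts, this happens with probability $n^{-(k'-s)\sigma}\le n^{-(k'-3)\sigma}$: the first vertex in each part is free and the remaining $k'-s$ vertices must each match it. Summing over all cycles, the expected total number of cycles of length at most $k$ over all instances is
\[\sum_{k'=3}^{k} n^{k'/2}\cdot n^{-(k'-3)\sigma}=\sum_{k'=3}^{k} n^{k'(1/2-\sigma)+3\sigma}=O\bigl(n^{k/2-(k-3)\sigma}\bigr).\]
The last step holds because $1/2-\sigma>0$, so the exponent increases with $k'$, and $k$ is a constant. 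By Markov's inequality the bound holds (up to a constant factor) with constant probability. Repeating the partition $O(\log n)$ times and keeping a good repetition, while checking the Chernoff events, makes this whp. Counting cycles of length at most $k$ to test a repetition is too costly, so instead we run the whole reduction on each of the $O(\log n)$ repetitions and use the fact that any failure only affects running time. Concretely, we cap the time and discard a repetition that exceeds the budget.

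\textbf{Conclusion and main obstacle.} If all $n^{3\sigma}$ instances could be solved in $n^{2-\eps}$ total time, \AllEdgeTriangle on $G$ would be solved in $\Ot(n^{2-\eps}+n^{1.5})$ time, contradicting \Cref{C-1.6-jin}. I expect the only delicate step to be the survival-probability computation. The point is to count each surviving cycle once overall, not once per instance as in the per-instance estimate $n^{-k'\sigma}$ earlier. We must also handle cycles that use only two of the three parts, which the bound above covers since their survival probability is even smaller.
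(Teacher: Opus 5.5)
Your reduction is the right one. It matches the derivation the paper carries out in the proof of \Cref{T-LB-m-alledgeorlisting}; the lemma itself is only cited. You color-code, split each part into $n^{\sigma}$ random buckets, use Chernoff for sizes and degrees, and bound cycles by linearity of expectation, and your final displayed bound is correct. However, the step you single out as delicate is justified incorrectly.

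\textbf{Two-part cycles.} A cycle $Z$ meeting only $A$ and $B$, with its $A$-vertices in $A_i$ and its $B$-vertices in $B_j$, lies in \emph{every} instance $G_{i,j,\ell}$, $\ell\in[n^{\sigma}]$, not in exactly one. The ``total number of cycles over all instances'' is a sum over instances, so it counts $Z$ $n^{\sigma}$ times. This is how the paper uses the lemma in \Cref{T-AE-approximation-LB-k}, where a cost is paid for each cycle in each instance. The expected contribution of $Z$ is therefore $n^{\sigma}\cdot n^{-(k'-2)\sigma}=n^{-(k'-3)\sigma}$, which equals your bound rather than being smaller. So ``their survival probability is even smaller'' gives the right inequality for the wrong reason, and as written your argument only bounds the number of \emph{distinct} surviving cycles.

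The clean computation is exactly the per-instance estimate you set aside. For each fixed triple, $\Pr[Z\subseteq G_{i,j,\ell}]=n^{-k'\sigma}$ by independence, and summing over the $n^{3\sigma}$ triples gives $n^{-(k'-3)\sigma}$ per cycle with no case distinction. For three-part cycles this coincides with your count, so there is no tension between the two viewpoints.

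The same multiplicity oversight affects your combining cost. Each edge lies in $n^{\sigma}$ instances, so the cost is $O(n^{1.5+\sigma})$ rather than $\tilde O(n^{1.5})$. This is still subquadratic since $\sigma<1/2$.

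\textbf{The boosting step.} An algorithm for the promise problem in the statement is only guaranteed to be correct and fast on inputs meeting the cycle bound. On a bad repetition it may stop well within your time cap and output wrong answers. So ``any failure only affects running time'' does not follow, and the cap alone does not tell you which repetition to trust.

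A simple repair is to require a witness triangle for every positive edge, as in the paper's formulation of the all-edges triangle problem. A Boolean algorithm can be upgraded by the standard witness-finding reduction, whose calls are on subgraphs and hence still satisfy the promise. Verify each witness in $O(1)$ time, and declare $e$ to be in a triangle iff some repetition supplies a valid witness. Bad repetitions can then only cause false negatives, which the good repetition (present whp) overrides.

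The paper's own write-up in \Cref{T-LB-m-alledgeorlisting} simply says ``repeat $O(\log n)$ times'' at this point. With these two corrections your argument is complete.
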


Next, we prove the lower bound for AESC using the methods of \cite{jin2023removing}.
\begin{theorem}\label{T-AE-approximation-LB-k}\Copy{T-AE-approximation-LB-k}{
    Under the \ThreeSum hypothesis, given a graph $G=(V, E)$ with $n$ vertices and maximum degree $n^{1/(k-2)}$, there is no $O(n^{1+2/(k-2) - \eps})$, for $\eps>0$, time algorithm, for a $(k/3-\delta)$, for $\delta>0$, approximation for AESC. In terms of the number of edges $m$, the lower bound is $m^{1+1/(k-1)-o(1)}$.
    }
\end{theorem}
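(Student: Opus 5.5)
We reduce from the hard instances of \Cref{L-jin-xu-6.1}: a $(k/3-\delta)$-approximation for AESC must, for every edge lying in a triangle, report a cycle through that edge of length $<k$. We then choose $\sigma$ so that every such short cycle is almost certainly absent. Our approximation algorithm, run on these instances, then solves All-Edges Sparse Triangle.

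\textbf{Step 1: choice of $\sigma$.} Apply \Cref{L-jin-xu-6.1} with $k$ replaced by $k-1$. We get $n^{3\sigma}$ tripartite instances, each with $N=O(n^{1-\sigma})$ vertices and maximum degree $O(n^{0.5-\sigma})$. The total number of cycles of length at most $k-1$ is $O(n^{(k-1)/2-(k-4)\sigma})$. Choosing $\sigma$ so that the maximum degree is $N^{1/(k-2)}$ requires
\[
0.5-\sigma=\frac{1-\sigma}{k-2},
\qquad\text{i.e.,}\qquad
\sigma=\frac{k-4}{2(k-3)}.
\]
For this $\sigma$ we then count how many cycles of length $4,\dots,k-1$ remain, versus the time budget. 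I expect that the total count can be made at most the target running time, either directly or after one more random subsampling round in the style of the proof of \Cref{T-LB-m-alledgeorlisting}. This would put the instances in the regime where non-triangle short cycles are few.

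\textbf{Step 2: solving each instance.} On an instance $H$, run the hypothesised $(k/3-\delta)$-approximation for AESC. For each edge $e$, the reported cycle has length at most $(k/3-\delta)\cdot 3<k$ whenever $e$ is in a triangle. If the reported cycle has length $3$, we have a triangle through $e$. Otherwise the cycle has length between $4$ and $k-1$, so it belongs to the small family counted in Step 1.

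We need to deal with edges that get such a ``spurious'' cycle. The first plan is to charge each such edge to the cycle it received and handle these edges brute-force. A bad edge $e=(a,b)$ lies on some short non-triangle cycle, and the number of such edges is at most $(k-1)$ times the cycle count. For each of them we check for a triangle directly in $O(\text{deg})=O(n^{0.5-\sigma})$ time, by intersecting adjacency lists in hash tables. The total cost is (number of short cycles) times $n^{0.5-\sigma}$, which is subquadratic for the chosen $\sigma$. This is the step I expect to be the main obstacle. The cycle count from \Cref{L-jin-xu-6.1} must be balanced against the degree, and if it is too large one must instead sub-partition with color coding so that each sub-instance retains few short cycles. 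The second option is to argue, as in the earlier theorem, that $n^{k'(1/2-\sigma)+3\sigma}$ is small for all $k'<k$.

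\textbf{Step 3: accounting.} Write $N$ for the instance size and $\eps$ for the hypothesised saving. The total time is
\[
n^{3\sigma}\cdot (N^{1+2/(k-2)-\eps}) \;+\; \text{(brute-force cost)} .
\]
Here $N^{1+2/(k-2)}=n^{(1-\sigma)k/(k-2)}$, and with $\sigma=(k-4)/(2(k-3))$ this gives
\[
3\sigma+(1-\sigma)\frac{k}{k-2}=2 .
\]
So any $\eps>0$ saving yields an $n^{2-\eps'}$ algorithm, contradicting \Cref{L-jin-xu-6.1}. Finally, for the edge count, $m=N\cdot N^{1/(k-2)}=N^{(k-1)/(k-2)}$, so
\[
N^{1+2/(k-2)}=N^{k/(k-2)}=m^{k/(k-1)}=m^{1+1/(k-1)} ,
\]
which gives the stated bound in terms of $m$.
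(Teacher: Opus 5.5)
Your reduction is the same as the paper's: apply Lemma~\ref{L-jin-xu-6.1} with parameter $k-1$, run the approximation algorithm on every instance, and brute-force check, in $O(n^{1/2-\sigma})$ time each, the edges that received a cycle of length between $4$ and $k-1$. However, the parameter you commit to does not work, and it fails at exactly the step you flag as the ``main obstacle''. With $\sigma_0=\frac{k-4}{2(k-3)}$ the verification cost is
\[
n^{\frac{k-1}{2}-(k-4)\sigma_0}\cdot n^{\frac12-\sigma_0} \;=\; n^{\frac k2-(k-3)\sigma_0}\;=\;n^{2}.
\]
So your claim in Step~2 that this cost is subquadratic ``for the chosen $\sigma$'' is false. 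At $\sigma_0$ the main term and the verification term both sit at exactly $n^2$, and Step~3 accounts only for the main term. A total of $O(n^2)$ does not contradict the $n^{2-o(1)}$ bound of the lemma, so the argument as written gives no contradiction.

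The missing idea is to create slack by moving $\sigma$ slightly above the balance point. The paper takes $\sigma=\frac12-\frac{1}{2k-6}+\frac{\eps}{4}$, i.e.\ $\sigma=\sigma_0+\eta$ for a small $\eta>0$. The verification cost then becomes $n^{2-(k-3)\eta}$. The main term has exponent
\[
3\sigma+(1-\sigma)\Big(\tfrac{k}{k-2}-\eps\Big)\;=\;2+\eta\cdot\tfrac{2k-6}{k-2}-(1-\sigma_0-\eta)\,\eps,
\]
which stays below $2$ as long as $\eta$ is small compared to the saving $\eps$. The maximum degree $n^{1/2-\sigma}$ now lies below $N^{1/(k-2)}=n^{(1-\sigma)/(k-2)}$, so the instances are still valid inputs for the hypothesised algorithm. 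No extra random subsampling or color coding is needed, and your conversion to $m^{1+1/(k-1)}$ is fine.
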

\begin{proof}
    Assume that such an algorithm $Alg$ exists for the sake of contradiction. Equivalently, say the running time of $Alg$ is $O(n^{1+ \frac{2}{k-2}-\epsilon'})$ for some $\epsilon'> 0$. We apply \Cref{L-jin-xu-6.1}, with parameter $k - 1$.

    For every instance, we only need to report for each edge whether it is in a triangle. 
    Since $Alg$ is guaranteed to return a $k/3-\delta$ approximation, $Alg$ finds for every edge that was in a triangle a cycle of length at most $k-1$.
    We will run $Alg$ in every instance, and for every edge that participated in a cycle of length at most $k-1$ we compute in $O(n^{0.5-\sigma})$ time whether it was contained in a triangle by going over all the neighbors of one of the endpoints of the edge.
    
    We get that the total running time is:
    $$O(n^{3\sigma}\cdot (n^{1-\sigma})^{1+2/(k-2)-\epsilon'} + n^{\frac{k-1}{2}-(k-4)\sigma}\cdot n^{0.5-\sigma})$$
    Setting $\sigma=1/2-1/(2k-6)+\epsilon/4<1/2$ gives a truly subquadratic running time, which contradicts the \ThreeSum hypothesis.
\end{proof}

\subsection{Alternative lower bound proof for the $(5/3-\epsilon)$ short cycle approximations}
Next, we prove the lower bound for $k/3-\delta$ stretch for all edge cycle in an alternative way. We show that under the \ThreeSum hypothesis, for $\rho \ge 2/3$ 
there is no $O(n^{3-2\rho-\eps})$ time algorithm for the $(4,\rho,O(1))$-\AllEdgeCycleListing. We prove:
\begin{theorem}\label{T-AE-approximation-LB-1}\Copy{T-AE-approximation-LB-1}{
    Under the \ThreeSum hypothesis, given a graph $G=(V, E)$ with $n$ vertices and maximum degree $n^{1/3}$, there is no $O(n^{5/3 - \eps})$-time algorithm, for a $(5/3-\eps)$-approximation for AESC. In terms of the number of edges $m$, the lower bound is $m^{5/4-o(1)}$.
    }
\end{theorem}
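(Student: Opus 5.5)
The plan is to reduce from \Cref{L-jin-xu-6.1}, following the same template as the proof of \Cref{T-AE-approximation-LB-k} but with $k=5$ specialised and with the degree/size parameters tuned to give the $n^{5/3}$ bound. Assume for contradiction an algorithm $Alg$ that, on $N$-vertex graphs of maximum degree $N^{1/3}$, returns for every edge a $(5/3-\eps)$-approximation of the shortest cycle through it in $O(N^{5/3-\eps})$ time. Since a triangle-edge has a cycle of length $3$ through it, $Alg$ must report for each such edge a cycle of length $<5$, i.e.\ a $C_{\le 4}$. Thus $Alg$ solves a relaxed version of $(4,\rho,1)$-\AllEdgeCycleListing with $\rho=2/3$.

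First, I will invoke \Cref{L-jin-xu-6.1} with cycle-length parameter $4$ and a constant $\sigma\in(0,1/2)$ to be chosen. This yields $n^{3\sigma}$ instances of All-Edges Sparse Triangle, each with $N=O(n^{1-\sigma})$ vertices and maximum degree $O(n^{1/2-\sigma})$, and with $O(n^{2-\sigma})$ cycles of length at most $4$ in total. To match the degree restriction $N^{1/3}$, I need $n^{1/2-\sigma}\le (n^{1-\sigma})^{1/3}$, i.e.\ $\sigma\ge 1/4$. I will take $\sigma=1/4$ (plus a tiny slack if needed), so $N=n^{3/4}$ and the degree is $n^{1/4}=N^{1/3}$.

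Next, I run $Alg$ on each instance. For every edge $e=(a,b)$ for which $Alg$ reports a cycle of length at most $4$, I check whether $e$ lies in a triangle by scanning the $O(n^{1/4})$ neighbours of $a$ with hashing; any edge that lies in a triangle is guaranteed to be flagged. The obstacle is that an edge might be flagged merely because it lies in a $C_4$. Each $C_{\le 4}$ flags at most $4$ edges, so the number of flagged edges is $O(n^{2-\sigma})=O(n^{7/4})$, and the verification cost is $O(n^{7/4}\cdot n^{1/4})=O(n^2)$. This is borderline, not truly subquadratic, and it is the step I expect to be the main difficulty.

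To handle it, I will push $\sigma$ slightly above $1/4$, say $\sigma=1/4+\gamma$. The degree constraint then holds with room to spare, and the verification cost drops to $O(n^{2-\sigma}\cdot n^{1/2-\sigma})=O(n^{2-2\gamma})$. The call cost is $n^{3\sigma}\cdot (n^{1-\sigma})^{5/3-\eps}=n^{5/3+4\sigma/3-\eps(1-\sigma)}$, which at $\sigma=1/4+\gamma$ equals $n^{2+4\gamma/3-\eps(3/4-\gamma)}$. Choosing $\gamma<\eps/4$ makes this truly subquadratic as well, contradicting \Cref{L-jin-xu-6.1}. For the edge-count statement, each instance has $m=N\cdot N^{1/3}=N^{4/3}$, so $N^{5/3}=m^{5/4}$. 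Finally, the degree bound can be padded so that it is exactly $N^{1/3}$, for example by adding disjoint high-girth filler, which does not affect any answers.
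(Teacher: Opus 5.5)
Your proof is correct. With \Cref{L-jin-xu-6.1} at cycle-length parameter $4$ and $\sigma=1/4+\gamma$, the three requirements hold:
\begin{itemize}
\item the degree bound holds with polynomial slack;
\item the calls cost $n^{2+4\gamma/3-\eps(3/4-\gamma)}$;
\item every flagged edge lies on one of the $O(n^{2-\sigma})$ cycles of length at most $4$, so verification costs $O(n^{5/2-2\sigma})=O(n^{2-2\gamma})$.
\end{itemize}
This is exactly the paper's proof of \Cref{T-AE-approximation-LB-k} instantiated at $k=5$. There the paper takes $\sigma=1/2-1/(2k-6)+\eps/4=1/4+\eps/4$, and that theorem already contains the present statement as a special case.

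The proof the paper gives under this theorem, however, is the ``alternative'' one, and it takes a genuinely different route.
\begin{itemize}
\item It starts from \ZeroTriangle with Sidon-set weights (\Cref{L-Sidon-Exact}), not from Jin--Xu's cycle-sparse All-Edges Sparse Triangle instances.
\item It rehashes the weights as $x\cdot\ell(u,v)+y_u-y_v \bmod p$ and cuts $F_p$ into $n^{\rho}$ ranges, giving $O(n^{2\rho})$ relevant subgraphs of degree about $n^{1-\rho}$.
\item The Sidon property makes the range constraints for a $4$-cycle through an edge of a zero triangle independent. At $\rho=2/3$, such an edge therefore lies on only $O(n^{2-3\rho})=O(1)$ expected $4$-cycles, and on $O(n^{1-2\rho})$ non-zero triangles.
\item \Cref{L-AllEdgeCycleListing-1=O(1)}, via repeated random edge-subsampling at rate $1/2$, turns a one-cycle-per-edge detector into an $O(1)$-lister. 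Hence the zero triangle must eventually be reported.
\end{itemize}

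In short, the paper controls spurious short cycles per edge and pays with a detection-to-listing self-reduction. You control them through a global count and pay with an explicit verification scan.

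Taking \Cref{L-jin-xu-6.1} as a black box, your route has three advantages: it is shorter, it extends verbatim to every $k$, and it uses the algorithm's promise only on triangle edges, which is exactly what a $(5/3-\eps)$-approximation guarantees. The paper's route instead passes through $(4,\rho)$-\AllEdgeCycle. In exchange, it yields an intermediate lower bound of $n^{3-2\rho-o(1)}$ for that problem for every $\rho\ge 2/3$ (\Cref{C-AE-approximation-LB-1}), drawn from a different hardness source.
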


The reduction follows the same steps as the reduction of \cite{williams2020monochromatic}, however we show that if the weights of the input graph are a Sidon set then the number of $C_4$ is very small (\Cref{C-no-4-cycles}), and therefore even approximating a shortest cycle for each edge, that is, reporting either a triangle or a $C_4$ is hard.
\begin{theorem}\label{T-AllEdgeCycleListing-Hard}
    Let $\rho \ge 2/3$. Under the \ThreeSum hypothesis, there is no $O(n^{3-2\rho-\eps})$, for $\eps > 0$, time algorithm for $(4,\rho,O(1))$-\AllEdgeCycleListing.
\end{theorem}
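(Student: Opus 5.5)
We reduce from \ZeroTriangle with Sidon weights, which is hard by \Cref{L-Sidon-Exact}, following the reduction of \cite{williams2020monochromatic} from \ZeroTriangle to $\rho$-\AllEdgeTriangle. Assume an $O(n^{3-2\rho-\eps})$ time algorithm for $(4,\rho,O(1))$-\AllEdgeCycleListing. Take a \ZeroTriangle instance $H$ on $N$ nodes whose edge weights form a Sidon set. Partition each side of $H$ into groups, and hash the weights by a linear (almost-linear modulo a random prime) hash function $h$ into $R$ buckets. For every pair of groups and every bucket-consistent choice, build a tripartite graph $G$. Its vertices are pairs (node of $H$, bucket value) and (node, partial sum) arranged in the standard way, so that a triangle in $G$ through a given edge corresponds to a triangle in $H$ whose hashed weight sum is consistent with zero. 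The parameters are chosen so that each $G$ has $n$ vertices, maximum degree $O(n^{1-\rho})$, and each edge of $G$ encodes a weighted edge of $H$. Zero triangles then appear as triangles of $G$ through specific edges, and these are checked exactly. \emph{False positives} (triangles that are zero only modulo the hash) are few per edge in expectation, so listing $O(1)$ triangles per edge suffices after the standard trick of discarding heavy edges.

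The new ingredient is the analogue of \Cref{C-no-4-cycles}: we show that every 4-cycle in $G$ is rare. A 4-cycle in $G$ yields four $H$-edges $a,b,c,d$ (weights) with $a+b\equiv c+d$ in the hashed sense. There are two cases. If the relation lifts to the integers, the Sidon property forces it to be trivial, and the construction rules trivial relations out by distinctness of endpoints, exactly as in the proof of \Cref{T-Stronger-3sum-LB}. If the relation does not lift, it holds only by a hash collision, which happens with probability $O(1/R)$. We then bound the expected number of 4-cycles through each edge by $O(1)$, using the degree bound $n^{1-\rho}$ squared times $1/R$. This is where $\rho\ge 2/3$ enters: it makes $n^{2(1-\rho)}/R=O(1)$ with $R\approx n^{\rho}$-scale buckets. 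Markov's inequality plus removal of the few edges with too many $C_{\le 4}$ (handled by brute force) then gives every edge $O(1)$ cycles of length at most $4$.

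With that in hand, run the listing algorithm with $t=O(1)$. For each edge it returns all $C_{\le 4}$ containing it, hence all triangles, and we verify each one for zero weight in $H$. The total time is (number of instances)$\cdot n^{3-2\rho-\eps}$, and with the standard parameter accounting of \cite{williams2020monochromatic} this is $N^{3-\eps'}$, contradicting \Cref{L-Sidon-Exact}.

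The main obstacle is the 4-cycle counting. The hash must be linear enough that a genuine integer relation $a+b=c+d$ is preserved, yet random enough that spurious collisions have probability $1/R$. We must also handle 4-cycles of both types (two vertices on one side, or spanning three parts), the latter needing a bit-splitting argument as in \Cref{T-Stronger-3sum-LB}. We would first try a random prime modulus, where almost-linearity costs only $O(1)$ shifts.
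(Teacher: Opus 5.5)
\paragraph{The gap: the 4-cycle count.} This is the one genuinely new step, and your bound ``$(n^{1-\rho})^2$ paths times probability $O(1/R)$ of closing'' does not hold as stated. It needs the closing condition to still have probability $O(1/R)$ \emph{conditioned on} the path $b'-a-b-c'$ lying in the same subgraph as the zero triangle $(a,b,c)$. In other words, the three bucket events for $(a,b')$, $(b,c')$, $(b',c')$ must be essentially independent, which would give $n^2\cdot n^{-3\rho}$.

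The hash you propose does not give this, for two reasons:
\begin{itemize}
\item If its only randomness is a random prime modulus $q$, all bucket events are functions of the same $q$. Nothing forces their joint probability to be close to the product.
\item If buckets are intervals of residues, any relation $D=w_1+w_2-w_3-w_4$ with $0<|D|\le q/R$ collides with probability $1$. The Sidon property guarantees only $D\neq 0$, not that $|D|$ is large.
\end{itemize}
There is also a problem with the vertex splitting. Your $G$-vertices are (node, bucket) and (node, partial sum) pairs, so a 4-cycle of $G$ need not consist of four distinct $H$-edges. The appeal to the Sidon property is therefore not automatic, and the construction is not specified enough to check it.

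\paragraph{What the paper does.} Following \cite{williams2020monochromatic}, it uses a different hash and no vertex splitting. Over $F_p$, with $p$ exceeding the weight range, it sets $\ell'(u,v)=x\,\ell(u,v)+y_u-y_v$, where $x$ and all shifts $y_v$ are independent and uniform. It cuts $F_p$ into $n^{\rho}$ intervals and takes the $O(n^{2\rho})$ subgraphs $G'_{i,j,k}$ of the original graph selected by the interval of each edge type. A 4-cycle of $G'_{i,j,k}$ is then a 4-cycle of the original graph, so its four weights are distinct.

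For a 4-cycle $a-b-c'-b'$, the three interval conditions are independent:
\begin{itemize}
\item The condition comparing $(b,c')$ with $(b,c)$ is the only one involving $y_{c'}$.
\item The conditions comparing $(b',c')$ with $(b,c')$ and $(a,b')$ with $(a,b)$ sum to $xD$, where $D=\ell(a,b)+\ell(b,c')-\ell(a,b')-\ell(b',c')\neq 0$ by the Sidon property. So the sum is uniform.
\item Each of those two conditions involves $y_{b'}$, which cancels in the sum.
\end{itemize}
Each event has probability $n^{-\rho}$, giving $n^{2-3\rho}=O(1)$ for $\rho\ge 2/3$ (\Cref{C-no-4-cycles}). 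Non-zero triangles are handled the same way, with bound $n^{1-2\rho}$.

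\paragraph{Two unnecessary steps.} First, no bit-splitting is needed here. It was needed in \Cref{T-Stronger-3sum-LB} only because many edges there encode the same element of $A$. Second, you do not need to identify and brute-force edges with many short cycles, and there is no evident way to do so within the time budget. Only the edge $(a,b)$ of a zero triangle must have fewer than $T$ short cycles. Markov's inequality gives this with constant probability, and $O(\log n)$ repetitions boost it.
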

\begin{proof}
    From \Cref{L-Sidon-Exact} we have that there is no $O(n^{3-\eps})$ time algorithm for \ZeroTriangle, such that $\{\wt(u,v) \mid (u,v)\in E\}$ is a Sidon set. 

    We show that if there exists an algorithm $Alg$ that solves \\ $(4,\rho,T)$-\AllEdgeCycleListing, for some constant $T>0$, in $O(n^{3-2\rho-\eps})$ time, 
    then there is an $O(n^{3-\eps'})$ time algorithm for \ZeroTriangle, where $\{\wt(u,v)\mid (u,v)\in E\}$ is a Sidon set.
    
    Let $G=(A\cup B\cup C,E,\wt)$, where $\wt:E\rightarrow [\pm n^M]$, for some constant $M\ge 1$, be a tripartite instance of \ZeroTriangle, such that $\{\wt(u,v)\mid (u,v)\in E\}$ is a Sidon set.

    We pick an arbitrary prime $p \in [100n^M, Dn^M\log{n}]$, for large enough constants $M, D$.
    After we determine $p$, we can regard all the weights of the graph as elements in
    the field $F_p$ by taking the weight of every edge modulo $p$. Since the weight of each triangle is in $[-3n^M, 3n^M]$ and
    $p\ge 100n^M$, the set of zero triangles with respect to the new weights stays the same.

    Next, we define the graph $G'=(A\cup B\cup C,E,\wt')$ as follows.
    Let $x\in F_p$ be a random element from $F_p$, and for every $v\in A\cup B\cup C$ let $y_v$ be an independently random element from $F_p$.
    Let $a\in A,b\in B$ and $c\in C$.
    We define the weight function $\wt'$ as follows: 
    \begin{itemize}
        \item $\wt'(a,b)=x\cdot \wt(a,b)+y_a-y_b$.
        \item $\wt'(b,c)=x\cdot \wt(b,c)+y_b-y_c$.
        \item $\wt'(c,a)=x\cdot \wt(c,a)+y_c-y_a$.
    \end{itemize}
    Notice that if $\ell(a,b,c,a)=0$ then $\wt'(a,b,c,a)=0$, and if $\wt'(a,b,c,a)=0$ then $\wt(a,b,c,a)=0$ as long as $x\neq 0$ (which happens w.h.p).

    To use $Alg$, we need to have that the maximum degree is $O(n^{1-\rho})$, therefore we create subgraphs $G'_{i,j,k}=(A\cup B\cup C, E'_{i,j,k}, \ell')$ of $G'$ as follows.
    We split $F_p$ into $n^\rho$ ranges: $L_0,\dots,L_{n^\rho-1}$, such that $|L_i|\le \ceil{pn^{-\rho}}$, and $L_i=\{i\cdot \ceil{pn^{-\rho}} + j' \mid 0 \le j' < \ceil{pn^{-\rho}} \text{ and } i\cdot \ceil{pn^{-\rho}} + j' \le p\}$.
    For every triplet $i,j,k\in [n^\rho]$, 
    the graph $G'_{i,j,k}$ is defined to be the subgraph such that $\wt'(a,b)\in L_i$, $\wt'(b,c)\in L_j$ and $\wt'(c,a)\in L_k$.
    
    Notice, that for every $i,j\in [n^{\rho}]$, there are $O(1)$ options of $k$ such that $G'_{i,j,k}$ may contain a zero triangle, and therefore it suffices to list all the triangles in these $O(n^{2\rho})$ subgraphs. See \Cref{fig:h_G_ijk} for an illustration of $G'_{i,j,k}$. \begin{figure}
    \centering
    \tikzset{every picture/.style={line width=0.75pt}} %set default line width to 0.75pt        
    \tikzset{every picture/.style={line width=0.75pt}} %set default line width to 0.75pt        

\begin{tikzpicture}[x=0.75pt,y=0.75pt,yscale=-1,xscale=1]
%uncomment if require: \path (0,300); %set diagram left start at 0, and has height of 300

%Shape: Circle [id:dp36731749732929797] 
\draw   (385,262) .. controls (385,248.19) and (396.19,237) .. (410,237) .. controls (423.81,237) and (435,248.19) .. (435,262) .. controls (435,275.81) and (423.81,287) .. (410,287) .. controls (396.19,287) and (385,275.81) .. (385,262) -- cycle ;
%Shape: Circle [id:dp7802472557212401] 
\draw   (487,156) .. controls (487,142.19) and (498.19,131) .. (512,131) .. controls (525.81,131) and (537,142.19) .. (537,156) .. controls (537,169.81) and (525.81,181) .. (512,181) .. controls (498.19,181) and (487,169.81) .. (487,156) -- cycle ;
%Shape: Circle [id:dp15955977766117824] 
\draw   (601,260) .. controls (601,246.19) and (612.19,235) .. (626,235) .. controls (639.81,235) and (651,246.19) .. (651,260) .. controls (651,273.81) and (639.81,285) .. (626,285) .. controls (612.19,285) and (601,273.81) .. (601,260) -- cycle ;
%Straight Lines [id:da17381160900349224] 
\draw    (423,240) -- (492,172) ;
%Straight Lines [id:da6398787523541117] 
\draw    (435,262) -- (601,260) ;
%Straight Lines [id:da8611738460767853] 
\draw    (533,170) -- (611,239) ;
%Rounded Rect [id:dp20839291490566692] 
\draw   (206,103.2) .. controls (206,95.91) and (211.91,90) .. (219.2,90) -- (429.8,90) .. controls (437.09,90) and (443,95.91) .. (443,103.2) -- (443,142.8) .. controls (443,150.09) and (437.09,156) .. (429.8,156) -- (219.2,156) .. controls (211.91,156) and (206,150.09) .. (206,142.8) -- cycle ;

% Text Node
\draw (400,250) node [anchor=north west][inner sep=0.75pt]  [font=\LARGE] [align=left] {$\displaystyle A$};
% Text Node
\draw (500,144) node [anchor=north west][inner sep=0.75pt]  [font=\LARGE] [align=left] {$\displaystyle B$};
% Text Node
\draw (615,248) node [anchor=north west][inner sep=0.75pt]  [font=\LARGE] [align=left] {$\displaystyle C$};
% Text Node
\draw (409.88,218.38) node [anchor=north west][inner sep=0.75pt]  [rotate=-317] [align=left] {$\displaystyle \ell '( a,b) \in L_{i}$};
% Text Node
\draw (466.96,266.16) node [anchor=north west][inner sep=0.75pt]  [rotate=-359.77] [align=left] {$\displaystyle \ell '( c,a) \in L_{k}$};
% Text Node
\draw (557.12,163.19) node [anchor=north west][inner sep=0.75pt]  [rotate=-41.27] [align=left] {$\displaystyle \ell '( b,c) \in L_{j}$};
% Text Node
\draw (213,102) node [anchor=north west][inner sep=0.75pt]  [font=\footnotesize] [align=left] {$\displaystyle x,\ y_{v} \ \leftarrow Random( F_{p})$\\$\displaystyle L_{i} \ =\left\{i\cdot \lceil pn^{-\rho } \rceil +j\mid 0\ \leq j< \lceil pn^{-\rho } \rceil \ \right\}$\\$\displaystyle \ell '( u,v) =x\cdot \ell ( u,v) +y_{u} -y_{v}$};

\end{tikzpicture}

    \caption{Illustration of $G'_{i,j,k}$.}
    \label{fig:h_G_ijk}
\end{figure}

    We are going to solve the \ZeroTriangle problem as follows. 
    Let $G=(V,E,\ell)$ be the input to \ZeroTriangle, we create $G'_{i,j,k}$ as described above, and for every $i,j,k$ such that $G'_{i,j,k}$ may contain a zero triangle, we call $Alg$ to list at most $T$ distinct $C_{\le 4}$ for every edge. Then we go over each reported cycle and check if it is a $0$ weight triangle, if so we return it, otherwise we proceed. If we did not find a $0$ weight triangle in any $G'_{i,j,k}$ we return that there is no $0$ weight triangle in $G$.

    Since we assume $Alg$ takes $O(n^{3-2\rho-\eps})$ time, we get that the running time of the algorithm for \ZeroTriangle is $O(n^{2\rho}\cdot n^{3-2\rho-\eps})=O(n^{3-\eps})$.
    From \Cref{L-Sidon-Exact} it follows that an $O(n^{3-\eps})$ time algorithm for $\ZeroTriangle$ whose weights are a Sidon set refutes the \ThreeSum hypothesis.

    Therefore, to complete the proof, it remains to show that the algorithm described above returns a $0$ weight triangle if and only if there is a $0$ weight triangle in $G$. 

    The if direction follows easily since if the algorithm returns a $0$ weight triangle $(a,b,c)$, then we have that $\ell'(a,b,c, a)=0$, and as described above $\ell(a,b,c, a)=0$, as required.

    The more interesting part is to show that if there exists a triangle $(a,b,c)\in G$ such that $\wt(a,b,c,a)=0$, then w.h.p the algorithm returns a $0$ weight triangle.
    Consider the graph $G'_{i,j,k}$, such that $(a,b)\in L_i, (b,c)\in L_j$ and $(c,a)\in L_k$. We show that, with constant probability, the edge $(a,b)$ participates in $<T$ distinct $C_{\le 4}$ cycles in $G'_{i,j,k}$, for some constant $T > 0$, excluding cycles that are $0$ weight triangles. Therefore $Alg$, which reports at least $T$ distinct $C_{\le 4}$ for the edge $(a,b)$, must report a $0$ weight triangle that contains $(a,b)$. (To obtain a high probability guarantee, we repeat the algorithm $O(\log{n})$ times.)
    
    Therefore, we get that to complete the proof it suffices to show that for the edge $(a,b)$ the number of distinct $C_{\le 4}$ that are not a $0$ weight triangle is at most $T$.
    We divide this proof into two claims, first we show that the expected number of $C_4$ for each edge is $O(1)$ (\Cref{C-no-4-cycles}), then we show that the expected number of non-zero weight triangles for each edge is $O(1)$ (\Cref{C-no-nonzero-triangles}). 
    Using these two claims, we get that the expected number of distinct $C_{\le 4}$ that are not a $0$ weight triangle is at most $O(1)+O(1)=T'$, for some constant $T'$. Let the value of $T$ be at least $10T'$, then we get from Markov's inequality that the probability that the number of distinct $C_{\le 4}$ that are not a $0$ weight triangle is more than $T$ is at most $1/10$, as required.
    
    \begin{cclaim}\label{C-no-4-cycles}
        Let $(a,b)\in E'_{i,j,k}$ such that for some $c$, $(a,b,c)$ is a zero triangle in $E'_{i,j,k}$. $\EE(|\{C_4 \mid (a,b)\in C_4\}|)=O(n^{2-3\rho})$
    \end{cclaim}
    \begin{proof}
        Let $c'\in C, b'\in B$ be vertices such that $(a,b,c',b')$ is a $C_4$ in $G$.
        (The other case of cycles of the form $(a,b,a',b')$ follows using similar methods).
        From the definition of $G'_{i,j,k}$, it follows that 
        $(a,b,c',b')\in G'_{i,j,k}$ iff $\wt'(b,c'),\wt'(b',c')\in L_j$ and $\wt'(a,b')\in L_i$. Recall that $(a,b,c)$ is a zero triangle in $G'_{i,j,k}$ so that $\wt'(a,b)\in L_i, \wt'(b,c)\in L_j$. From the definition of $L_i,L_j$ we have that:
        \begin{equation*}
            \begin{cases}
                \wt'(b,c)-\wt'(b,c')\in [\pm pn^{-\rho}] \\
                \wt'(b,c)-\wt'(b',c') \in [\pm pn^{-\rho}] \\
                \wt'(a,b)-\wt'(a,b') \in [\pm pn^{-\rho}] \\
            \end{cases}
        \end{equation*}
        From the definition of $\wt'$, we get:
        \begin{equation}\label{eq4:1}
            \begin{cases}
                x(\wt(b,c)-\wt(b,c'))-y_c+y_c' \in [\pm pn^{-\rho}] \\
                x(\wt(b,c')-\wt(b',c'))+y_b-y_b' \in [\pm pn^{-\rho}]\\
                x(\wt(a,b)-\wt(a,b'))-y_{b}+y_{b'} \in [\pm pn^{-\rho}]\\
            \end{cases}
        \end{equation}
        Each of the three values in \Cref{eq4:1} is uniformly random, and therefore the probability that each happens is $1/n^{\rho}=n^{-\rho}$; we will show that they are independent and therefore the probability of the three occurring together is $n^{-3\rho}$.
      The first is independent of the other two since it is the only one that is dependent on $y_c'$.     
        It remains to show that $x(\wt(b,c')-\wt(b',c'))+y_b-y_b'$ and $x(\wt(a,b)-\wt(a,b'))-y_{b}+y_{b'}$ are independent.
        
        The sum of the two values is $x(\wt(b,c')+\wt(a,b)-\wt(b',c')-\wt(a,b'))$, and since the weights are a Sidon-set we have that  $\wt(b,c')+\wt(a,b)\neq \wt(b',c')+\wt(a,b')$, and therefore $\wt(b,c')+\wt(a,b)-\wt(b',c')-\wt(a,b')\neq 0$. Since $x$ is uniformly random and $\wt(b,c')+\wt(a,b)-\wt(b',c')-\wt(a,b')\neq 0$ we have that $x(\wt(b,c')+\wt(a,b)-\wt(b',c')-\wt(a,b'))$ is uniformly random.
        In addition, the second term in \Cref{eq4:1}, namely $x(\wt(b,c') - \wt(b',c')) + y_b - y_b'$, is independent of the summation term, as it involves $y_b'$, which does not appear in the sum.
        
        Thus, we get that the three variables are independent, each holds with probability $n^{-\rho}$ and overall $Pr[(a,b,c',b')\in G'_{i,j,k}]=n^{-3\rho}$. 
        From the linearity of expectation, we get that:
        $$\EE[|\{C_4 \mid (a,b)\in C_4\}|]\le \sum_{\pair{c',b'}\in \pair{C,B}}Pr[(a,b,c',b')\in G'_{i,j,k}]\le n^2\cdot n^{-3\rho}=n^{2-3\rho},$$
        as required. 
    \end{proof}

    \begin{cclaim}[Claim 3.6 \cite{williams2020monochromatic}] \label{C-no-nonzero-triangles}
        Let $(a,b)\in E_{i,j,k}$ such that for some $c$, $(a,b,c)$ is a zero triangle in $E_{i,j,k}$. $\EE[|\{C_3 \mid (a,b)\in C_3 \text{ and } \wt(C_3)\neq0\}|]=O(n^{1-2\rho})$
    \end{cclaim}
    \begin{proof}
        Let $c'\in C$ be a vertex such that $(a,b,c')$ is a triangle in $G$ and $\ell(a,b,c',a)\neq 0$.
        From the definition of $G'_{i,j,k}$, it follows that 
        $(a,b,c',a)\in G'_{i,j,k}$ iff $\wt'(b,c')\in L_j$ and $\wt'(c',a)\in L_k$. From the definition of $L_j,L_k$ we have that: 
        \begin{equation*}
            \begin{cases}
                \wt'(a,c)-\wt'(a,c')\in [\pm pn^{-\rho}] \\ 
                \wt'(b,c)-\wt'(b,c')\in [\pm pn^{-\rho}]
            \end{cases}
        \end{equation*}
        From the definition of $\wt'$, we get:
        \begin{equation*}
            \begin{cases}
                x(\wt(c,a)-\wt(a,c'))+y_c-y_c'\in [\pm pn^{-\rho}] \\ 
                x(\wt(b,c)-\wt(b,c'))+y_c'-y_c\in [\pm pn^{-\rho}]
            \end{cases}
        \end{equation*}
        Each of the two values is uniform, and therefore the probability that each happens is $1/n^{\rho}=n^{-\rho}$; we will show that they are independent and therefore the probability of both of them to occur together is $n^{-2\rho}$.
        
        The sum of the two values is $x(\wt(a,c) + \wt(b,c)-\wt(b,c')-\wt(a,c'))$. Since $\ell(a,b,c,a)=0$ and $\ell(a,b,c',a)\neq0$, we have that $\wt(a,c) + \wt(b,c)\neq \wt(b,c')+\wt(a,c')$, and therefore $\wt(a,c) + \wt(b,c)-\wt(b,c')-\wt(a,c')\neq 0$. Since it is uniformly random, we get that
        $x(\wt(a,c) + \wt(b,c)-\wt(b,c')-\wt(a,c'))$ is uniform random. The value $x(\wt(a,c)-\wt(a,c'))+y_c-y_c'$ is independent of the sum since it has an independent variable $y_c'$. Therefore, we get that the two variables are independent, each holds with probability $n^{-\rho}$ overall $Pr[(a,b,c',a)\in G'_{i,j,k}]=1/n^{2\rho}$. From the linearity of expectation, we get that 
        $$\EE[|\{{}_a\triangle_b \mid \wt(_a\triangle_b){\neq0}\}|]=\sum_{c'\in C}Pr[(a,b,c',a)\in G'_{i,j,k}]=O(n^{1-2\rho}),$$
        as required.
    \end{proof}
\end{proof}

To complete the proof of \Cref{T-AE-approximation-LB-1} we show that  $(4,\rho)$-\AllEdgeCycle is equivalent to $(4,\rho, O(1))$-\AllEdgeCycleListing in the following lemma.
(Note that we don't need the lemma in full generality as we only need that for every edge either a triangle is found, or $O(1)$ $4$-cycles are listed, but we provide the lemma for completeness.)

\begin{lemma}\label{L-AllEdgeCycleListing-1=O(1)}
    If there is an $\Ot(n^{1+\alpha})$ time algorithm for $(4, \rho)$-\AllEdgeCycle then there is also an $\Ot(n^{1+\alpha})$ time algorithm for $(4, \rho, O(1))$-\AllEdgeCycleListing that succeeds whp.
\end{lemma}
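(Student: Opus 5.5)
We use the standard ``sample-and-detect'' technique for turning a detection algorithm into a listing algorithm for a constant number of witnesses. Let $T=O(1)$ be the listing threshold and $\mathcal{A}$ the $\Ot(n^{1+\alpha})$ time algorithm for $(4,\rho)$-\AllEdgeCycle. The listing algorithm runs $O(\log n)$ rounds, with $O(1)$ trials per round, of the following step. Sample a random subgraph $H$ of $G$ by keeping every vertex (or every edge) independently with some constant probability. Run $\mathcal{A}$ on $H$. For each edge $e$, add the returned $C_{\le 4}$ (if any) to a per-edge set $L_e$. At the end, for each edge report $L_e$, truncated to $T$ cycles. Deleting vertices or edges never increases the maximum degree, so every $H$ is still a valid input with maximum degree $O(n^{1-\rho})$. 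The total running time is therefore $O(T^{O(1)}\log n)\cdot \Ot(n^{1+\alpha})=\Ot(n^{1+\alpha})$.

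For correctness, fix an edge $e$. Let $\mathcal{C}_e$ be the set of $C_{\le 4}$ through $e$, and let $S\subseteq \mathcal{C}_e$ be any set of at most $T$ cycles that we wish to find. Each such cycle has at most $4$ vertices, so each uses at most two vertices outside $e$. The key step is to isolate a cycle. For a fixed $C\in S$, we want, with constant probability $q=q(T)>0$, that $C$ is the \emph{only} $C_{\le 4}$ through $e$ in $H$. Then $\mathcal{A}$ must return exactly $C$ for $e$.

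This isolation is the main obstacle, because $|\mathcal{C}_e|$ may be large (up to degree squared), so a constant sampling rate cannot kill all other cycles. To handle this, I would use a geometric range of rates instead of a constant one. In round $j$, keep each vertex other than the endpoints of $e$ with probability $2^{-j}$, for $j=0,\dots,O(\log n)$. At the right scale, the expected number of surviving cycles through $e$ is $\Theta(1)$.

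The success of a scale-based argument is only ``some cycle is returned,'' not ``a new cycle is returned.'' I would therefore argue inductively. Suppose we have already collected a set $L_e$ with $|L_e|<\min(T,|\mathcal{C}_e|)$. We need that with probability $\Omega(1)$ at the appropriate scale, $\mathcal{A}$ returns a cycle outside $L_e$. To ensure this, additionally delete from $H$ one random vertex outside $e$ of each cycle in $L_e$, chosen uniformly among that cycle's at most two non-$e$ vertices. This costs only a $2^{-T}$ factor in probability. A remaining cycle $C'\notin L_e$ survives this deletion with probability at least $2^{-T}$ times its survival probability at the chosen scale. The remaining calculation, that at the scale where the expected number of surviving cycles outside $L_e$ is $\Theta(1)$, at least one survives with constant probability, uses a second-moment or Janson-type bound. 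This is where I expect the work to be. If that bound is problematic, I would first try the simpler fix of hashing all vertices into $O(1)$ color classes per trial and restricting to colorful subgraphs.

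Repeating $O(T\log n)$ times and taking a union bound over all edges gives the whp guarantee that every edge collects $\min(T,|\mathcal{C}_e|)$ distinct cycles.
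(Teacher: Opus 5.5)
There is a genuine gap, and it comes from treating isolation as the main obstacle. You never need the target cycle to be the \emph{only} $C_{\le 4}$ through $e$ in $H$. Since $\mathcal{A}$ must return some $C_{\le 4}$ through $e$ whenever one is present, it suffices that every cycle already in $L_e$ (fewer than $T$ of them) is destroyed while at least one cycle outside $L_e$ survives. How many other cycles pass through $e$ is irrelevant, so you need no geometric scales and no second-moment or Janson bound.

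With constant-rate \emph{edge} sampling this takes two lines. Fix any $C\notin L_e$ through $e$. Each $D\in L_e$ has an edge $f_D\notin E(C)$, because two distinct simple cycles cannot have nested edge sets. So with probability at least $2^{-(T+4)}$, all edges of $C$ are kept and all $f_D$ are deleted. In that case $\mathcal{A}$ is forced to return a new cycle for $e$. Repeating $O(2^T\log n)$ times per new cycle, $T$ times, and taking a union bound over edges gives the lemma. This is the paper's approach. It keeps each edge with probability $1/2$ for $T\cdot 2^T c\log n$ rounds and destroys the old triangles and $4$-cycles through designated edges. Its small case analysis is helped by also recording, for each found $4$-cycle, the other cycles through $e$ on the same vertex set.

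You mention edge sampling in passing, but the argument you actually carry out is vertex-based, and it fails. Your claim that a cycle $C'\notin L_e$ survives the deletion with probability at least $2^{-T}$ is false whenever the vertex set of some $D\in L_e$ is contained in $V(C')$. Concretely, let the only cycles through $e=(u,v)$ be the triangle $(u,v,c)$ and the $4$-cycle $(u,v,c,y)$. This is realizable tripartitely with $u\in A$, $v,y\in B$, $c\in C$. Under any vertex sampling, whenever the $4$-cycle survives so does the triangle. So a valid $\mathcal{A}$ may return the triangle every time, and your deletion step (which must remove $c$) kills both cycles. For any $T\ge 2$, $L_e$ therefore never reaches $\min(T,2)=2$.

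Moreover, the per-edge operations you describe cannot be carried out. These are exempting the endpoints of $e$ from sampling and deleting vertices of the cycles in $L_e$. Each call of $\mathcal{A}$ runs on a single $H$ that must serve all edges simultaneously, so these can only be events in the analysis. At keep-rate $2^{-j}$, the survival of the two endpoints of $e$ alone already has probability $2^{-2j}$. That is not a constant at the large scales your argument relies on.
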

\begin{proof}
    We assume that there is an $\Ot(n^{1+\alpha})$ time algorithm ($Alg$) for $(4, \rho)$-\AllEdgeCycle and show an $\Ot(n^{1+\alpha})$ time (randomized) algorithm for $(4, \rho, O(1))$-\AllEdgeCycleListing.
    
    Let $G=(V, E)$ be a graph with maximum degree $n^{1-\rho}$, and a constant $T>0$ be an input for $(4, \rho, O(1))$-\AllEdgeCycleListing. 
    Our reduction algorithm works as follows: 
    For every $(u,v)\in E$, let $S_{(u,v)}$ be the set of $4$ cycles found for the edge $(u,v)$. The algorithm repeats the following step $T\cdot 2^Tc\log {n}$, for some constant $c$,  times. Let $G'$ be a subgraph of $G$ where every edge remains with probability $1/2$, i.e., $G'=(V,\Sample(E,1/2))$. We use algorithm $Alg$ to find a single $C_{\le 4}$ cycle $C$ for each edge $(u,v)$ and add this cycle to $S_{(u,v)}$. In addition, if $C$ is a $4$-cycle $u-v-x-y-u$, we also check if $u-v-y-x-u$, $u-v-y-u$, and $u-v-x-u$ are cycles and if so, also add them to $S_{(u,v)}$.

    At the end of the algorithm, for each edge $(u,v)$, the algorithm returns the set $S_{(u,v)}$ if $|S_{(u,v)}| < T$, or the first $T$ elements of $S_{(u,v)}$ otherwise.

    Since the running time of $Alg$ is $O(n^{1+\alpha})$, the running time for our reduction algorithm is $\Ot(T\cdot 2^T c\log{n} \cdot n^{1+\alpha})=\Ot(n^{1+\alpha})$, as required.
    
    Therefore, to complete the proof, it remains to prove the correctness of the algorithm.     
    Let $(u,v)\in E$, let $\Delta$ be the number of $C_{\le 4}$ containing $(u,v)$, and assume that the algorithm is in a stage where $|S_{(u,v)}|<T,\Delta$, that is, at least $\min(T,\Delta)-|S_{(u,v)}|$ cycles still need to be added to $S_{(u,v)}$.

    Next, we show a lower bound for the probability that the following three conditions hold together:
    \begin{enumerate}
        \item $(u,v)\in G'$. \label{en:pr:1}
        \item For every $C\in S_{(u,v)}$, it holds that $C\not\subseteq G'$. \label{en:pr:2}
        \item There exists $C_{\le 4}$ that contains $(u,v)$ in $G'$. \label{en:pr:3}
    \end{enumerate}
    It is straightforward to see that if these three conditions hold together, then a call to $Alg$ finds a new cycle. Therefore, after $T$ such successes, we find $\min(T,\Delta)$ cycles containing $(u,v)$, as required.
    
    Since $G'=(V,\Sample(E,1/2))$, we get that $Pr[(u,v)\in G']=1/2$.

Consider $S_{(u,v)}$. It consists of triangles, $A=\{(u-v-c_1-u),\ldots,(u-v-c_i-u),\ldots\}$ and of $4$-cycles $B=\{(u-v-x_1-y_1-u),\ldots,(u-v-x_i-y_i-u),\ldots\}$.
Suppose that there is some $C_{\le 4}$ containing $(u,v)$, $C$ which is not in $S_{(u,v)}$.

(Case 1:) $C$ is a triangle $(u-v-c)$. We know that $c\neq c_i$ for any $c_i$ such that $(u-v-c_i-u)\in A$ and that $c\neq x_i, c\neq y_i$ for any of the $(u-v-x_i-y_i-u)\in B$.
   This is because when we added $(u-v-x_i-y_i-u)$ to $S_{(u,v)}$, we added any other cycle containing $(u,v)$ that uses a subset of the vertex set of $(u-v-x_i-y_i-u)$.

   Because of this, the probability that given that $(u,v)$ is in $G'$, no cycles from $S_{(u,v)}$ are in $G'$ but $C$ is in $G'$ is at least the probability that
   \begin{enumerate}
\item For all $i$, $(u,c_i)\notin G'$ (destroying the cycles in $A$),
\item and for all $j$, $(x_j,y_j)\notin G'$ (destroying the cycles in $B$), and
\item $(u,c)\in G'$ and $(c,v)\in G'$.
   \end{enumerate}

All these events are independent\footnote{Except for the pairs of $4$-cycles in $B$ of the form $(u-v-x_i-y_i-u),(u-v-y_i-x_i-u)$ which are correlated but their correlation is actually in our favor since they both get destroyed if edge $(x_i,y_i)$ is removed.}. 

Hence the probability that $C$ is kept and all the cycles in $S_{(u,v)}$ are destroyed in $G'$ is at least $(1/2)\cdot (1/2)\cdot (1/2)^T=\Omega(1/2^T)$.

(Case 2:) $C$ is a $4$-cycle $u-v-x-y-u$. Here, the edge $(x,y)$ is distinct from all edges $(x_i,y_i)$ for $4$-cycles in $B$ by construction. 
If neither $x$ nor $y$ is one of the triangle nodes $c_i$ of triangles in $A$, then an analogous argument to the triangle case shows that with $\Omega(1/2^T)$ probability $C$ appears in $G'$ while the other cycles of $S_{(u,v)}$ are destroyed.

The hardest case is when $x$ and $y$ are both midpoints of some triangles in $A$. Wlog, assume that the $4$-cycle is $u-v-c_2-c_1-u$.
Here, the probability that given that $(u,v)$ is in $G'$, no cycles from $S_{(u,v)}$ are in $G'$ but $C$ is in $G'$ is at least the probability that
   \begin{enumerate}
\item For all $i\geq 2$, $(u,c_i)\notin G'$ (destroying the cycles in $A$ except $(u-v-c_1)$),
\item $(v,c_1)\notin G'$ (destroying $(u-v-c_1)$),
\item and for all $j$, $(x_j,y_j)\notin G'$ (destroying the cycles in $B$), and
\item $(u,c_1)\in G'$ and $(c_1,c_2)\in G'$ and $(v,c_2)\in G'$.
   \end{enumerate}

These events are again independent since we are talking about different edges and hence with $\Omega(1/2^T)$ probability $C$ appears in $G'$ while the other cycles of $S_{(u,v)}$ are destroyed.

    Therefore, by repeating $2^T c\log{n}$ times, we get that with high probability, a new cycle is found for every $(u,v)$. Since we repeat this step $T\cdot 2^T c\log{n}$, we get that with high probability at least $T$ cycles are found per edge (if they exist), as required.
\end{proof}

Combining \Cref{T-AllEdgeCycleListing-Hard} and \Cref{L-AllEdgeCycleListing-1=O(1)}, we obtain the following hardness result for all edge shortest cycle approximation.
\begin{corollary}\label{C-AE-approximation-LB-1}
    Let $\rho \ge 2/3$. Under the \ThreeSum hypothesis, there is no $O(n^{3-2\rho-\eps})$, for $\eps > 0$, time algorithm for $(4,\rho)$-\AllEdgeCycle.
\end{corollary}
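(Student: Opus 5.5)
The plan is to combine \Cref{T-AllEdgeCycleListing-Hard} with \Cref{L-AllEdgeCycleListing-1=O(1)} by contraposition.

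Suppose, for contradiction, that for some $\rho\ge 2/3$ and $\eps>0$ there is an algorithm $Alg$ for $(4,\rho)$-\AllEdgeCycle running in $O(n^{3-2\rho-\eps})$ time. Write this bound as $n^{1+\alpha}$ with $\alpha = 2-2\rho-\eps$. Since $\rho\le 1$ (the maximum degree is $O(n^{1-\rho})\ge 1$), we have $1+\alpha\ge 1-\eps$, and we may assume $1+\alpha\ge 1$, since any algorithm must at least read its input.

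\Cref{L-AllEdgeCycleListing-1=O(1)} then turns $Alg$ into a randomized algorithm for $(4,\rho,O(1))$-\AllEdgeCycleListing. It succeeds w.h.p. and runs in $\Ot(n^{1+\alpha}) = \Ot(n^{3-2\rho-\eps})$ time. The key structural point is that the reduction in that lemma only calls $Alg$ on subgraphs $G'=(V,\Sample(E,1/2))$ of the input. These subgraphs keep the vertex set $V$ and the tripartition, and their maximum degree is still $O(n^{1-\rho})$. So every call to $Alg$ is on a legal $(4,\rho)$-\AllEdgeCycle instance with the same $n$ and $\rho$. The constant $T$ in the listing problem only affects the number of repetitions, which is $T\cdot 2^T c\log n$ and hence $O(\log n)$ for constant $T$.

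Next, absorb the polylogarithmic factor by choosing $\eps'=\eps/2$. This gives an $O(n^{3-2\rho-\eps'})$ time algorithm for $(4,\rho,O(1))$-\AllEdgeCycleListing, which contradicts \Cref{T-AllEdgeCycleListing-Hard} under the \ThreeSum hypothesis.

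One subtlety remains. \Cref{T-AllEdgeCycleListing-Hard} is stated for deterministic algorithms, while the listing algorithm we produce is randomized and correct only w.h.p. The fix is to check that the reduction inside \Cref{T-AllEdgeCycleListing-Hard} is already randomized and tolerates a w.h.p. subroutine. Its own proof uses random $x$, $y_v$ and $O(\log n)$ repetitions, and it concludes via \Cref{L-Sidon-Exact} and the \ThreeSum hypothesis, both of which are standardly taken against randomized algorithms. A union bound over the $O(n^{2\rho}\log n)$ calls, each failing with probability $n^{-c}$ for $c$ large enough, preserves the overall w.h.p. guarantee.

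This bookkeeping about randomization, together with confirming that the sampled subgraphs stay within the same parameter class, is the only step I expect to need care. The rest is direct composition of the two earlier results.
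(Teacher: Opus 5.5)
Your proposal is correct and follows the paper's approach: the paper obtains the corollary by composing \Cref{T-AllEdgeCycleListing-Hard} with \Cref{L-AllEdgeCycleListing-1=O(1)}, exactly as you do. Your extra bookkeeping is sound and makes explicit what the paper leaves implicit: the sampled subgraphs remain valid instances with the same $n$ and $\rho$, the $O(\log n)$ overhead is absorbed by taking $\eps/2$, and the reduction already tolerates randomized w.h.p.\ subroutines.
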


\Cref{T-AE-approximation-LB-1} follows from \Cref{C-AE-approximation-LB-1} with $\rho=2/3$.

\section{$\Initialize$ and $\ClusterOrCycle$ of \cite{kadria2023improved}}\label{AP-OldToold}
For completeness, in this section, we describe the initialization algorithm of \cite{kadria2023improved} and the $\ClusterOrCycle$ procedure. 
\subsection{$\Initialize$}\label{AP-In}
The initialization algorithm, $\Initialize(G,\alpha)$, see Algorithm~\ref{A-Initialize}, receives the input graph $G=(V,E,\ell)$ and the parameter~$\alpha\ge 1$. 
The algorithm starts by sampling the vertex hierarchy $V=A_0\supseteq A_1 \supseteq A_2 \supseteq \dots \supseteq A_{\alpha+1} = \emptyset$.

Next, it initializes two empty \emph{hash tables} $d$ and $\pi$ used to store the distances and shortest paths already computed by the algorithm. When the algorithm discovers a distance $\delta(u,v)$ between two vertices $u,v\in V$, it inserts the pair $(u,v)$ into the hash table~$d$ with key
$\delta(u,v)$. For brevity, we write this as $d(u,v)\gets \delta(u,v)$. When we want to check whether $\delta(u,v)$ was already computed, we search $(u,v)$ in the hash table~$d$. If $(u,v)$ is found we retrieve $\delta(u,v)$. For brevity, we interpret $d(u,v)$ as a search for $(u,v)$ in the hash table~$d$. The search returns $\delta(u,v)$ if $(u,v)$ is in the table, or $+\infty$, if $(u,v)$ is not in the table, i.e., $\delta(u,v)$ is not yet known to the algorithm. (We assume that $d(u,v)$ searches both $(u,v)$ and $(v,u)$, or more efficiently, that all pairs $(u,v)$ stored in the table satisfy $u<v$.)

The hash table $\pi$ is similarly used to represent the shortest paths already found by the algorithm. If $d(u,v)<\infty$, then $\pi(u,v)$ is the last edge on a shortest path from~$u$ to~$v$. Thus, if $d(u,v)<\infty$ and $\pi(u,v)=(w,v)$ then $d(u,v)=d(u,w)+\ell(w,v)$.

We assume that each operation on the hash tables $d$ and $\pi$ takes constant expected time, as this can be achieved using standard hashing techniques. 

\begin{algorithm2e}[t] 
    \DontPrintSemicolon
 	\caption{$\Initialize(G = (V,E,\ell), \alpha)$} \label{A-Initialize}\label{A-Clusters}
 	\BlankLine
    $A_0\gets V$ ; $A_{\alpha+1} \gets \emptyset$; $A_1\gets \Sample(A_0,n^{-1/\alpha})$ \;
    \BlankLine
    \For {$i\gets 2$ {\bf to} $\alpha$} { \label{L-initialize-Ai}
    	$A_i\gets \Sample(A_{i-1}$,$n^{-2/\alpha}$) \;
    }
    \BlankLine
    $d\gets\HashTable()$\qquad \tcp{\rm Used to store computed
    distances.} 
    $\pi\gets\HashTable()$\qquad \tcp{\rm Used to store computed
    shortest paths.} 
    \BlankLine
    \For{$i\gets 1$ {\bf to} $\alpha$}
    {
        $\Dijkstra(G,A_i)\quad$ \tcp{\rm Finds $\delta(u,A_i)$ and $p_i(u)$ for every $u\in V$.}
        \For{$u\in V$} {
        $d(p_i(u),u)\gets \delta(u,A_i)$ \;
        }
    }
    \BlankLine
    $\Preprocess(G)$ \;
\end{algorithm2e}

The algorithm then computes the distances $\delta(u,A_i)$, for every $u\in V$ and $0\le i<k$. This is easily done by adding an auxiliary vertex $s_i$, connecting it with $0$ length edges to all vertices of~$A_i$ and then running Dijkstra from~$s_i$, as done in \cite{DBLP:journals/jacm/ThorupZ05}. This also computes $p_i(u)=\arg\min_{v\in A_i} \delta(u,v)$ for every $u\in V$ and $0\le i<k$ and a corresponding shortest path from~$u$ to~$p_i(u)$. 

Finally, \Initialize\ calls \Preprocess\ that performs preprocessing operations on the adjacency lists of all vertices. This processing includes sorting each adjacency list in non-decreasing order of edge length, and for every $0\le i<k$ building a binary tree on the edges of the vertex, as explained in Section~\ref{sub-next}. The total cost of all these preprocessing operations is $O(m\log n)$. 

\begin{lemma}\label{L-Init}
$\Initialize(G,\alpha)$ takes $O((m+kn)\log n)$ time.
\end{lemma}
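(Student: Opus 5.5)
The plan is to charge the running time of $\Initialize(G,\alpha)$ to its three steps and bound each one separately: sampling the hierarchy, the $\alpha$ Dijkstra computations, and the call to $\Preprocess$.

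\textbf{Sampling and hash tables.} Building $A_0=V$ and $A_{\alpha+1}=\emptyset$, and computing each $A_i=\Sample(A_{i-1},\cdot)$ for $1\le i\le\alpha$, costs $O(|A_{i-1}|)\le O(n)$ per level. Over all levels this is $O(\alpha n)=O(kn)$. Creating the two empty hash tables $d$ and $\pi$ takes $O(1)$ time.

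\textbf{The Dijkstra computations.} For each $1\le i\le\alpha$, I would use the standard auxiliary-source construction already described in the text. Add a vertex $s_i$, join it by $0$-length edges to every vertex of $A_i$, and run Dijkstra from $s_i$ with a binary heap. The augmented graph has $n+1$ vertices and at most $m+n$ edges, so one run costs $O((m+n)\log n)$. The run returns $\delta(u,A_i)$ for every $u$, and $p_i(u)$ is obtained as the child of $s_i$ through which $u$ is reached in the shortest path tree. The ties must be broken lexicographically, which can be built into the comparison key at no extra asymptotic cost. The inner loop then performs $n$ insertions of the form $d(p_i(u),u)\gets\delta(u,A_i)$, each in $O(1)$ expected time, for $O(n)$ per level.

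\textbf{Preprocess, and the main obstacle.} Summing over the $\alpha$ levels gives $O(\alpha(m+n)\log n)$. Adding the $\Preprocess$ cost, which the text states is $O(m\log n)$, yields a total of $O(\alpha(m+n)\log n)$. The main obstacle is that the factor $\alpha$ multiplies $m\log n$, while the statement claims only $O((m+kn)\log n)$. I would handle this in one of two ways. The first is to read $k$ as treated as a constant, in which case $\alpha m\log n=O(m\log n)$ and the bound follows directly. The second, and better, option is to avoid repeating Dijkstra on the full edge set: run a single multi-label Dijkstra-like pass, or note that $O(\alpha m\log n)$ is anyway absorbed into the $O((m+kn^{1+2/k})\log n+\alpha m)$ bound of \Cref{L-alg-runtime}. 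If a genuine $O((m+kn)\log n)$ bound is required, I would also check whether \Preprocess\ builds a binary tree per level per vertex; that would cost $O(\sum_u \deg(u)\cdot k)$ unless the trees share the sorted list, so it should be argued as one sort of each adjacency list ($O(m\log n)$) plus $O(\deg(u))$-size trees indexed by level only lazily.
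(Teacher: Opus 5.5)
Your accounting of the sampling step and of \Preprocess{} is fine, but the Dijkstra step has a genuine gap. With a binary heap, each of the $\alpha$ runs costs $O((m+n)\log n)$. That leaves an $O(\alpha m\log n)$ term, and none of your proposed escapes removes it.
\begin{itemize}
\item Treating $k$ as a constant proves only the constant-$k$ case. The explicit $kn$ in the statement shows that $k$ is meant to vary.
\item Appealing to \Cref{L-alg-runtime} is circular, because that lemma takes its initialization cost from this one. It is also false: its extra term is $\alpha m$ with no logarithm. So when $m\gg kn^{1+2/k}$ and $\alpha$ grows, $\alpha m\log n$ is not $O((m+kn^{1+2/k})\log n+\alpha m)$.
\item The ``single multi-label pass'' is never specified, and it would still have to maintain $\alpha$ labels per vertex.
\end{itemize}

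The missing ingredient is the one the paper uses. Run each Dijkstra from the auxiliary source $s_i$ with Fibonacci heaps, in $O(m+n\log n)$ time. Then the $\alpha$ runs cost $O(\alpha m+\alpha n\log n)$, and the logarithm multiplies only the $n$ term. The remaining $\alpha m$ is $O(m\log n)$ once $k=O(\log n)$. The paper's one-line proof uses this tacitly, and it is harmless: for $k\ge\log n$ we have $n^{2/k}=O(1)$, so a larger $k$ buys no running time and only worsens the stretch $2k/3$. The same observation settles your worry about \Preprocess{}. Building one search tree per level over each sorted adjacency list costs $O(km)=O(m\log n)$ on top of the $O(m\log n)$ sorting, so no lazy construction is needed.
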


\begin{proof}
The $k$ calls to Dijkstra's algorithm take $O(k(m+n\log n))$ time. Preprocessing the adjacency lists takes $O(m\log n)$. 
\end{proof}
Throughout the paper, a graph $G$ is said to be \emph{initialized} if the procedure $\Initialize(G, k)$ has already been called on it.

\subsection{Algorithm \texorpdfstring{$\ClusterOrCycle$}{ClusterOrCycle}} \label{sec:tools}

In this section, we describe an algorithm $\ClusterOrCycle$ (Algorithm~\ref{A-ClusterOrCycle}) that assumes that the graph has been initialized, and receives as input a vertex $u\in V$. The algorithm either returns the cluster $\Cl(u)$ or a cycle, as stated in the following lemma:

\Reminder{L-ClusterOrCycle}
% \begin{lemma}\label{L-ClusterOrCycle} Let $G=(V,E,\ell)$ be an initialized weighted undirected graph and let $u\in V$. If $\Cl(u)$ is a tree, then $\ClusterOrCycle(u)$ finds $\Cl(u)$, the distance $\delta(u,v)$ for each $v\in \Cl(u)$, and a tree of shortest paths from~$u$ to all vertices of $\Cl(u)$. Otherwise, if $r>0$ is the smallest number such that $\Cl(u)\cap G_r(u)$ contains a cycle, then $\ClusterOrCycle(u)$ returns a description of a cycle in $\Cl(u)\cap G_r(u)$ whose length is at most~$2r$. Furthermore, it returns $\Cl(u)\cap G_{<r}(u)$ and a tree containing shortest paths from $u$ to all vertices of $\Cl(u)\cap G_{<r}(u)$. $\ClusterOrCycle(u)$ can be implemented in $O(|\Cl_V(u)|\log n)$ time.
% \end{lemma}

$\ClusterOrCycle(u)$ goes through the appropriate steps to construct the cluster $\Cl(u)$, but stops early whenever a cycle in $\Cl(u)$ is encountered. This ensures that 
\\$\ClusterOrCycle(u)$ can be implemented in time proportional to the number of vertices in $\Cl(u)$, and not to the number of edges in $\Cl(u)$, which would have been too expensive. It uses a modification of Spira's \cite{Spira73} single-source shortest paths algorithm. 

Spira's algorithm assumes that the edges incident on each vertex are sorted in non-decreasing order of length. It may be viewed as a lazy version of Dijkstra's \cite{Di59} algorithm. In certain cases, it may find distances to all vertices without examining all edges. (This is possible as the adjacency lists of all vertices are assumed to be sorted by length. A recent application of Spira's algorithm can be found in \cite{WiZw15}.) 

When Dijkstra's algorithm discovers the distance from the source~$u$ to a new vertex~$v$, it immediately relaxes all the outgoing edges $(v,w)$ of~$v$. Spira's algorithm only relaxes the first outgoing edge of~$v$. The heap~$Q$ used by Spira's algorithm contains edges rather than vertices. 
Relaxing an edge $(v,w)$ amounts to inserting it into~$Q$ with key $d(u,v)+\ell(v,w)$. 
The algorithm also maintains a set~$U$ of vertices whose distance from the source~$u$ has already been found. Initially $U=\{u\}$. In each iteration, Spira's algorithm extracts an edge $(v,w)$ of minimum key from the heap~$Q$. If $w\notin U$, it adds $w$ to~$U$ and sets $d(u,w)\gets d(u,v)+\ell(v,w)$ which is guaranteed to be the distance from $u$ to~$w$. It now relaxes the first edge of~$w$ and the next edge of~$v$, i.e., the edge following $(v,w)$ in the sorted adjacency list of~$v$, if there is such an edge. If $w\in U$, the algorithm simply relaxes the next edge of~$v$. When $U=V$, the algorithm stops, even if there are still edges left in the heap~$Q$ and even if some edges were not examined yet.

The correctness of Spira's algorithm follows easily from the correctness of Dijkstra's algorithm, or can be proved directly using the same ideas used to prove the correctness of Dijkstra's algorithm. 

Algorithm $\ClusterOrCycle(u)$, shown as Algorithm~\ref{A-ClusterOrCycle}, uses the following modification of Spira's algorithm. It starts constructing $\Cl(u)$. The set $\Cl(u)$ denotes the set of vertices of the cluster discovered so far. Initially $\Cl(u)=\{u\}$. When the first edge $(v,w)$ for which $w\in \Cl(u)$ is extracted from the heap~$Q$, the algorithm stops as a cycle in $\Cl(u)$ is discovered, and the algorithm returns the discovered cycle. 

A non-trivial complication arises from the fact that we want $\ClusterOrCycle(u)$ to only examine edges that belong to $\Cl(u)$. Furthermore, for a correct implementation of Spira's algorithm, we need to examine these edges in non-decreasing order of length.

For the high-level description of Algorithm $\ClusterOrCycle(u)$, we assume that we have a function $\Next(u,v)$ that given a vertex~$v$ already known to be in $\Cl(u)$ gives us the next incident edge~$(v,w)$ of~$v$ that leads to a vertex~$w$ also in~$\Cl(u)$, in non-decreasing order of length. If there is no such next edge, then $\Next(u,v)$ returns $\Null$. The implementation of $\Next(u,v)$ is described in Section~\ref{sub-next}. It is shown there that it can be implemented in $O(\log n)$ time.

$\ClusterOrCycle(u)$ uses $\Next(u,v)$ via a function $\RelaxNext(u,v)$, see Algorithm~\ref{A-RelaxNext}, that uses $\Next(u,v)$ to extract the next eligible edge~$e$, if there is any, and relax it, i.e., add $e$ to the heap~$Q$ with key $d(v)+\ell(e)$. 

% \begin{minipage}{5in}
\begin{algorithm2e}[t]
    \DontPrintSemicolon
    \caption{$\ClusterOrCycle(u)$}\label{A-ClusterOrCycle}
    % $i \gets a(u)$ \;
    $d(u,u)\gets 0$ \; $\pi(u,u)\gets \Null$ \\
    $\cl(u) \gets \{u\}$ \\
    $Q\gets \Heap()$ \\
    $\RelaxNext(u,u)$ \\
    \BlankLine
    \While{$Q\ne\emptyset$}
    {
        \BlankLine
        $(v,w)\gets Q.\ExtractMin()$ \;
        \BlankLine
        \If{$w\in \cl(u)$}
        {\Return $\langle \, (u,v,w)\,,\,d(u,v)+\ell(v,w)+d(u,w)\,\rangle$ \;}
        \BlankLine
        $d(u,w) \gets d(u,v)+\ell(v,w)$ \; $\pi(u,w)\gets (v,w)$ \; $\cl(u)\gets \cl(u)\cup\{w\}$ \;
        \BlankLine
        $\RelaxNext(u,v)$ \; 
        $\RelaxNext(u,w)$ \;
    }
    \Return $\cl(u)$ 
\end{algorithm2e}
% \end{minipage}

\begin{algorithm2e}[t]
    \DontPrintSemicolon
    \caption{$\RelaxNext(u,v)$}\label{A-RelaxNext}
    $e\gets \Next(u,v)$ \;
    \If{$e\ne \Null$}
    {
        $Q.\Insert(e,d(u,v)+\ell(e))$ \;
    }
\end{algorithm2e}

We end this section with a proof of Lemma~\ref{L-ClusterOrCycle}.

\begin{proof}[Proof of Lemma~\ref{L-ClusterOrCycle}]
$\ClusterOrCycle(u)$ starts running Spira's algorithm on the implicitly represented cluster graph $\Cl(u)$. The algorithm extracts the edges $(v,w)$ of $\Cl(u)$ from the heap~$Q$ in non-decreasing order of their key $d(u,v)+\ell(v,w)$. When the first edge $(v,w)$ reaching a vertex~$w$ of~$\Cl(u)$ is extracted from~$Q$, then $\delta(u,w)=\delta(u,v)+\ell(v,w)$. The distance $d(u,w)$ is set accordingly, and $w$ is added to~$\Cl(u)$, the set of vertices of the cluster discovered so far. If a second edge $(v',w)$ reaches the same vertex~$w$ is extracted from~$Q$, then a cycle is detected and returned.
If $\Cl(u)$ does not contain a cycle, then from the correctness of Spira's algorithm, the algorithm $\ClusterOrCycle$ returns $\Cl(u)$ as required.

Let $r>0$ be the smallest number, as in the statement of the lemma, such that $\Cl(u)\cap G_r(u)$ contains a cycle. As $\Cl(u)\cap G_{<r}(u)$ does not contain a cycle, $\ClusterOrCycle(u)$ finds distances and shortest paths to all vertices of $\Cl(u)\cap G_{<r}(u)$ before a second edge reaching a vertex is found. The algorithm then starts finding vertices of distance exactly~$r$ from~$u$. As $\Cl(u)\cap G_r(u)$ contains a cycle, at some stage a second edge reaching a vertex in $\Cl(u)\cap G_r(u)$ must be found, and Spira's algorithm is aborted. This edge clearly closes a cycle of length at most $2r$, which is returned by the algorithm, as required.

Spira's algorithm spends $O(\log n)$ time on each edge $(v,w)$ it considers. This includes the $O(\log n)$ time taken by $\Next(u,v)$ to return the edge, the $O(\log n)$ (or $O(1)$) time needed to insert the edge to the heap~$Q$, and the $O(\log n)$ time needed for extracting it from the heap. The size of the heap is always at most the number of vertices in $\Cl(u)$, i.e., the vertices of the cluster discovered so far. As long as no cycles are found, the number of edges examined by Spira's algorithm is at most $2|\Cl(u)|-1$: the number of edges extracted from~$Q$ is $|\Cl(u)|-1$ and the number of edges in~$Q$ is at most $|\Cl(u)|$. When a cycle is found, the total number of edges examined is at most $2|\Cl(u)|$. The total running time is therefore $O(|\Cl_V(u)|\log n)$, as claimed.
\end{proof}

\subsubsection{Examining cluster edges in non-decreasing order of length}\label{sub-next}
Recall that if $u\in A_i\setminus A_{i+1}$ then $\Cl(u)=(\Cl_V(u),\Cl_E(u))$, where
\begin{align*}
    \Cl_V(u) &\EQ \{ v\in V \mid \delta(u,v)<\delta(v,A_{i+1}) \} \;, \\
    \Cl_E(u) &\EQ \{ (v,w)\in E \mid \delta(u,v)+\ell(v,w)<\delta(w,A_{i+1}) \} \;.
\end{align*}

Algorithm $\Preprocess$, called by \Initialize, defines $k$ \emph{shifted lengths} as follows, $\ell_i(v,w) = \ell(v,w)-\delta(w,A_{i+1})$ for each edge  $(v,w)\in E$, for every $i\in [0,\alpha]$. Now, if $u\in A_i\setminus A_{i+1}$ and $v\in \Cl_V(u)$ then $(v,w)\in\Cl_E(u)$ if and only if $\ell_i(v,w)<-\delta(u,v)$. We want to iterate over the edges of~$v$ that satisfy this condition in increasing order of their original length.

We can solve our problem using ideas borrowed from the \emph{priority search tree} of McCreight \cite{McCreight85}. 
We sort the $n$ points according to their $x$-coordinate and put them at the leaves of a binary search tree. (For simplicity, we may assume that~$n$ is a power of~2.) Each node of the tree contains the minimum $y$-coordinate among all the items in its subtree. All these values can be easily computed in $O(n)$ time by letting the value of each vertex be the minimum of the values of its two children.

Given an upper bound $y_0$ we can now easily find the item $(x_j,y_j)$ with the minimum $x$-coordinate that satisfies $y_j<y_0$. First, we check if the minimum $y$-value of the root is less than $y_0$. If not, then there is no point in satisfying the condition. Then, starting at the root, we repeatedly go to the left child if its minimum $y$ value is less than~$y_0$, and to the right child otherwise. The first item can thus be found in $O(\log n)$ time. Similarly, given an item, we can easily find the next item in $O(\log n)$ time. Thus, the first~$k$ items in non-decreasing order of their $x$-coordinates can be found in $O(k\log n)$ time.

\end{document}